\documentclass[12pt, english]{article}

\usepackage{comment}

\usepackage[T1]{fontenc}
\usepackage[latin9]{inputenc}
\usepackage{geometry}
\geometry{verbose,tmargin=1.25in,bmargin=1.25in,lmargin=1.25in,rmargin=1.25in}
\synctex=-1

\setlength{\marginparwidth }{2cm}
\usepackage[colorinlistoftodos,prependcaption,textsize=tiny,draft,shadow,textwidth=2.5cm]{todonotes}

\usepackage[cal=boondoxo,frak=euler ,scr=euler,bb= pazo]{mathalfa}

\usepackage{tikz}
\usetikzlibrary{shapes.geometric, arrows, positioning, fit, backgrounds, patterns}

\usepackage{esstixcal}

\usepackage{mathtools}
\usepackage{nicefrac,xfrac}
\makeatletter

\usepackage{relsize}
\usepackage{commath}
\usepackage{upgreek}
\usepackage{amsmath}
\usepackage{amsthm}

\usepackage{amssymb}
\usepackage{setspace}
\usepackage{enumitem}
\usepackage{nicefrac}
\usepackage{bibentry}
\usepackage[authoryear]{natbib}
\PassOptionsToPackage{normalem}{ulem}
\usepackage{ulem}

\usepackage{enumitem}
\makeatletter

\usepackage{bold-extra}

\usepackage{lipsum}                     
\usepackage{xargs}    
\usepackage{mathtools}
\usepackage{hyperref}
\usepackage{cancel}
\usepackage[normalem]{ulem}
 \hypersetup{
breaklinks=true,
      colorlinks   = true,
      citecolor    = black,
 urlcolor=black,
 linkcolor=black,
 }

\usepackage[multiple]{footmisc}

\onehalfspacing
\makeatother

\usepackage{babel}

\usepackage{fourier}
\usepackage{microtype}

\DeclareSymbolFont{CMsymbols}{OMS}{cmsy}{m}{n}
\SetSymbolFont{CMsymbols}{bold}{OMS}{cmsy}{b}{n}
\DeclareMathSymbol{\sim}{\mathrel}{CMsymbols}{"18}

\DeclareSymbolFont{CMsymbols}{OMS}{cmsy}{m}{n}
\SetSymbolFont{CMsymbols}{bold}{OMS}{cmsy}{b}{n}
\DeclareMathSymbol{\succ}{\mathrel}{CMsymbols}{"1F}

\usepackage{bold-extra}

\newtheoremstyle{faber}%
{8pt}{8pt}%
{\itshape}{}%
{\scshape}{.}%
{ .5em}%
{\thmname{#1}\thmnumber{ #2}\thmnote{{\normalfont\ (#3)}}}

\newtheoremstyle{faber_ex}%
{8pt}{8pt}%
{\normalfont}{}%
{\scshape}{.}%
{ .5em}%
{\thmname{#1}\thmnumber{ #2}\thmnote{{\normalfont\ (#3)}}}

\theoremstyle{faber}
\newtheorem{theorem}{Theorem}

\newtheorem{proposition}{Proposition}

\newtheorem{lemma}{Lemma}

\newtheorem*{axiom}{Axiom}

\newtheorem{corollary}{Corollary}

\theoremstyle{faber_ex}
\newtheorem{remark}{Remark}
\newtheorem{example}{Example}

\definecolor{airforceblue}{rgb}{0.36, 0.54, 0.66}
\definecolor{amber}{rgb}{1.0, 0.49, 0.0}
\definecolor{amethyst}{rgb}{0.6, 0.4, 0.8}

\definecolor{ao}{rgb}{0.0, 0.0, 1.0}
\definecolor{amaranth}{rgb}{0.9, 0.17, 0.31}
\definecolor{antiquefuchsia}{rgb}{0.57, 0.36, 0.51}

\newcommand{\faberd}[1]{}

\newcommand{\virgo}[1]{``#1''}

\newcommand{\R}{\mathbb{R}}
\newcommand{\B}{B_0(\Sigma)}
\newcommand{\BK}{B_0(\Sigma,K)}

\newcommand{\F}{\mathcal{F}}



\title{Absolute and Relative Ambiguity Attitudes\thanks{We are grateful to Xiaoyu Cheng, Federico Echenique, Mira Frick, Bruno de Albuquerque Furtado, Itzhak Gilboa, Paolo Ghirardato, Simon Grant, Faruk Gul, Brian Hill, Ryota Iijima, Peter Klibanoff, Massimo Marinacci, Stefania Minardi, Ivan Moscati, Sujoy Mukerji, Pietro Ortoleva, Daniele Pennesi, and Luciano Pomatto for their insightful comments. We thank the audience at DTEA 2024. We are particularly indebted to Simone Cerreia-Vioglio and Fabio Maccheroni for many discussions and suggestions.}} 

\author{\href{https://www.francesco-fabbri.com}{Francesco Fabbri}\thanks{Princeton University, Department of Economics, \url{ffabbri@princeton.edu}.} $\quad$  \href{https://sites.google.com/view/giulioprincipi/home-page}{Giulio Principi}\thanks{New York University, Department of Economics, \url{gp2187@nyu.edu}.} $\quad$ \href{https://lorenzomstanca.com/}{Lorenzo Stanca}\thanks{\hbox{Collegio Carlo Alberto and University of Turin, ESOMAS, \url{lorenzo.stanca@carloalberto.org}.}}}

\date{\today}

\begin{document}

\maketitle

\begin{abstract}
\noindent
We represent preferences that exhibit absolute or relative attitudes towards ambiguity without assuming convexity of preferences. Our analysis is motivated by the recent experimental evidence by \cite{BaillonTestincCAAARA}
indicating that ambiguity becomes more tolerable as individuals' welfare improves. Decreasing absolute ambiguity aversion is characterized by constant superadditive certainty equivalents and admits an act-dependent variational representation (\citeauthor{Variational}, \citeyear{Variational}). Decreasing relative ambiguity aversion relates to positive superhomogeneity   and admits an act-dependent confidence preference representation (\citeauthor{chateauneuf2009ambiguity}, \citeyear{chateauneuf2009ambiguity}). We apply our characterizations to retrieve a classic risk sharing result on the efficiency of trade and subjective beliefs of the individuals (\citeauthor{rigotti2008subjective}, \citeyear{rigotti2008subjective}).

\end{abstract}

\vspace{1cm}

\textsc{Keywords}: ambiguity aversion, absolute attitudes, relative attitudes.\vspace{0.2cm}

\textsc{JEL codes}: D81.

\newpage

\section{Introduction}

The seminal work of \cite{ellsberg1961risk} originated a vast theoretical literature on ambiguity that successfully addressed several decision puzzles in the context of uncertainty. In parallel, a growing literature has employed these decision-making models to study a wide range of economic applications, e.g., risk sharing (\citeauthor{rigotti2008subjective}, \citeyear{rigotti2008subjective}; \citeauthor{ghirardato2018risk}, \citeyear{ghirardato2018risk}); moral hazard (\citeauthor{miao2016robust}, \citeyear{miao2016robust}); portfolio choice (\citeauthor{maccheroni2013alpha}, \citeyear{maccheroni2013alpha}). However, many popular ambiguity models fail to comply with the experimental evidence concerning how attitudes towards ambiguity change when individual welfare changes (\citeauthor{BaillonTestincCAAARA}, \citeyear{BaillonTestincCAAARA}). This paper attempts to fill this gap by providing general preference representations for absolute and relative ambiguity attitudes.

Ambiguity models often assume independence notions which limit their ability to capture the whole spectrum of ambiguity attitudes (see Figure \ref{fig:MBA_changing}). For instance, the class of invariant biseparable preferences (\citeauthor{ghirardato2004differentiating}, \citeyear{ghirardato2004differentiating}; red area in Figure \ref{fig:MBA_changing}), which includes Choquet expected utility (\citeauthor{schmeidler1989subjective}, \citeyear{schmeidler1989subjective}) and maxmin expected utility (\citeauthor{gilboa1989maxmin}, \citeyear{gilboa1989maxmin}), satisfies certainty independence that is equivalent to constant absolute and relative ambiguity aversion. Similarly, variational preferences (\citeauthor{Variational}, \citeyear{Variational}; green area) satisfy weak certainty independence implying \nameref{ax_CAAA}, while confidence preferences (\citeauthor{chateauneuf2009ambiguity}, \citeyear{chateauneuf2009ambiguity}; purple area) satisfy worst independence that implies \nameref{ax_CRAA}. 

We depart from standard models of ambiguity aversion along two dimensions. First, following \cite{BaillonTestincCAAARA}, we weaken the independence axioms previously mentioned to capture \nameref{ax_DAAA} and \nameref{ax_DRAA}\footnote{We borrow the axiom of \nameref{ax_DAAA} from \cite{Xuechanging}. Ghirardato and Siniscalchi proposed a similar axiom in their work \virgo{Compensanted Absolute Ambiguity Attitudes,} a version of which was presented at RUD and D-TEA conferences in 2015.} (gray area in Figure \ref{fig:MBA_changing}). According to the former, ambiguity becomes more tolerable when the individual is better off in absolute terms, while to the latter, ambiguity becomes more tolerable when the relative size of the ambiguity the individual faces increases.\footnote{As a byproduct, our analysis provides representation results for all the kinds of changing ambiguity attitudes, including increasing absolute and increasing relative ambiguity aversion.} Second, our analysis does not rely on any form of convexity of preferences. This allows us to avoid criticisms related to the Machina paradoxes (\citeauthor{machina2009risk},  \citeyear{machina2009risk}; \citeauthor{baillon2011ambiguity}, \citeyear{baillon2011ambiguity}).


\begin{figure}[ht!]
    \centering
    \includegraphics[scale=0.45]{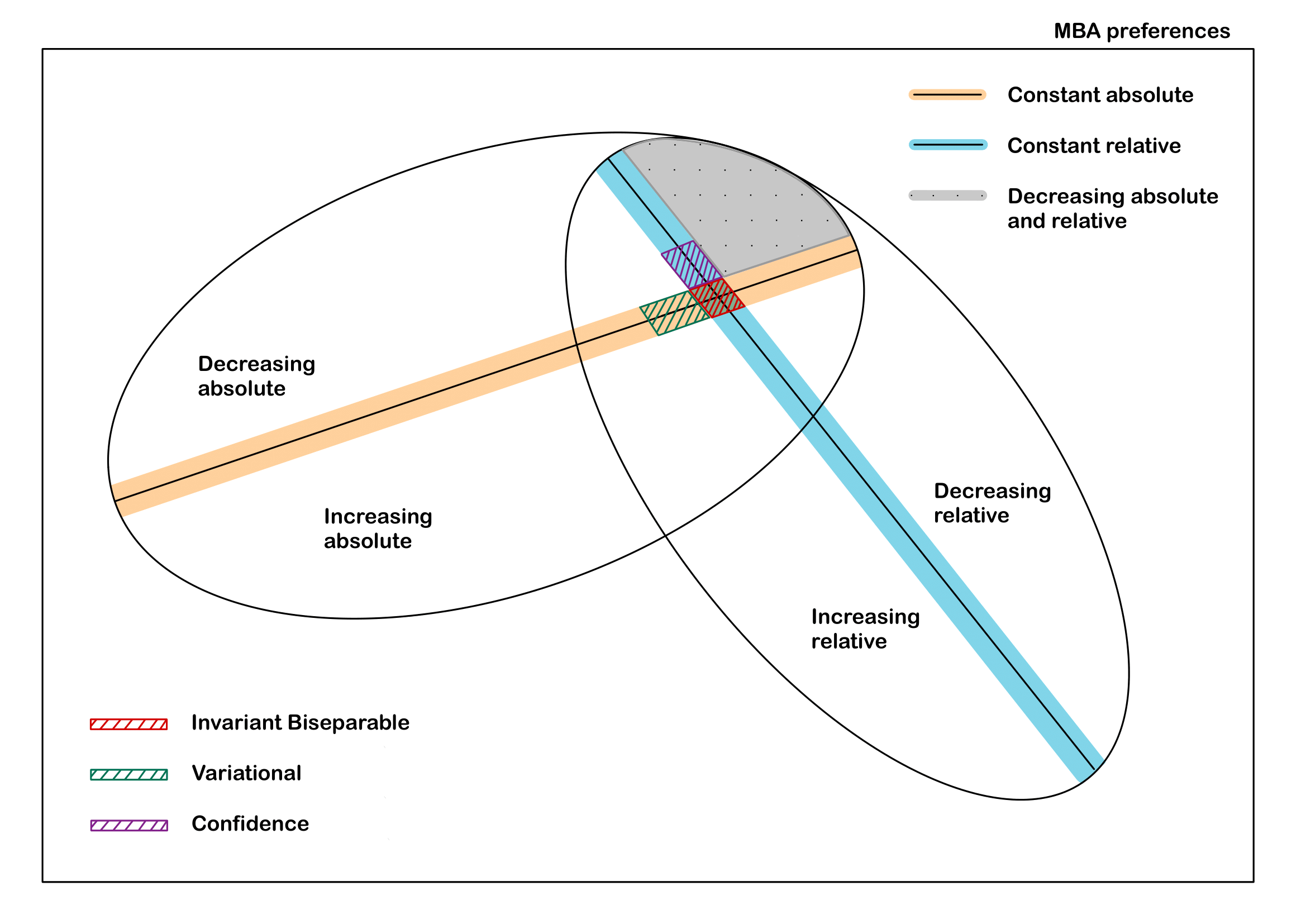}
    \caption{MBA preferences and changing ambiguity attitudes}
    \label{fig:MBA_changing}
\end{figure}


\paragraph{MBA preferences.} Within the standard \citeauthor{AA63} (\citeyear{AA63}) setup, we assume MBA preferences as primitives (\citeauthor{OmnibusSimone}, \citeyear{OmnibusSimone}). On top of completeness and transitivity, these preferences satisfy minimal requirements of rationality such as monotonicity, risk independence, and continuity conditions.\footnote{The acronym \textit{MBA} refers to Monotone, Bernoulli, and Archimedean.}  Notably, we do not employ any unboundedness condition for our results.


As a baseline for our analysis, we characterize MBA preferences (Lemma \ref{Lemma_1}) in terms of certainty equivalent functionals representable as maxima of quasiconcave functions.\footnote{This result generalizes Theorem 4 in \cite{Dualself2022}.} Denote by $\mathcal{F}$ the set of all acts $f: S \to X$ that map the set of states of the world $S$ to set of consequences $X$. Lemma \ref{Lemma_1} shows that the certainty equivalent $I$ associated with act $f \in \mathcal{F}$ can be written as \begin{equation} \label{representation_MBA_intro}
I(u(f)) = \max\limits_{G \in \mathcal{G}} \inf\limits_{p \in \bigtriangleup(S)} G\left(\int u(f) \textnormal{d} p , p \right)
\end{equation} where $u$ is an affine utility function, $\mathcal{G}$ is a set of quasiconvex functions and monotone in the first argument, and $p \in \triangle(S)$ is a probability measure over the state space $S$.

We highlight two feasible interpretations for the representation \eqref{representation_MBA_intro}. The first interpretation, advanced by \cite{Dualself2022} and \cite{xia2020decision}, rationalizes the decision problem as an intrapersonal game between two conflicting \virgo{selves}: Optimism, playing the best possible aggregator in $\mathcal{G}$, and Pessimism, selecting the worst possible belief in $\bigtriangleup(S)$. Alternatively, following \cite{StarShaped_Castagnoli}, a second interpretation relates to the axiomatic treatment of risk mitigation by \cite{dreze1990essays}. In this setting, decision makers are confident that choosing an action induces a probability measure over the state space without error. However, if we remove such confidence, we can relate the decision maker's action as inducing an aggregator in $\mathcal{G}$ and concerns for misspecification to the infimum among all possible probabilistic models in $\bigtriangleup(S)$.



\paragraph{Decreasing absolute ambiguity aversion.} We enrich our baseline (Lemma \ref{Lemma_1}) by considering preferences satisfying \nameref{ax_DAAA}. This axiom states that, if a mixture involving an act $f \in \mathcal{F}$ and a constant act $x \in X$, written $\alpha f + (1-\alpha) x$ for $\alpha \in (0,1)$, is preferred to $\alpha z + (1-\alpha)x$, where $z \in X$, then this preference is preserved if we replace $x$ in both acts with a better constant act $y \in X$.\footnote{As customary, we identify constant acts with their consequence in $X$.}

Theorem \ref{thm:UAPstyleDAAA}, which constitutes our main representation of \nameref{ax_DAAA}, formalizes the link between this property and constant superadditivity. Relating to the representation \eqref{representation_MBA_intro}, Theorem \ref{thm:UAPstyleDAAA} further imposes that each aggregator $ G \in \mathcal{G}$  is constant superadditive in the first argument, that is, $$G\left( \int u(f) + u(x) \textnormal{d} p , p\right) \geq G\left(\int u(f) \textnormal{d} p , p\right) +u(x)$$ for every $f \in \mathcal{F}$, $x \in X$ with $u(x)\geq 0$, and $p \in \bigtriangleup(S)$. This result conforms to our intuition. Under \nameref{ax_DAAA}, $u(f) + u(x)$ is evaluated with a lower degree of ambiguity aversion than $u(f)$. Therefore, combining an uncertain prospect $f$ with a positive, certain one, $x$, yields a higher utility than when the two prospects are considered separately, resulting in constant superadditivity.

Since \nameref{ax_DAAA} weakens weak certainty independence (see section \ref{discussion_independence}), in Proposition \ref{dependent_variational_main} we provide a second characterization of this property which follows an act-dependent variational representation (\citeauthor{Variational}, \citeyear{Variational}). In particular, each act $f \in \mathcal{F}$ is evaluated as 
$$I(u(f))= \max\limits_{c\in C_{f}}\min\limits_{p\in \bigtriangleup(S)}\left\lbrace \int u(f)\textnormal{d}p+c(p)\right\rbrace$$ where the set $C_f$ depends on the act $f$ and collects lower semicontinuous and convex ambiguity costs, $c: \bigtriangleup(S) \to (-\infty, \infty]$. Decreasing absolute ambiguity aversion is captured by the fact that the set of the ambiguity costs enlarges as the decision maker evaluates more favorable acts, i.e., $u(g) \leq u(f)$ implies $C_g \subseteq C_{f}$, for $f, g \in \mathcal{F}$. In a dual-self perspective, as the prospect considered improves, the optimistic self is better off as she can choose an ambiguity cost from a larger set; in the risk mitigation interpretation, the
cost related to misspecification concerns is attenuated.

We introduce two models to illustrate our representations. Example \ref{example1DAAA} adds risk mitigation concerns to second-order expected utility (\citeauthor{grant2009second}, \citeyear{grant2009second}; \citeauthor{neilson2010simplified}, \citeyear{neilson2010simplified}) and shows it satisfies the representation of Theorem \ref{thm:UAPstyleDAAA}; Example \ref{multiplier_benchmark} relates to the representation of Proposition \ref{dependent_variational_main} by introducing an act-dependent version of multiplier preferences (\citeauthor{hansen2001robust}, \citeyear{hansen2001robust}) where the decision maker has access to an outside option. 
\paragraph{Decreasing relative ambiguity aversion.} This notion captures the idea that ambiguity aversion decreases as the proportion of the certainty part of the act decreases. To define it, we assume the existence of a worst consequence $x_* \in X$, i.e., $u(x_*) \leq u(y)$ for all $y \in X$, which we normalize to $u(x_*)=0$. This axiom states that if, for all acts $f \in \mathcal{F},$ $y \in X$, and mixing weight $\alpha \in (0,1)$,  $\alpha f + (1-\alpha) x_*$ is preferred to $\alpha y + (1-\alpha)x_*$, then this preference is preserved if we replace $\alpha$ in both acts with a larger mixing weight $\beta \geq \alpha$.

Our axiom of \nameref{ax_DRAA} is novel; it modifies the one proposed by \cite{Xuechanging}  by restricting to mixtures involving the worst consequence only. To check the soundness of this axiom, we show that, in the context of the smooth ambiguity model (\citeauthor{klibanoff2005smooth}, \citeyear{klibanoff2005smooth}), and assuming ambiguity aversion, it is equivalent to requiring that the function governing ambiguity attitudes satisfies decreasing relative risk aversion (DRRA), i.e., the Arrow-Pratt coefficient of relative risk aversion is decreasing.\footnote{See Remark \ref{concave_DRRA_smooth} for further details. The analog result for \nameref{ax_DAAA} in the context of the smooth model is shown by \cite{Xuechanging} in Proposition 7.} 

Theorem \ref{representation_IRAA}, our main representation of \nameref{ax_DRAA}, connects this property to positive superhomogeneity. Relating to the representation \eqref{representation_MBA_intro}, Theorem \ref{representation_IRAA} imposes that each aggregator $ G \in \mathcal{G}$ is positively superhomogeneous in the first argument, namely, $$G\left( \int u(\alpha f+(1-\alpha)x_*) \textnormal{d} p , p\right) \leq \alpha \> G\left(\int u(f) \textnormal{d} p , p\right)$$ for every $\alpha \in (0, 1)$, $f \in \mathcal{F}$, and $p \in \bigtriangleup(S)$. When considering $\alpha f+(1-\alpha)x_*$, as $\alpha$ increases, the proportion of the certainty part of the act decreases. Therefore, under \nameref{ax_DRAA}, $\alpha f+(1-\alpha)x_*$ is evaluated with a higher degree of ambiguity aversion than $f$, resulting in positive superhomogeneity. 



Since \nameref{ax_DRAA} is a weakening of worst independence (see section \ref{discussion_independence}), Proposition \ref{dependent_confidence_main} provides an alternative characterization of this property in the form of an act-dependent confidence preference representation (\citeauthor{chateauneuf2009ambiguity}, \citeyear{chateauneuf2009ambiguity}). In particular, each act $f \in \mathcal{F}$ is evaluated as follows
$$I(u(f))=\max\limits_{d\in D_{f}}\min\limits_{p\in \bigtriangleup(S)}\frac{\int_{S}u(f)\textnormal{d}p}{d(p)}$$ where the set $D_f$ depends on the act $f$ and collects upper semicontinuous and quasiconcave confidence functions, $d: \bigtriangleup(S) \to [0, \infty]$. As for absolute attitudes, \nameref{ax_DRAA} is captured by the set of confidence functions enlarging as the decision maker evaluates more favorable acts, i.e., $u(g) \leq u(f)$ implies $D_g \subseteq D_{f}$, for $f, g \in \mathcal{F}$. This feature implies positive superhomogeneity conforming to Theorem \ref{representation_IRAA}.

Two models illustrate our representations of \nameref{ax_DRAA}. Analogously to Example \ref{example1DAAA}, Example \ref{example1DRAA} connects second-order expected utility with risk mitigation to Theorem \ref{representation_IRAA}; Example \ref{confidence_benchmark} relates to Proposition \ref{dependent_confidence_main} by introducing an act-dependent model of entropic-confidence preferences with outside option.







\paragraph{Risk sharing application.} We apply our characterizations of absolute and relative ambiguity attitudes to investigate whether, in financial markets, it is efficient for agents displaying general ambiguity preferences to engage in speculative betting. In an exchange economy with a single consumption good and no aggregate uncertainty, subjective expected utility agents that are risk averse introduce individual uncertainty into the final allocation--- meaning, they engage in betting--- if and only if their beliefs differ (\citeauthor{milgrom1982information}, \citeyear{milgrom1982information}). This result has been extended, first by \cite{billot2000sharing} in the context of maxmin expect utility, and later by \cite{rigotti2008subjective} for general convex preferences, by showing that agents will bet if and only if they do not share \textit{any} beliefs, i.e., their sets of subjective beliefs, properly defined, do not intersect. Finally, \cite{ghirardato2018risk} generalize this result to non-convex preferences, proposing a condition called \textit{strict pseudoconcavity at certainty} which, loosely speaking, requires the indifference curves at a consumption point to lie strictly above the tangent line. 

Proposition \ref{risksharingth} connects these risk sharing results to our analysis of changing ambiguity attitudes. In particular, it shows that by combining our characterizations of \nameref{ax_DAAA} and increasing relative ambiguity aversion, plus some additional regularity requirements, we obtain strict pseudoconcavity at certainty.\footnote{The representation of increasing relative ambiguity aversion follows that of Theorem \ref{representation_IRAA} for \nameref{ax_DRAA} by replacing positively superhomogeneity with positively subhomogeneity.} The inefficiency of betting is then implied by \cite{ghirardato2018risk}.


In contrast, in Example \ref{DRAA_betting}, we argue that \nameref{ax_DRAA} might prompt agents to bet even if they share the same beliefs. The key intuition is that Theorem \ref{representation_IRAA} links \nameref{ax_DRAA} to a representation that is ``convex at $0$,'' leading agents to prefer uncertainty over full insurance.

\paragraph{Related literature.} We contribute directly to the decision-theoretic literature investigating changing ambiguity attitudes resulting from utility shifts. \cite{grant2013mean} focus on the case of \nameref{ax_CAAA} and show that this property admits a mean dispersion representation. We instead consider non-constant absolute and relative attitudes. To this end, we borrow the axioms of \nameref{ax_DAAA} and \nameref{ax_DRAA}, the latter with some modifications, from the analysis of \cite{Xuechanging}.\footnote{See also \cite{chambers2014two} for the related notion of absolute uncertainty attitudes.} Like us, \cite{Xuechanging} studies changing ambiguity attitudes resulting from utility shifts, but focuses only on the case of ambiguity averse preferences. 

A different approach is pioneered by \cite{Cherbonnier2015DecreasingAU} and \cite{CMMRwealth}, which consider wealth effects, instead of utility shifts. In this framework, \cite{CMMRwealth} show that wealth-classifiable preferences, that is, ambiguity preferences that are either increasing, constant, or decreasing to wealth changes in absolute or relative terms, must necessarily display constant absolute risk aversion. Therefore, one can view studies on utility shifts as a way to overcome this restrictive assumption on risk preferences. Employing a different notion of comparative ambiguity attitudes, \cite{wang2019comparative} studies wealth effects without assumptions on risk preferences. 

Our work is motivated by the recent experimental evidence of \cite{BaillonTestincCAAARA}. At the aggregate level, their findings support decreasing absolute and decreasing relative ambiguity
aversion. Roughly 40$\%$ of the subjects of their experiments display \nameref{ax_CAAA} and  another 40$\%$ display \nameref{ax_DAAA}; almost half
of the subjects satisfy decreasing relative ambiguity aversion. Furthermore, their analysis suggests that \nameref{ax_CAAA} would not make accurate
predictions for most subjects unless we accept errors of up to 10$\%$. This evidence highlights the inadequacy of popular decision-making models
to capture experimentally corroborated ambiguity attitudes. See section \ref{discussion_independence} for a discussion. 


A growing literature studies the consequences of \nameref{ax_DAAA} in economic applications. Some of these applications interpret this property as a form of ambiguity prudence and use it to explain precautionary savings (\citeauthor{berger2014precautionary}, \citeyear{berger2014precautionary}; \citeauthor{osaki2014precautionary}, \citeyear{osaki2014precautionary}) and self-protective behavior (\citeauthor{berger2016impact}, \citeyear{berger2016impact}). Other applications study market selection. \cite{guerdjikova2015survival} show that decision makers displaying \nameref{ax_DAAA} survive in markets populated by expected utility agents. In a framework that allows for a broad class of recursive preferences,  \cite{beker2023if} show that under \nameref{ax_DAAA}, every full-support belief can survive if there is sufficiently high uncertainty. As for relative attitudes, in the context of uncertainty sharing economies, \cite{hara2024sharing} show that if the individual consumers display \nameref{ax_CRAA}, then the representative consumer exhibits \nameref{ax_DRAA}.


We borrow technical insights from the literature on risk measures. Some of our results are inspired by \cite{RuoduCashSubad} that provide representations as minima of constant subadditive and quasiconvex functions. Similar results for positively superhomogeneous functionals appear in: \cite{StarShaped_Castagnoli}, which characterize star-shaped monetary, i.e., constant additive, risk measures;\footnote{\cite{nonlinear_SimoneGiacomoBoberto} retrieve analogous results with additional subdifferential properties.} \cite{laeven2023dynamic}, which provide a representation of star-shaped functionals in terms of convex and lower semicontinuous functionals. Our results distinguish from these since, without assuming unboundedness, we have to employ extension techniques and envelope continuity results.


\section{Mathematical preliminaries}

We consider the \cite{AA63} setup composed of a nonempty set $S$ of \textit{states of the world}, endowed with an algebra $\Sigma$ of subsets of $S$ called \textit{events}, and a nonempty convex set $X$ of \textit{consequences}. Denote by $\bigtriangleup(S)$ the set of finitely additive probability measures over $S$.

The decision maker has preferences over the set $\mathcal{F}$ of all \textit{(simple) acts}, i.e., $\Sigma$-measurable functions $f:S\to X$ such that $f(S)$ is a finite set. For all $x\in X$, we identify $x\in \mathcal{F}$ as the constant act equal to $x$, and, as a result, $X$ as a subset of $\mathcal{F}$. For all $f,g\in\mathcal{F}$, and $\alpha\in [0,1]$, relying on the linear structure of $X$, we define convex combinations of acts as \[
(\alpha f+(1-\alpha)g):s\mapsto \alpha f(s)+(1-\alpha)g(s)\in X.
\]

We denote by $\succsim$ a binary relation over $\mathcal{F}$, and by $\succ$ and $\sim$ its asymmetric and symmetric parts, respectively. For all $f\in \mathcal{F}$, denote by $x_f\in X$ a \textit{certainty equivalent} of $f$, i.e., $x_f\sim f.$ A function $V: \mathcal{F} \to \R$ is a \textit{utility representation} for $\succsim$ if, for all $f, g \in \mathcal{F}$, $$f \succsim g \iff V(f) \geq V(g).$$  Whenever the decision maker's preferences over consequences admit a utility representation, comparisons between acts can be expressed in utility levels. To formalize this, we introduce $B_0(\Sigma, K)$, the set of $\Sigma$-measurable real-valued bounded simple functions whose images are included in $K$, interpreted as the set of utility levels, for some $K\subseteq \mathbb{R}$. To ease notation, let $B_0(\Sigma)=B_0(\Sigma,\mathbb{R})$. Endow these sets with the supnorm $\lVert\cdot\rVert_{\infty}$. 

As customary in decision theory under ambiguity, we study the properties of ``certainty equivalent functionals.'' To this end, we introduce some properties discussed in the upcoming sections. Fix a convex set $K\subseteq \R$. A functional $I:\BK\to \bar{\R}$ is \textit{normalized} if $I(k)=k$ for all $k\in K$; \textit{monotone} if $\varphi\geq \psi$ implies $I(\varphi)\geq I(\psi)$, for all $\varphi,\psi\in \BK$, and \textit{strictly} monotone if  in addition $\varphi\neq\psi$ implies $I(\varphi)>I(\psi)$; \textit{quasiconcave} if for all $\varphi,\psi\in \BK$, and $\alpha\in [0,1]$, we have
\[
I(\alpha \varphi+(1-\alpha)\psi)\geq \min\left\lbrace I(\varphi),I(\psi)\right\rbrace;
\]
\textit{quasiconvex} if $-I:B_0(\Sigma,K) \to \bar{\R}$ is quasiconcave. 

The notions of constant super and subadditivity are particularly relevant for our analysis of absolute ambiguity attitudes. Specifically, $I$ is \textit{constant superadditive} if, for all $\varphi\in \BK$, and $k\geq 0$, such that $\varphi+k\in \BK$, we have 
\begin{equation}\label{const_sup}
I(\varphi+k)\geq I(\varphi)+k.
\end{equation}

\noindent
Analogously, $I$ is \textit{constant subadditive} if equation \eqref{const_sup} holds with the reverted inequality; \textit{constant additive} if it is both constant superadditive and constant subadditive. Constant superadditivity and subadditivity are \textit{strict} when they hold with the strict inequality for every $\varphi\neq 0$ and $k>0$.

For relative ambiguity attitudes, we introduce the concepts of positive super and subhomogeneity.
In particular, $I$ is \textit{positively superhomogeneous} if
\begin{equation} \label{super_homo}
  I(\gamma \varphi)\leq \gamma I(\varphi)  
\end{equation}
for all $\gamma\in (0,1)$, and $\varphi\in \BK$ with $\gamma \varphi\in \BK$; \textit{positive subhomogeneity} is defined by reverting the inequality \eqref{super_homo}; \textit{positive homogeneity} holds if $I$ is both positively super and subhomogeneous. Positive superhomogeneity and subhomogeneity    are \textit{strict} when they hold with the strict inequality for every $\varphi\neq 0$.

Our representations satisfy additional regularity requirements. A family $\Upsilon$ of functions $H:B_0(\Sigma,K) \to \bar{\R}$ is \textit{regular} if: \textit{(i)} $\varphi \mapsto \max_{H\in \Upsilon}H(\varphi)$ is well-defined and continuous, \textit{(ii)} if $K$ is lower open,\footnote{A subset $K$ of $\R$ is \textit{lower open} (resp. \textit{upper open}) if, for all $k \in K$, there exists $\varepsilon>0$ such that $[k-\varepsilon,k]\subseteq K$ (resp. $[k,k+\varepsilon]\subseteq K$).} then each element of $\Upsilon$ is lower semicontinuous, and \textit{(iii)} if $K$ is either upper open or a bounded interval, then each element of $\Upsilon$ is upper semicontinuous. Moreover, a family $\mathcal{G}$ of functions $G:\R\times \bigtriangleup(S)\to \bar{\R}$ is \textit{linearly continuous} if 
\[
\varphi\mapsto \max\limits_{G\in \mathcal{G}}\inf\limits_{p\in \bigtriangleup(S)}G\left(\int \varphi \textnormal{d}p,p\right)
\]
is continuous.

\section{Monotone, Bernoulli, Archimedean preferences} \label{MBA_preferences_section}


We now introduce the axioms, maintained throughout our analysis, characterizing MBA preferences. Let a binary relation $\succsim$ on $\mathcal{F}$ represent the decision maker's preferences.

\begin{axiom}[weak order]\label{ax_weak_order}
$\succsim$ is complete and transitive.
\end{axiom}

\begin{axiom}[risk independence]\label{ax_risk_ind}
If $x,y,z\in X$, and $\alpha\in (0,1)$,
\[
x\sim y\implies \alpha x+(1-\alpha)z\sim \alpha y+(1-\alpha)z.
\]
\end{axiom}

\begin{axiom}[archimedean continuity]\label{ax_arch_cont}
If $f,g,h\in \mathcal{F}$ and $f\succ g\succ h$, then there exist $\alpha,\beta\in [0,1]$ such that
\[
\alpha f+(1-\alpha)h\succ g\succ \beta f+(1-\beta)h.
\]
\end{axiom}

\begin{axiom}[monotonicity]\label{ax_monot}
If $f,g\in \mathcal{F}$ and $f(s)\succsim g(s)$ for each $s\in S$, then $f\succsim g$.
\end{axiom}

A binary relation $\succsim$ on $\F$ is an \textit{MBA preference}, or simply \textit{MBA}, if it satisfies \nameref{ax_weak_order}, \nameref{ax_risk_ind}, \nameref{ax_arch_cont}, and \nameref{ax_monot}.



Lemma \ref{Lemma_1} below provides a representation of MBA preferences which constitutes the starting point of our analysis and connects existing results in the literature.
\cite{OmnibusSimone} show that a binary relation $\succsim$ is MBA if and only if there exist: \textit{(i)} an affine function $u:X\to \mathbb{R}$, and \textit{(ii)} a monotone, normalized, and continuous functional $I:B_0(\Sigma,u(X))\to \mathbb{R}$ such that $I\circ u$ represents $\succsim$. We sharpen their representation showing that the certainty equivalent $I$ can be taken as the maximum over a set of monotone and quasiconcave functions. A similar result appears also in \cite{Dualself2022} (Theorem 4), but limited to finite state spaces. We extend their result to arbitrary state spaces and establish additional regularity conditions.

\begin{lemma} \label{Lemma_1}
Let $\succsim$ be a binary relation over $\mathcal{F}$. The following are equivalent
\begin{enumerate}[label=(\roman*)]
    \item $\succsim$ is an MBA preference relation.
      \item There exist an affine function $u:X\to \mathbb{R}$, and a regular set $\Psi$ of monotone, quasiconcave functions $H:B_0(\Sigma,u(X))\to \R$ such that
    \begin{equation} \label{MBA_pref}
      f\succsim g\Longleftrightarrow \max\limits_{H\in \Psi}H(u(f))\geq \max\limits_{H\in \Psi}H(u(g))  
    \end{equation}
    for all $f,g\in \F$ and $\max_{H\in \Psi}H(u(x))=u(x)$ for all $x\in X$. 
\end{enumerate} 
\end{lemma}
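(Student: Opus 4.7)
My plan is to prove the two implications separately, treating $(\text{ii}) \Rightarrow (\text{i})$ as a routine axiom check and $(\text{i}) \Rightarrow (\text{ii})$ as a construction built on top of the Omnibus representation.

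For $(\text{ii}) \Rightarrow (\text{i})$, I would set $V(f) := \max_{H \in \Psi} H(u(f))$ and verify each MBA axiom in turn. Weak order is automatic from $V$ being real-valued. The normalization clause $\max_{H \in \Psi} H(u(x)) = u(x)$ makes $V$ coincide with $u$ on constant acts, so $u$ represents $\succsim$ on $X$; consequently, $f(s) \succsim g(s)$ for every $s$ yields $u(f) \geq u(g)$ pointwise, and monotonicity of each $H$ gives $V(f) \geq V(g)$. Risk independence follows from the affinity of $u$ combined with $V|_X = u$. Archimedean continuity follows from the continuity of $\max_{H \in \Psi} H$ required by regularity condition (i), composed with the supnorm-continuous mixing map $\alpha \mapsto \alpha u(f) + (1-\alpha) u(h)$.

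For $(\text{i}) \Rightarrow (\text{ii})$, I would first invoke the cited Cerreia-Vioglio et al.\ (Omnibus) representation to obtain an affine $u: X \to \mathbb{R}$ and a normalized, monotone, supnorm-continuous functional $I: B_0(\Sigma, u(X)) \to \mathbb{R}$ with $I \circ u$ representing $\succsim$. The task is then to rewrite $I$ as the pointwise maximum of a regular family of monotone quasiconcave functions. For each $\psi_0 \in B_0(\Sigma, u(X))$, I would define
\[
H_{\psi_0}(\psi) := \begin{cases} I(\psi_0) & \text{if } \psi \geq \psi_0, \\ \min\{\inf_s \psi_0(s),\, \inf_s \psi(s)\} & \text{otherwise,} \end{cases}
\]
and set $\Psi := \{H_{\psi_0} : \psi_0 \in B_0(\Sigma, u(X))\}$. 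The routine checks are: $H_{\psi_0}(\psi_0) = I(\psi_0)$; the bound $H_{\psi_0} \leq I$ uses monotonicity of $I$ in the first branch and the chain $\min\{\inf_s \psi_0(s), \inf_s \psi(s)\} \leq \inf_s \psi(s) = I(\inf_s \psi(s)) \leq I(\psi)$ (normalization plus monotonicity) in the second; monotonicity in $\psi$ is direct; and quasiconcavity follows by computing the upper contour sets, which equal $\{\psi \geq \psi_0\}$ at levels $t \in (\inf_s \psi_0(s), I(\psi_0)]$ and $\{\psi \geq t\}$ at levels $t \leq \inf_s \psi_0(s)$, both convex. Since $H_\psi(\psi) = I(\psi)$, the pointwise bound forces $\max_\Psi H = I$, and in particular $\max_\Psi H(u(x)) = u(x)$.

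It remains to verify regularity. Continuity of $\max_\Psi H = I$ is exactly the Omnibus conclusion. The upper contour sets above are closed, so each $H_{\psi_0}$ is upper semicontinuous, which handles the cases where $u(X)$ is upper open or a bounded interval. The main obstacle I anticipate is the lower-open case (notably $u(X) = \mathbb{R}$), where lower semicontinuity of each $H_{\psi_0}$ is additionally required and the step function above fails. A natural remedy exploits the lower openness of $u(X)$ to replace the closed condition $\psi \geq \psi_0$ with a family of strict openings $\psi > \psi_0 - \delta$ indexed by $\delta > 0$, together with the corresponding constant values $I(\psi_0 - \delta)$, inserting each such refinement into $\Psi$; continuity of $I$ then recovers the value $I(\psi_0)$ while monotonicity gives $H \leq I$, and attainment of the maximum is preserved by including appropriate boundary elements. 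Balancing this semicontinuity-versus-attainment tradeoff is precisely why the regularity definition splits according to the topology of $u(X)$, and this is the step where I expect the bulk of the technical work to lie.
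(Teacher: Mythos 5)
Your $(ii)\Rightarrow(i)$ direction is correct and matches what the paper dismisses as ``straightforward.'' Your $(i)\Rightarrow(ii)$ direction starts exactly as the paper's (Herstein--Milnor for $u$, then the Omnibus representation for a continuous, monotone, normalized $I$ with $I\circ u$ representing $\succsim$), but then substitutes an explicit two-valued step family for the paper's machinery. The paper instead feeds $I$ into a general envelope representation built from the auxiliary functionals $S_{\xi}(\varphi)=\sup\{I(\xi+k):\varphi\geq \xi+k,\ \xi+k\in B_0(\Sigma,u(X))\}$ --- essentially $I\bigl(\xi+\min_s(\varphi-\xi)(s)\bigr)$ --- and their semicontinuous envelopes. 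Your step functions $H_{\psi_0}$ are indeed monotone, quasiconcave, dominated by $I$, attain $I(\psi_0)$ at $\psi_0$, and are upper semicontinuous, so in the cases where regularity only demands upper semicontinuity ($u(X)$ upper open or a bounded interval) your construction is a correct and more elementary alternative.

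The gap is the lower-open case, exactly where you place it, and your sketched remedy does not close it. Replacing $\{\psi\geq\psi_0\}$ by strict openings $\{\psi>\psi_0-\delta\}$ with plateau value $I(\psi_0-\delta)$ does yield lower semicontinuous members, but then no single member of the family attains $I(\psi_0)$ at $\psi_0$: you only get $\sup_{\delta>0}I(\psi_0-\delta)=I(\psi_0)$, whereas the Lemma requires a genuine maximum. The ``appropriate boundary elements'' you would have to add back are the closed-condition step functions themselves, which are not lower semicontinuous; and taking their lower semicontinuous envelopes destroys the value, since every neighborhood of $\psi_0$ contains points $\psi\not\geq\psi_0$ at which $H_{\psi_0}$ drops to $\min\{\inf_s\psi_0(s),\inf_s\psi(s)\}$, so the envelope at $\psi_0$ collapses to $\inf_s\psi_0(s)$ rather than $I(\psi_0)$. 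In fact the constraint is structural: any lower semicontinuous, monotone $H\leq I$ with $H(\psi_0)=I(\psi_0)$ must satisfy $H(\psi_0-1/n)\to I(\psi_0)$, i.e.\ it must track $I$ along downward constant shifts; a two-valued plateau cannot do this, while the paper's $S_{\xi}$ does precisely this (using continuity of $I$ and lower openness of $u(X)$ to keep $\xi+k$ in the domain for small $k<0$), which is why its envelope still attains $I(\xi)$ at $\xi$. So the lower-open case requires switching to the shifted-functional form, not patching the step function.
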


We highlight two feasible interpretations for our representation of MBA preferences. Using quasiconcave duality results by \cite{Comp_mon_quasic}, the utility representation in \eqref{MBA_pref} can be written as \begin{equation}
    \label{Lemma1_G} V(f) = \max\limits_{G \in \mathcal{G}} \inf\limits_{p \in \bigtriangleup(S)} G\left(\int u(f) \textnormal{d} p , p \right)
\end{equation}
where $\mathcal{G}$ is a linearly continuous family of quasiconvex functions monotone in the first argument. The first interpretation, by \cite{Dualself2022} and \cite{xia2020decision}, rationalizes the decision problem as an intrapersonal game between two conflicting \virgo{selves}: Optimism, playing the best possible aggregator in $\mathcal{G}$, and Pessimism, selecting the worst possible belief in $\bigtriangleup(S)$. Alternatively, following \cite{StarShaped_Castagnoli}, a second interpretation relates to the axiomatic treatment of risk mitigation by \cite{dreze1990essays}. In Dr\`eze, decision makers are confident that choosing an action induces a probability measure over the state space without error. By allowing for less confident decision makers, we interpret the decision maker's action as inducing an aggregator in $\mathcal{G}$ and concerns for misspecification as the infimum among all possible probabilistic models in $\bigtriangleup(S)$.

In the following sections, we enrich our representation of MBA preferences by studying the role of absolute and relative ambiguity attitudes. To this end, we first discuss how other ambiguity models relate to these properties.    

\subsection{Independence notions and ambiguity attitudes} \label{discussion_independence}

Popular models based on MBA preferences are neutral towards absolute and/or relative changes in utility levels. This neutrality is implied by distinct notions of independence, stronger than \nameref{ax_risk_ind}. To discuss this aspect further, we formally introduce \nameref{ax_CAAA} (\citeauthor{grant2013mean}, \citeyear{grant2013mean}), and \nameref{ax_CRAA}. 

\begin{axiom}[constant absolute ambiguity aversion]\label{ax_CAAA}
For all $f\in\mathcal{F}$, $x,y,z\in X$, and $\alpha\in (0,1)$,  \[
\alpha f+(1-\alpha)x\succsim \alpha z+(1-\alpha)x\implies \alpha f+(1-\alpha)y\succsim \alpha z+(1-\alpha)y.
\]
\end{axiom}

To define relative ambiguity attitudes, we assume that the binary relation $\succsim$ on $\mathcal{F}$ admits a \textit{worst consequence}. That is, there exists $x_*\in X$ such that $y\succsim x_*$ for all $y\in X$.

\begin{axiom}[constant relative ambiguity aversion]\label{ax_CRAA}
For all $f\in\mathcal{F}$, $x,y\in X$, and $\alpha,\beta\in (0,1)$, 
\[
\alpha f+(1-\alpha)x_*\succsim \alpha y+(1-\alpha)x_*\implies \beta f+(1-\beta)x_*\succsim \beta y+(1-\beta)x_*.
\]
\end{axiom}

Both these axioms impose a form of \virgo{independence.} Constant absolute ambiguity aversion implies independence in absolute changes in utility levels, meaning the decision maker's ambiguity aversion is invariant towards absolute utility shifts. To see this, notice that the implication in the axiom does not depend on the preference ranking of the constant acts $x$ and $y$. Similarly, constant relative ambiguity aversion requires independence towards relative changes in utility levels. This follows since the implication in the axiom does not depend on whether the weight $\alpha$ is larger or smaller than the weight $\beta$.

Below, we summarize the independence notions of popular MBA preference models, highlighting their relation to constant absolute and constant relative ambiguity attitudes.  

\begin{itemize}[leftmargin=0.4cm, itemsep=0.3pt, topsep=0.5pt]

\item[-] \textit{Choquet (\citeauthor{schmeidler1989subjective}, \citeyear{schmeidler1989subjective}), maxmin  (\citeauthor{gilboa1989maxmin}, \citeyear{gilboa1989maxmin})}, \textit{$\alpha$-maxmin (\citeauthor{ghirardato2004differentiating}, \citeyear{ghirardato2004differentiating}), and dual-self maxmin (\citeauthor{Dualself2022}, \citeyear{Dualself2022}).} 

All these models imply \textit{certainty independence}, which requires that preferences are independent of mixtures with constant acts. Formally, a binary relation $\succsim$ on $\mathcal{F}$ satisfies certainty independence if, for all $f, g \in \mathcal{F}$, $x \in X$, and $\alpha \in [0,1]$, $$f \succsim g \iff \alpha f + (1-\alpha)x \succsim \alpha g + (1-\alpha)x.$$ It can be seen how certainty independence implies both \nameref{ax_CAAA} and \nameref{ax_CRAA}. 

\item[-] \textit{Variational (\citeauthor{Variational}, \citeyear{Variational}), and  vector expected utility (\citeauthor{siniscalchi2009vector}, \citeyear{siniscalchi2009vector})}.

These preferences satisfy a weaker version of certainty independence, named \textit{weak certainty independence}. A binary relation $\succsim$ on $\mathcal{F}$ satisfies weak certainty independence if, for all $f, g \in\mathcal{F}$, $x,y\in X$, and $\alpha\in (0,1)$,  \[
\alpha f+(1-\alpha)x\succsim \alpha g+(1-\alpha)x\implies \alpha f+(1-\alpha)y\succsim \alpha g+(1-\alpha)y.
\] It is immediate to verify that this notion implies \nameref{ax_CAAA}. Moreover, it allows for dependence of ambiguity aversion on relative utility shifts. Indeed, this notion requires that preferences are independent of  mixtures with constant acts while keeping the relative mixing weights, $\alpha$ and $1- \alpha$, constant.\footnote{Variational preferences are represented by a concave and normalized certainty equivalent and hence satisfy increasing relative ambiguity aversion. This contrasts with the maxmin model, where the certainty independence axiom implies positive homogeneity and, in turn, constant relative ambiguity aversion.}


\item[-] \textit{Confidence preferences (\citeauthor{chateauneuf2009ambiguity}, \citeyear{chateauneuf2009ambiguity}).} 

This model satisfies \textit{worst independence}, imposing independence to mixtures involving the worst consequence. Formally, a binary relation $\succsim$ on $\mathcal{F}$ satisfies worst independence if, for all $f, g \in \mathcal{F}$, and $\alpha \in (0, 1)$, $$f \sim g \implies \alpha f + (1 - \alpha) x^* \sim \alpha g + (1 - \alpha) x^*.$$
For MBA preferences, worst independence is equivalent to \nameref{ax_CRAA}, as both lead to positively homogeneous certainty equivalents. Furthermore, specularly to weak certainty independence, this axiom allows for the dependence of ambiguity aversion on absolute utility shifts by requiring independence only with respect to relative mixing weights, keeping fixed the worst consequence on all mixtures.\footnote{Since such preferences are represented by a superlinear and normalized functional, they exhibit decreasing absolute ambiguity aversion.}
\end{itemize}


\medskip
\noindent
In our analysis, we weaken the axioms of weak certainty independence and worst independence to allow for changing absolute and relative ambiguity attitudes, respectively. In doing so, we provide representations that are more general than the ones implied by these axioms: Proposition \ref{dependent_variational_main} connects \nameref{ax_DAAA} to an act-dependent variational representation, while Proposition \ref{dependent_confidence_main} links \nameref{ax_DRAA} to an act-dependent confidence preference representation.\footnote{By weakening certainty independence, \cite{hill2013confidence} provides an axiomatization of an act-dependent version of maxmin expected utility to capture the role of confidence in decisions.}

\section{Absolute ambiguity attitudes}

In this section, we characterize preferences that display less aversion to ambiguity as the decision maker's baseline utility increases. To this end, we employ \citeauthor{Xuechanging}'s (\citeyear{Xuechanging}) axiom of decreasing absolute ambiguity aversion.

\begin{axiom}[decreasing absolute ambiguity aversion]\label{ax_DAAA}
For all $f\in\mathcal{F}$, $x,y,z\in X$, and $\alpha\in (0,1)$, if $y\succsim x$, then
\[
\alpha f+(1-\alpha)x\succsim \alpha z+(1-\alpha)x\implies \alpha f+(1-\alpha)y\succsim \alpha z+(1-\alpha)y.
\] \end{axiom}

Contrary to the constant absolute case, \nameref{ax_DAAA} allows the decision maker's ambiguity aversion to depend on absolute changes in utility levels. In particular, the axiom says that if an ambiguous act $\alpha f+(1-\alpha)x$ is preferred over a constant act $\alpha z+(1-\alpha)x$, then such a ranking is preserved if the certainty part improves from $x$ to $y$ on both sides.   This axiom captures the idea that ambiguity becomes more
tolerable when the decision maker is better off in absolute terms. 

The following theorem provides our main representation of \nameref{ax_DAAA}. It connects this property to the representation \eqref{Lemma1_G} of Lemma \ref{Lemma_1} by imposing constant superadditivity of all aggregators.


\begin{theorem}\label{thm:UAPstyleDAAA}
Let $\succsim$ be a binary relation over $\mathcal{F}$. The following are equivalent
\begin{enumerate}[label=(\roman*)]
    \item $\succsim$ is MBA and exhibits \nameref{ax_DAAA}.
    \item There exist an affine function $u:X\to \mathbb{R}$ and a linearly continuous family $\mathcal{G}$ of monotone, constant superadditive in the first argument, and quasiconvex functions $G:\mathbb{R}\times \bigtriangleup(S)\to \bar{\R}$ such that
    \[
    f\succsim g \Longleftrightarrow  \max_{G\in \mathcal{G}}\inf_{p\in \bigtriangleup(S)}G\left(\int u(f)\textnormal{d}p,p\right)\geq \max_{G\in \mathcal{G}}\inf_{p\in \bigtriangleup(S)}G\left(\int u(g)\textnormal{d}p,p\right)
    \]
    for all $f,g\in \mathcal{F}$ and $\max_{G\in \mathcal{G}}\inf_{p\in \bigtriangleup(S)}G\left(\int u(x)\textnormal{d}p,p\right)=u(x)$ for all $x \in X$.
\end{enumerate}
\end{theorem}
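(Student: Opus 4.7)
To prove $(ii) \Rightarrow (i)$, I would verify the MBA axioms and \nameref{ax_DAAA} directly from the representation. Affinity of $u$ immediately gives \nameref{ax_risk_ind}, while monotonicity of each $G$ in the first argument, together with linear continuity of $\mathcal{G}$, yields \nameref{ax_monot} and \nameref{ax_arch_cont} by standard arguments. For \nameref{ax_DAAA}, the key observation is that constant superadditivity in the first argument of each $G$ transfers to $I(\varphi) := \max_{G \in \mathcal{G}} \inf_{p} G(\int \varphi \textnormal{d} p, p)$, because $\int (\varphi + k) \textnormal{d} p = \int \varphi \textnormal{d} p + k$ for a constant $k$, and both $\inf_p$ and $\max_G$ preserve the shift inequality. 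Writing the utility profile of $\alpha f + (1-\alpha) y$ as that of $\alpha f + (1-\alpha) x$ plus the constant $k = (1-\alpha)(u(y) - u(x)) \geq 0$, rank preservation under \nameref{ax_DAAA} follows at once.

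For the converse, I would first invoke Lemma \ref{Lemma_1} to obtain an affine $u: X \to \mathbb{R}$ and a normalized, monotone, continuous functional $I: B_0(\Sigma, u(X)) \to \mathbb{R}$ representing $\succsim$ via $I \circ u$. The pivotal step is to transfer \nameref{ax_DAAA} into constant superadditivity of $I$. Given $\varphi \in B_0(\Sigma, u(X))$ and $k \geq 0$ with $\varphi + k \in B_0(\Sigma, u(X))$, I would pick $\alpha \in (0,1)$, $x, y \in X$ with $u(y) - u(x) = k/(1-\alpha)$, and $f \in \mathcal{F}$ with $\alpha u(f) + (1-\alpha) u(x) = \varphi$; a feasibility computation, keyed on taking $\alpha$ sufficiently close to $1$, ensures this decomposition is always available when $\varphi + k \in B_0(\Sigma, u(X))$, since $u(X)$ is a convex interval containing both $\varphi$ and $\varphi + k$. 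Choosing $z \in X$ with $\alpha u(z) + (1-\alpha) u(x) = I(\varphi)$ gives $\alpha f + (1-\alpha) x \sim \alpha z + (1-\alpha) x$; applying \nameref{ax_DAAA} under the shift from $x$ to $y$ then yields $I(\varphi + k) \geq I(\varphi) + k$.

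The final task is to assemble the family $\mathcal{G}$. Starting from the representation $I = \max_{H \in \Psi} H$ furnished by Lemma \ref{Lemma_1}, the plan is to refine $\Psi$ so that each $H$ is additionally constant superadditive; this is achievable, for each $\varphi_0$, by taking the pointwise supremum of all monotone, quasiconcave, constant-superadditive functionals below $I$ that coincide with $I(\varphi_0)$ at $\varphi_0$, whose non-emptiness is secured by the constant superadditivity of $I$ itself. Applying the monotone quasiconcave duality of Cerreia-Vioglio, Maccheroni, Marinacci, and Montrucchio to each such $H$ then produces $G_H(t, p) := \sup\{H(\psi) : \int \psi \textnormal{d} p \leq t\}$ satisfying $H(\varphi) = \inf_p G_H(\int \varphi \textnormal{d} p, p)$; constant superadditivity of $H$ transfers to $G_H$ in the first argument via the substitution $\psi \mapsto \psi + k$, and monotonicity and quasiconvexity of $G_H$ descend from the corresponding properties of $H$. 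Linear continuity of $\mathcal{G} = \{G_H : H \in \Psi\}$ is inherited from continuity of $I$, and the normalization on constant acts follows from normalization of $I$. The hard part will be verifying that the refined family $\Psi$ is rich enough to recover $I$ as its pointwise maximum, together with careful management of the substitution $\psi \mapsto \psi + k$ when $u(X)$ is a proper sub-interval of $\mathbb{R}$—this is where the extension and envelope-continuity techniques mentioned in the related-literature discussion come into play.
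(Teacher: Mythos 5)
Your architecture matches the paper's: reduce (i) to constant superadditivity of the certainty equivalent $I$, then produce $\mathcal{G}$ by quasiconcave duality applied to constant-superadditive quasiconcave minorants of $I$. The converse direction and the overall plan are sound, but two steps in the hard direction have genuine problems. First, your one-shot derivation of $I(\varphi+k)\geq I(\varphi)+k$ requires simultaneously finding $x,y\in X$ with $u(y)-u(x)=k/(1-\alpha)$ and an act $f$ with $u(f)=(\varphi-(1-\alpha)u(x))/\alpha$. Taking $\alpha$ close to $1$ helps the second constraint but destroys the first whenever $u(X)$ is bounded, since $k/(1-\alpha)\to\infty$; the heuristic you give is backwards. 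A valid $\alpha$ does exist, but only in a window bounded away from $1$, and the paper avoids this tension entirely via Lemma \ref{setticemia}: split $k$ into $n$ increments $k/n$ and telescope, which only needs $\tfrac{k/n}{1-\alpha}\in u(X)$ for some large $n$.

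Second, and more seriously, your construction of the refined family $\Psi$ fails as stated. The pointwise supremum of a family of quasiconcave functionals is in general not quasiconcave (upper level sets of a supremum are unions, not intersections, of convex sets), so ``the supremum of all monotone, quasiconcave, constant-superadditive functionals below $I$ agreeing with $I$ at $\varphi_0$'' need not be admissible. Moreover, the non-emptiness of that family is not a soft consequence of constant superadditivity of $I$ --- it is the actual mathematical content of the theorem. What is needed is one explicit such minorant per $\varphi_0$, and the paper builds it: $S_{\xi}(\varphi)=\sup\{I(\xi+k):\varphi\geq \xi+k,\ \xi+k\in B_0(\Sigma,u(X))\}$, shown to be monotone, quasiconcave, dominated by $I$ with equality at $\xi$, and constant superadditive (Propositions \ref{auxiliary_functionals_basic} and \ref{auxiliary_functionals_constant_sup}), then regularized via semicontinuous envelopes (Lemma \ref{basic_prop_cont_envelopes}, Proposition \ref{cont_representations}). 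Finally, the transfer of constant superadditivity from $H$ to $G_H$ via the substitution $\psi\mapsto\psi+k$ genuinely breaks down when the domain is bounded above, so the extension of $I$ to $B_0(\Sigma,K_{\infty})$ (Lemma \ref{const_superadd_extension}) is not optional bookkeeping but a prerequisite for Proposition \ref{quasic_cs_dual_rep_base}; you flag this difficulty but leave it unresolved.
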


The two interpretations provided for Lemma \ref{Lemma_1} apply also for Theorem \ref{thm:UAPstyleDAAA}. The main distinction is the fact that now the representation captures \nameref{ax_DAAA} through the constant superadditivity of each aggregator $G(\cdot, p)$ for all $p \in \bigtriangleup(S)$. Interpreting again the decision problem as an intrapersonal game between two selves (\citeauthor{Dualself2022}, \citeyear{Dualself2022}), this means that the pessimistic self exhibits \nameref{ax_DAAA} irrespective of the move of the optimistic self, that is, the choice of the aggregator. Instead, in a risk mitigation framework (\citeauthor{dreze1990essays}, \citeyear{dreze1990essays}), it is as if the decision maker exhibits \nameref{ax_DAAA} independently of the chosen action. Both interpretations have a common ground: \nameref{ax_DAAA} ascribes to the aggregators only.

The following example introduces a generalization of second-order expected utility (\citeauthor{grant2009second}, \citeyear{grant2009second}; \citeauthor{neilson2010simplified}, \citeyear{neilson2010simplified}), which allows for multiple probability distributions rather than a single one. We show that whenever the function governing ambiguity attitudes satisfies decreasing absolute risk aversion (DARA), i.e., the Arrow-Pratt coefficient of absolute risk aversion is decreasing, then the model can be written as the representation of Theorem \ref{thm:UAPstyleDAAA}, hence displaying \nameref{ax_DAAA}.

\begin{example}[second-order expected utility with risk mitigation; DARA]\label{example1DAAA} We model a decision maker with a finite set of probabilities $Q\subseteq \bigtriangleup^{\sigma}(S)$,\footnote{We denote by $\bigtriangleup^{\sigma}(S)$ the set of countably additive probability measures over $S$.} where $(S,\Sigma)$ is a measurable space, an affine utility over consequences satisfying $u(X)=[0,\infty)$, and ambiguity attitudes are represented by a continuous, strictly increasing, and concave function $\phi:[0,\infty)\to \R$. For every act $f\in \mathcal{F}$, the second-order expected utility given $q\in Q$ is defined as \[
I_q(u(f))=\phi^{-1}\left(\int \phi(u(f))\textnormal{d}q\right).
\] To aggregate models in $Q$, the decision maker uses the following criterion $$V(f)=\max_{q\in Q}I_q(u(f)).$$ Clearly, if $Q$ is a singleton, these preferences collapse to second-order expected utility. In general, inspired by \cite{dreze1990essays}, we interpret these preferences as reflecting a two-step procedure: first, the choice of an ambiguous alternative $f \in \mathcal{F}$, and second the choice of an action $q \in Q$ which partially controls the probability over the states to mitigate the uncertainty involving $f$. For this reason, we refer to these preferences as \textit{second-order expected utility with risk mitigation}. 



Through the properties of our representation of Theorem \ref{thm:UAPstyleDAAA}, we investigate which assumptions on $\phi$ imply \nameref{ax_DAAA}. As each $I_q$ is continuous and $Q$ is finite, $V$ is also continuous; as each $I_q$ is monotone and normalized, $V$ is monotone and normalized. By Lemma \ref{Lemma_constsuppossup_examples} and \ref{representation_for_examples} in Appendix \ref{Appendix_B3}, we have that, if $\phi$ is twice differentiable and satisfies DARA, i.e., $t\mapsto -\phi''(t)/\phi'(t)$ is decreasing, then each $I_q$ is constant superadditive.\footnote{The function $\phi:t\mapsto\sqrt{t}$ satisfies all the listed hypotheses.} 
As a consequence, such preferences exhibit \nameref{ax_DAAA} and admit a representation as in Theorem \ref{thm:UAPstyleDAAA}, with
\[
G_q(t,p)=\sup\left\lbrace I_q(u(f)):f\in \mathcal{F}\ \textnormal{and}\ \int u(f) \textnormal{d}p\leq t \right\rbrace
\]
for all $(t,p)\in \mathbb{R}\times \bigtriangleup(S)$. The constant superadditivity of each $I_q$ implies that each $G_q(\cdot,p)$ is constant superadditive as well. \hfill $ \blacktriangleleft$ 



\end{example}






As \nameref{ax_DAAA} weakens weak certainty independence (see section \ref{discussion_independence}), we investigate whether these preferences admit a more general variational representation. The following proposition shows that this is the case by relating \nameref{ax_DAAA} to an act-dependent variational representation.\footnote{Proposition \ref{dependent_variational_main} extends Proposition A.3 in \cite{RuoduCashSubad} by removing unboundedness conditions.} In their standard formulation (\citeauthor{Variational}, \citeyear{Variational}), variational preferences are characterized by a single ambiguity cost $c:\bigtriangleup(S)\to [0,\infty]$ capturing the level of ambiguity aversion. Due to changing ambiguity attitudes and the absence of convexity of preferences, the set of ambiguity costs varies with the acts.

\begin{proposition} \label{dependent_variational_main} Let $\succsim$ be a binary relation over $\mathcal{F}$. The following are equivalent
\begin{enumerate}[label=(\roman*)]
    \item \label{1prop1} $\succsim$ is MBA and exhibits \nameref{ax_DAAA}.
    \item \label{2prop1} There exist an affine function $u:X\to \mathbb{R}$, and, for all $f\in \mathcal{F}$, a family $C_{f}$ of lower semicontinuous and convex functions $c:\bigtriangleup(S)\to (-\infty,\infty]$ such that
\[f \succsim g \iff 
\max\limits_{c\in C_{f}}\min\limits_{p\in \bigtriangleup(S)}\left\lbrace \int_Su(f)\textnormal{d}p+c(p)\right\rbrace \geq \max\limits_{c\in C_{g}}\min\limits_{p\in \bigtriangleup(S)}\left\lbrace \int_Su(g)\textnormal{d}p+c(p)\right\rbrace,
\] where $\max\limits_{c\in C_x}\min\limits_{p\in\bigtriangleup(S)}c(p)=0$ for all $x\in X$, and $C_{g}\subseteq C_{f}$ for all $f,g\in \mathcal{F}$ with $u(f)\geq u(g)$.
\end{enumerate}\end{proposition}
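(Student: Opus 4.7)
The plan is to prove the two implications separately, making use of Theorem~\ref{thm:UAPstyleDAAA} for the forward direction. For (i)~$\Rightarrow$~(ii), I would first invoke Theorem~\ref{thm:UAPstyleDAAA} to obtain a certainty-equivalent functional $I\colon B_0(\Sigma,u(X))\to\mathbb{R}$ that is monotone, normalized, continuous, and constant superadditive, with $V(f)=I(u(f))$. For each $f\in\mathcal{F}$, set
\[
C_{f} := \{c\colon \bigtriangleup(S)\to(-\infty,\infty] : c \text{ is convex and lsc, and } V_{c}(\psi)\leq I(\psi) \text{ for all } \psi\in B_0(\Sigma,u(X)) \text{ with } \psi\geq u(f)\},
\]
where $V_{c}(\psi):=\min_{p}\{\int\psi\,dp+c(p)\}$. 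The key step is to exhibit a witness in $C_f$ attaining $I(u(f))$: the affine continuous map $c_{f}(p):=I(u(f))-\int u(f)\,dp$ is convex and lsc, and from
\[
V_{c_f}(\psi)=I(u(f))+\inf_{s\in S}\bigl(\psi(s)-u(f)(s)\bigr),
\]
the constant superadditivity and monotonicity of $I$ imply $V_{c_f}(\psi)\leq I(\psi)$ whenever $\psi\geq u(f)$. Hence $c_f\in C_f$ and $V_{c_f}(u(f))=I(u(f))$. The upper bound $\max_{c\in C_f}V_c(u(f))\leq I(u(f))$ follows directly from the defining constraint at $\psi=u(f)$, so equality holds. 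Monotonicity $C_g\subseteq C_f$ when $u(f)\geq u(g)$ is transparent because the constraint set of test $\psi$'s enlarges, and the normalization $\max_{c\in C_x}\min_p c(p)=0$ follows from $c_x\equiv 0\in C_x$ combined with $V_c(u(x))\leq u(x)$ forcing $\min_p c(p)\leq 0$ for every $c\in C_x$.

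For (ii)~$\Rightarrow$~(i), set $V(f):=\max_{c\in C_f}V_c(u(f))$ and verify MBA together with \nameref{ax_DAAA} directly from the representation. Weak order and risk independence are immediate (the former since $V$ is real-valued, the latter from affinity of $u$ and the equality $V(x)=u(x)$ forced by the normalization hypothesis). Monotonicity of $\succsim$ follows from $V(f)\geq\max_{c\in C_g}V_c(u(f))\geq\max_{c\in C_g}V_c(u(g))=V(g)$ whenever $u(f)\geq u(g)$, combining $C_g\subseteq C_f$ with monotonicity of each $V_c$. Constant superadditivity of the induced $I:=V\circ u^{-1}$ is extracted from the constant additivity of each $V_c$ and the inclusion $C_f\subseteq C_{f'}$ whenever $u(f')=u(f)+k$ with $k\geq 0$, yielding $I(u(f)+k)\geq I(u(f))+k$. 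Once Archimedean continuity is secured, Lemma~\ref{Lemma_1} delivers MBA, and \nameref{ax_DAAA} is obtained by noting that for $y\succsim x$, the utility difference between $\alpha f+(1-\alpha)y$ and $\alpha f+(1-\alpha)x$ is the nonnegative constant $k=(1-\alpha)(u(y)-u(x))$, whose addition preserves the preference under constant superadditivity of $I$.

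The principal obstacle is Archimedean continuity in the (ii)~$\Rightarrow$~(i) direction. Each $V_c$ is $1$-Lipschitz on $B_0(\Sigma)$ by constant additivity plus monotonicity, but the act-dependence of $C_f$ precludes a direct Lipschitz bound on $V$. My strategy is to combine monotonicity of $V$ with the sandwich $u(f)-\varepsilon\leq u(g)\leq u(f)+\varepsilon$ (where $\varepsilon=\|u(f)-u(g)\|_\infty$) and the monotone chain $C_{\cdot}$ to extract one-sided bounds on $V(u(f))-V(u(g))$; the second inequality then requires an envelope continuity result for the family $\{V_c : c\in\bigcup_h C_h\}$ in the spirit of \cite{RuoduCashSubad}. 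Extending this past their unbounded-utility hypothesis, which is the main sharpening claimed here, is where the extension techniques flagged in the introduction enter, and it is the most technically delicate portion of the argument.
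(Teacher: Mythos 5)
Your direction \textit{(i)}~$\Rightarrow$~\textit{(ii)} is correct and, in substance, the paper's argument in different packaging. The paper first reduces \nameref{ax_DAAA} to a \emph{continuous}, monotone, normalized, constant superadditive certainty equivalent (Lemma~\ref{lem:DAAA_pref} --- this, rather than Theorem~\ref{thm:UAPstyleDAAA}, is the right citation, since Theorem~\ref{thm:UAPstyleDAAA} is itself proved from it), and then applies Proposition~\ref{dependent_variational}, whose proof constructs the explicit nested families $C_f=\lbrace p\mapsto I(u(g))-\int u(g)\,\textnormal{d}p: u(g)\leq u(f)\rbrace$ via the reflection argument of Proposition~\ref{dependent_variational_abs} and niveloid duality from \cite{Niveloidsextension}. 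You instead take $C_f$ to be the set of \emph{all} lsc convex penalties dominated by $I$ on $\lbrace \psi\geq u(f)\rbrace$ and exhibit the single witness $c_f(p)=I(u(f))-\int u(f)\,\textnormal{d}p$; your verifications (constant superadditivity plus monotonicity give $V_{c_f}(\psi)=I(u(f))+\inf_s(\psi-u(f))(s)\leq I(\psi)$ for $\psi\geq u(f)$; the constraint at $\psi=u(f)$ gives the upper bound; nesting and normalization as you state) are sound and arguably more direct than the paper's two-stage route.

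The gap in \textit{(ii)}~$\Rightarrow$~\textit{(i)} is exactly where you place it, but it is not merely a delicate technicality to be supplied: condition \textit{(ii)} as literally written does \emph{not} imply \nameref{ax_arch_cont}, so no envelope-continuity argument can close your proof without strengthening the hypothesis. Concretely, take $S=\lbrace s_1,s_2\rbrace$ and $u(X)=[-1,1]$; let $c_0\equiv 0$ and let $c_1$ equal $0$ at $\delta_{s_2}$ and $+\infty$ elsewhere (both lsc and convex with $\min_p c_i(p)=0$); set $C_f=\lbrace c_0\rbrace$ if $u(f)(s_1)\leq 0$ and $C_f=\lbrace c_0,c_1\rbrace$ if $u(f)(s_1)>0$. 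All side conditions of \textit{(ii)} hold (nesting, normalization at constants), and the induced functional is $I(\varphi)=\min_s\varphi(s)$ when $\varphi(s_1)\leq 0$ and $I(\varphi)=\varphi(s_2)$ when $\varphi(s_1)>0$: monotone, normalized, constant superadditive, but discontinuous. With $u(h)=(0,1)$, $u(f)=(1,1)$, $u(g)=1/2$ one has $f\succ g\succ h$ yet $\beta f+(1-\beta)h\succ g$ for every $\beta\in(0,1)$, violating \nameref{ax_arch_cont}. The paper's own proof does not confront this because it factors through Lemma~\ref{lem:DAAA_pref} and Proposition~\ref{dependent_variational}, in both of which continuity of $I$ is a standing hypothesis; that is, the intended reading of \textit{(ii)} has the induced certainty equivalent continuous (compare the explicit ``linearly continuous'' clause in Theorem~\ref{thm:UAPstyleDAAA}). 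Once you add that regularity to \textit{(ii)}, your remaining verifications --- monotonicity, constant superadditivity of $I$ from $C_f\subseteq C_{f'}$ plus constant additivity of each $V_c$, and the derivation of \nameref{ax_DAAA} from constant superadditivity --- coincide with the paper's and are correct.
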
 





This result generalizes the dual-self variational representation in \cite{Dualself2022} (Theorem 3), which differs from ours as it is act-independent, a consequence of \nameref{ax_CAAA}. Instead, due to \nameref{ax_DAAA}, the set of ambiguity costs varies with the act, enlarging as the utility levels increase. In the intrapersonal game interpretation, this maps to the optimistic self being allowed to select a more favorable cost, while in the risk mitigation one, to an attenuation of the costs related to misspecification concerns. In general, the result highlights how the higher the utility levels, the lower the ambiguity costs the decision maker faces. 

In Proposition \ref{dependent_variational_main}
point \textit{\ref{2prop1}}, each cost function belonging to $C_f$ takes values in $(-\infty,\infty]$. This is done in light of the following remark, which provides a nice construction of these costs. Alternatively, each cost function could be taken with values in $[0,\infty]$ and the construction would rely, following \cite{Variational} and \cite{Niveloidsextension}, on the more standard Fenchel-Moreau representation.

\begin{remark} As observed by \cite{RuoduCashSubad} (Proposition A.3), the family of ambiguity costs in the act-dependent variational representation can be written in a specific form. In particular, by inspecting the proof of their Proposition A.3, it follows that
\begin{equation}\label{nonnepossopiuuccidetemi}
I(u(f))=\max\limits_{g:u(f)\geq u(g)}\min\limits_{p\in \bigtriangleup(S)}\left\lbrace \int_Su(f)\textnormal{d}p+u(x_g)-\int_S u(g)\textnormal{d}p\right\rbrace
\end{equation}
for all $f\in \mathcal{F}$.
This representation, which is equivalent to that of Proposition \ref{dependent_variational_main}, has the advantage of making explicit the following structure of ambiguity costs 
\[
C_f=\left\lbrace p\mapsto u(x_g)-\int_S u(g)\textnormal{d}p:g\in \mathcal{F},\ \forall s\in S,\ f(s)\succsim g(s)\right\rbrace.
\]
\[
c^i_g:p\mapsto u(x^i_g)-\int_Su(g)\textnormal{d}p
\]
Notice that, even considering the more explicit form of this representation, in the absence of uniqueness-related results, many other families of ambiguity indexes may exist. \hfill $ \blacktriangleleft$
\end{remark}

The following example relates to the representation of \nameref{ax_DAAA} of Proposition \ref{dependent_variational_main} by considering an act-dependent version of multiplier preferences (\citeauthor{hansen2001robust}, \citeyear{hansen2001robust}). 

\begin{example}[multiplier preferences with outside option] \label{multiplier_benchmark} 
We model a decision maker that can mitigate the uncertainty induced by any act $f \in \mathcal{F}$ by exerting some effort that affects the probability distribution over a finite set of states $S$, where $\Sigma=2^S$. In particular, exerting effort to mitigate the uncertainty of $f$ is valuable if it can induce a model, from the finite set $Q \subseteq \bigtriangleup(S)$, which makes $f$ preferred over an outside option yielding a utility level equal to $\theta \in [0, \infty)$. If no model in $Q$ justifies the choice of $f$ over the outside option, then exerting effort is not valuable, and the decision maker evaluates $f$ using as benchmark $q_{\text{u}}$, the uniform distribution over $S$.
Formally, each act $f$ is associated with a set of probability distributions
\[
C_f=\left\lbrace q\in Q:\int u(f)\textnormal{d} q\geq \theta \right\rbrace\cup \left\lbrace q_{\text{u}}\right\rbrace.
\]
Notice that, for $f, g \in \mathcal{F}$, $u(f) \leq u(g)$ implies $C_f \subseteq C_g$. 

The decision maker is concerned with model misspecification and displays multiplier preferences
\begin{equation} \label{multiplier_pref_oo}
  V(f)=\max\limits_{q\in C_f}\min\limits_{p\in \bigtriangleup(S)}\left\lbrace \int u(f)\textnormal{d} p +\lambda R(p\lVert q) \right\rbrace  
\end{equation}
where $\lambda >0$, and $R(\cdot\lVert\cdot)$ denotes the relative entropy which can be defined, for every $p, q \in \bigtriangleup(S)$, as  $$R(p\lVert q)=\int \log \Big( \dfrac{\textnormal{d} p }{\textnormal{d} q} \Big) \textnormal{d} p $$  if $p \ll q$; $+\infty$ otherwise. These preferences satisfy \nameref{ax_DAAA}; they are a special case of the act-dependent variational model of Proposition \ref{dependent_variational_main}.


We interpret the representation \eqref{multiplier_pref_oo} in light of the risk mitigation story.\footnote{The conflicting selves narrative would apply as well with obvious adjustments. We employ it later to interpret the model of Example \ref{confidence_benchmark} concerning \nameref{ax_DRAA}.} Two forces contribute to the evaluation of the act $f$. On the one hand, the decision maker's optimal action induces the most favorable model among the ones justifying $f$ over the outside option. This model, $q \in C_f$, determines the benchmark probability for the relative entropy $R(\cdot \lVert q)$. On the other hand, due to concerns for misspecification, the decision maker employs the worst model $p \in \bigtriangleup(S)$ to compute the expected utility, trading-off a higher value of the relative entropy $R(p \lVert q)$ the more $p$ diverges from $q$.\hfill $ \blacktriangleleft$
\end{example}


\section{Relative ambiguity attitudes}
We now study the impact of a \textit{relative} change in the proportion of the certainty part of an act. In particular, we characterize preferences displaying \nameref{ax_DRAA}. Recall that $x_*\in X$ denotes the worst consequence, that is,  $y\succsim x_*$ for all $y\in X$.

\begin{axiom}[decreasing relative ambiguity aversion]\label{ax_DRAA}
For all $f\in\mathcal{F}$, $y\in X$, and $\alpha,\beta\in (0,1)$, if $\alpha\leq \beta$, then \[ \alpha f+(1-\alpha)x_*\succsim \alpha y+(1-\alpha)x_*\implies \beta f+(1-\beta)x_*\succsim \beta y+(1-\beta)x_*.\]\end{axiom} 
This axiom says that if an act $\alpha f+(1-\alpha)x_*$ is preferred to a constant act $\alpha y+(1-\alpha)x_*$, where both acts can be expressed as mixtures with the worst consequence, then such a ranking is preserved after decreasing the proportion associated with the worst consequence in both acts. This axiom modifies the one proposed by \cite{Xuechanging} by restricting to mixtures involving the worst consequence only.

We are ready to state our main representation for \nameref{ax_DRAA}. It mirrors the representation of \nameref{ax_DAAA} in Theorem \ref{thm:UAPstyleDAAA}, the main distinctions being: \textit{(i)} constant superadditivity of the aggregators is replaced by positive superhomogeneity, and \textit{(ii)} we require the existence of a worst consequence.

\begin{theorem} \label{representation_IRAA}
Let $\succsim$ be a binary relation over $\mathcal{F}$. The following are equivalent
\begin{enumerate}[label=(\roman*)]
    \item \label{item1_rep_IRAA} $\succsim$ is MBA,  admits a worst consequence, and exhibits \nameref{ax_DRAA}. 
    \item \label{item2_rep_IRAA} There exist an affine function $u:X\to \mathbb{R}$ with $\min u(X)=0$ and a linearly continuous family $\mathcal{G}$ of monotone, positively superhomogeneous in the first argument, and quasiconvex functions $G:\mathbb{R}\times \bigtriangleup(S)\to \bar{\R}$ such that
    \[
    f\succsim g \Longleftrightarrow  \max_{G\in \mathcal{G}}\inf_{p\in \bigtriangleup(S)}G\left(\int u(f)\textnormal{d}p,p\right)\geq \max_{G\in \mathcal{G}}\inf_{p\in \bigtriangleup(S)}G\left(\int u(g)\textnormal{d}p,p\right)
    \]
    for all $f,g\in \mathcal{F}$ and $\max_{G\in \mathcal{G}}\inf_{p\in \bigtriangleup(S)}G\left(\int u(x)\textnormal{d}p,p\right)=u(x)$ for all $x \in X$.
\end{enumerate}
\end{theorem}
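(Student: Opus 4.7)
The plan is to prove Theorem~\ref{representation_IRAA} by pairing a direct verification of \emph{(ii)$\Rightarrow$(i)} with a Lemma~\ref{Lemma_1}-based construction for \emph{(i)$\Rightarrow$(ii)}, in complete analogy with the treatment of \nameref{ax_DAAA} in Theorem~\ref{thm:UAPstyleDAAA}: constant superadditivity is replaced by positive superhomogeneity, and the worst consequence is used to normalize $u(x_*)=0$.

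For \emph{(ii)$\Rightarrow$(i)}, linear continuity of $\mathcal{G}$ gives continuity of $V$, monotonicity of each $G(\cdot,p)$ in $t$ gives monotonicity of $V$, and normalization together with affinity of $u$ yields risk independence. For \nameref{ax_DRAA}, given $\alpha\le\beta$ in $(0,1)$ set $\gamma=\alpha/\beta$; the mixture identity $\alpha f+(1-\alpha)x_*=\gamma(\beta f+(1-\beta)x_*)+(1-\gamma)x_*$ together with $u(x_*)=0$ gives $u(\alpha f+(1-\alpha)x_*)=\gamma\,u(\beta f+(1-\beta)x_*)$, and positive superhomogeneity of each $G(\cdot,p)$ then yields $V(\alpha f+(1-\alpha)x_*)\le\gamma\,V(\beta f+(1-\beta)x_*)$; combining with the DRAA premise and normalization on the constant side gives $\beta u(y)\le V(\beta f+(1-\beta)x_*)$.

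For \emph{(i)$\Rightarrow$(ii)}, first apply Lemma~\ref{Lemma_1} to obtain $u$ and a monotone, continuous, normalized certainty equivalent $I$ on $B_0(\Sigma,u(X))$, with $\min u(X)=0$. I then translate \nameref{ax_DRAA} into positive superhomogeneity of $I$: the axiom implies that for each $\varphi\in B_0(\Sigma,u(X))$ the map $\alpha\mapsto I(\alpha\varphi)/\alpha$ is nondecreasing on $(0,1)$; sending $\beta\to 1^-$ and using continuity of $I$ yields $I(\gamma\varphi)\le\gamma I(\varphi)$ for all $\gamma\in(0,1)$. For the representation, for each $\psi_0\in B_0(\Sigma,u(X))$ set $\lambda^*(\psi)=\sup\{\lambda\ge 0:\psi\ge\lambda\psi_0\}$ and define $H_{\psi_0}(\psi)=I(\lambda^*(\psi)\,\psi_0)$. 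Each $H_{\psi_0}$ is monotone, quasiconcave (its upper level sets are upward translates of $\{\lambda_k\psi_0\}$, hence convex), dominated by $I$, tight at $\psi_0$, and positively superhomogeneous since $\lambda^*(\gamma\psi)=\gamma\lambda^*(\psi)$ and $I$ is positively superhomogeneous along the ray through $\psi_0$. Hence $I=\sup_{\psi_0}H_{\psi_0}$, and applying the monotone quasiconcave duality of Cerreia-Vioglio et al.\ (the tool underpinning Lemma~\ref{Lemma_1}) yields aggregators $G_{\psi_0}(t,p)=\sup\{H_{\psi_0}(\varphi):\int\varphi\,dp\le t\}$ which are monotone and quasiconvex in $t$; the substitution $\varphi\mapsto\gamma\varphi$ transfers positive superhomogeneity from $H_{\psi_0}$ to $G_{\psi_0}(\cdot,p)$.

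The main obstacle is that the substitution $\varphi\mapsto\gamma\varphi$ requires $\gamma\varphi\in B_0(\Sigma,u(X))$, which is automatic when $u(X)=[0,\infty)$ but may fail for bounded $u(X)$. Following the paper's stated strategy of extension techniques, I would extend $I$ to a monotone, continuous, positively superhomogeneous $\widetilde I$ on $B_0(\Sigma,[0,\infty))$ via a monotone envelope, execute the star-shaped decomposition and duality on the extended domain, and then restrict the resulting aggregators back. Linear continuity of $\mathcal{G}$ reduces to continuity of $I$, and the regularity constraints demanded by the notion of \emph{regular family} (lower semicontinuity when $u(X)$ is lower open, upper semicontinuity when $u(X)$ is upper open or a bounded interval) are inherited via the envelope characterization of $G_{\psi_0}$ combined with continuity of $\widetilde I$.
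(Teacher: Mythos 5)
Your proposal is correct and follows essentially the same route as the paper: the converse direction matches the ``if'' part of Lemma \ref{lem:IRAA_pref}, the translation of \nameref{ax_DRAA} into positive superhomogeneity of $I$ via monotonicity of $\gamma\mapsto I(\gamma\varphi)/\gamma$ is exactly the paper's argument, and your ray-based functionals $H_{\psi_0}(\psi)=I(\lambda^*(\psi)\psi_0)$ are precisely the auxiliary functionals $H_{\xi}$ of Proposition \ref{auxiliary_functionals_positive_sup}, combined with the extension to $B_0(\Sigma,[0,\infty))$ (Lemma \ref{positive_superhomo_extension}), semicontinuous envelopes, and the quasiconcave duality transfer of Proposition \ref{quasic_ps_dual_rep}. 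No substantive gap.
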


The two interpretations provided for  Lemma \ref{Lemma_1} 
 and Theorem \ref{thm:UAPstyleDAAA}---the dual-self and the risk mitigation interpretation---apply to Theorem \ref{representation_IRAA} as well. In particular, through the positive superhomogeneity of each $G(\cdot,p)$, for all $p\in \bigtriangleup(S)$, we can view \nameref{ax_DRAA} as a property of the aggregators only.

Analogous to Example \ref{example1DAAA}, the following connects the representation of Theorem \ref{representation_IRAA} to second-order expected utility with risk mitigation when the function governing ambiguity attitudes satisfies decreasing relative risk aversion (DRRA), i.e., the Arrow-Pratt coefficient of relative risk aversion is decreasing.

\begin{example}[second-order expected utility with risk mitigation; DRRA]\label{example1DRAA}
Consider again the preferences introduced in Example \ref{example1DAAA}. In particular, for $f \in \mathcal{F}$, $$V(f)=\max_{q \in Q} \> I_q(u(f))= \max_{q \in Q} \> \phi^{-1}\left(\int \phi(u(f))\textnormal{d}q\right)$$ where $Q \subseteq \bigtriangleup^{\sigma}(S)$ is a finite set of models, $u(X)=[0, \infty)$, and $\phi$ is a continuous, strictly increasing and concave function. By Lemma \ref{Lemma_constsuppossup_examples} and \ref{representation_for_examples} in Appendix \ref{Appendix_B3}, if $\phi$ is also twice differentiable and DRRA, i.e., $t\mapsto-t\phi''(t)/\phi'(t)$ is decreasing, then each $I_q$ is positively superhomogeneous.\footnote{For instance, the function $\phi:t\mapsto t+\sqrt{t}$ satisfies all the listed hypotheses.} As a result, these preferences satisfy \nameref{ax_DRAA} and can be represented following Theorem \ref{representation_IRAA}, where each $G_q$, defined as in Example \ref{example1DAAA}, is positively superhomogeneous.  \hfill $\blacktriangleleft$

\end{example}

\begin{remark} \label{concave_DRRA_smooth} We show that, under ambiguity aversion, our formulation of \nameref{ax_DRAA} is equivalent to the smooth ambiguity functional (\citeauthor{klibanoff2005smooth}, \citeyear{klibanoff2005smooth}) satisfying DRRA. Formally, for $f \in \mathcal{F}$, the smooth certainty equivalent is $$V(f)=\phi^{-1}\left(\int \phi\Big(\int u(f) \textnormal{d} p \Big) \textnormal{d} \mu\right)$$ where we assume that $u(X)=[1,\infty)$,  $\mu$ is a countably additive probability measure over $\bigtriangleup^{\sigma}(S)$, $(S,\Sigma)$ is a measurable space and $\Sigma$ is non-trivial, and $\phi: [1, \infty) \to \R$ is continuous, strictly increasing, and concave, capturing ambiguity aversion. Proposition \ref{prop_smooth_draa} in Appendix \ref{appendix_smooth_DRRA} shows that, if $\phi$ is twice differentiable, then smooth ambiguity preferences satisfy \nameref{ax_DRAA} for all countably additive probability measures $\mu$ over $\bigtriangleup^{\sigma}(S)$  if and only if $\phi$ is DRRA.   \hfill $ \blacktriangleleft$
\end{remark}


Next, we provide an alternative representation for \nameref{ax_DRAA} relating this property, which is weaker than worst independence (see section \ref{discussion_independence}), to an act-dependent confidence preference representation. In their standard formulation (\citeauthor{chateauneuf2009ambiguity}, \citeyear{chateauneuf2009ambiguity}), confidence preferences are characterized by a confidence function $d:\bigtriangleup(S)\to [0,\infty]$ associating each probability model to its relative confidence level. In our case, as in Proposition \ref{dependent_variational_main}, due to changing ambiguity attitudes and the lack of convexity, the set of confidence functions varies for each act.

\begin{proposition} \label{dependent_confidence_main} Let $\succsim$ be a binary relation over $\mathcal{F}$. The following are equivalent
\begin{enumerate}[label=(\roman*)]
    \item $\succsim$ is MBA, admits a worst consequence, and exhibits \nameref{ax_DRAA}.
    \item There exist an affine function $u:X\to \mathbb{R}$, with $\min u(X)=0$, and, for all $f\in \mathcal{F}$, a set $D_{f}$ of upper semicontinuous and quasiconcave $d:\bigtriangleup(S)\to [0,\infty]$ such that
\[
f \succsim g \iff \max\limits_{d\in D_{f}}\min\limits_{p\in \bigtriangleup(S)}\frac{\int_{S}u(f)\textnormal{d}p}{d(p)} \geq \max\limits_{d\in D_{g}}\min\limits_{p\in \bigtriangleup(S)}\frac{\int_{S}u(g)\textnormal{d}p}{d(p)}
\]
where $\max_{d\in D_{x}}\min_{p\in \bigtriangleup(S)} u(x) / d(p)=u(x)$ for all $x\in X$, and $D_{f}\subseteq D_{g}$ for all $f,g\in \mathcal{F}$ with $u(f)\leq u(g)$.
\end{enumerate}
\end{proposition}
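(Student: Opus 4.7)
The plan is to mirror the argument for Proposition~\ref{dependent_variational_main}, replacing the additive/constant-superadditive structure with the multiplicative/positively-superhomogeneous one appropriate for relative attitudes. The starting point is Theorem~\ref{representation_IRAA}: under (i), we obtain an affine $u:X\to \R$ with $\min u(X)=0$ and a certainty equivalent $I:B_0(\Sigma,u(X))\to \R$ that is monotone, normalized, continuous, and positively superhomogeneous.

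For (ii)$\Rightarrow$(i), the functional $V(f)=\max_{d\in D_f}\min_{p}\int u(f)\,\textnormal{d}p/d(p)$ is monotone, normalized on constants, and continuous by standard envelope arguments that invoke the upper semicontinuity of each $d\in D_f$. Thus $\succsim$ is MBA, and $x_*$ with $u(x_*)=0$ is a worst consequence. For \nameref{ax_DRAA}, using $u(x_*)=0$, the acts $\alpha f+(1-\alpha)x_*$ and $\alpha y+(1-\alpha)x_*$ have utility profiles $\alpha u(f)$ and $\alpha u(y)$, and for $\alpha\leq \beta$ we have $\alpha u(f)\leq \beta u(f)$, so $D_{\alpha f+(1-\alpha)x_*}\subseteq D_{\beta f+(1-\beta)x_*}$. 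Combining this inclusion with the multiplicative scaling of the inner ratio functional yields the required preservation of preference.

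For (i)$\Rightarrow$(ii), define, for each $f\in \mathcal{F}$,
\[
D_f=\left\{d_g: g\in \mathcal{F},\ g(s)\preceq f(s)\ \forall s\in S\right\},\qquad d_g(p)=\int u(g)\,\textnormal{d}p\,/\,u(x_g),
\]
with the convention $d_g\equiv \infty$ when $u(x_g)=0$. Each $d_g$ is linear in $p$, hence upper semicontinuous and quasiconcave. The envelope identity $V(f)=I(u(f))$ is verified in two steps. Taking $g=f$ yields $\min_p \int u(f)\,\textnormal{d}p/d_f(p)=u(x_f)=I(u(f))$, so $V(f)\geq I(u(f))$. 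For the reverse inequality, for any $g$ with $g(s)\preceq f(s)$ for all $s$, set $\mu=\inf_s u(f)(s)/u(g)(s)\geq 1$; then $u(f)\geq \mu u(g)$ pointwise, and combining monotonicity of $I$ with positive superhomogeneity in its extended form $I(t\varphi)\geq tI(\varphi)$ for $t\geq 1$ (obtained by reparameterization $\lambda=1/t$) yields $u(x_f)\geq I(\mu u(g))\geq \mu u(x_g)$. Hence $u(x_f)/u(x_g)\geq \mu\geq \min_p \int u(f)\,\textnormal{d}p/\int u(g)\,\textnormal{d}p$, that is, $\min_p \int u(f)\,\textnormal{d}p/d_g(p)\leq I(u(f))$. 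The monotonicity $D_f\subseteq D_g$ whenever $u(f)\leq u(g)$ follows directly from the pointwise-dominance structure in the definition.

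The main obstacle is the chain $u(x_f)/u(x_g)\geq \min_p \int u(f)\,\textnormal{d}p/\int u(g)\,\textnormal{d}p$, which crucially exploits positive superhomogeneity extended to $t\geq 1$ together with the bound by evaluation at Dirac masses. Care is also needed for boundary cases—acts indifferent to $x_*$ and states on which $u(g)$ vanishes—which require interpreting the ratio via the $\infty$ convention so that $0/\infty=0$ drops out of the outer maximum without effect. Finally, establishing continuity and upper semicontinuity of the max-min envelope, needed to recover \nameref{ax_arch_cont}, follows the same topological pattern as in Proposition~\ref{dependent_variational_main}.
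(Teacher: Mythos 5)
Your proposal is correct, and it reaches the same objects as the paper but by a more direct route. The paper proves this proposition in two modular steps: Lemma \ref{lem:IRAA_pref} reduces (i) to the existence of a continuous, monotone, normalized, positively superhomogeneous certainty equivalent $I$, and then the abstract Proposition \ref{super_homo_rep_confidence_style} (via Proposition \ref{super_homo_rep_confidence_style_abs}) produces the confidence sets by first extending $I$ to $B_0(\Sigma,K_\infty)$, passing to its upper semicontinuous envelope, writing $I(\varphi)=\max_{\xi\leq\varphi,\,\xi\neq\mathbf{0}} I(\xi)\inf_{s:\xi(s)>0}\varphi(s)/\xi(s)$, and finally converting each superlinear functional into $\inf_p \int \cdot\,\textnormal{d}p/d(p)$ through the quasiconvex duality of \cite{cerreia2011uncertainty} and Lemma \ref{quasic_ph_dual_rep}. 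You instead write down the confidence functions explicitly as $d_g(p)=\int u(g)\,\textnormal{d}p/u(x_g)$ for $g$ statewise dominated by $f$ and verify the envelope identity by hand; since $\inf_{s:\,u(g)(s)>0}u(f)(s)/u(g)(s)=\inf_p \int u(f)\textnormal{d}p/\int u(g)\textnormal{d}p$ for simple acts, your $d_g$ coincide with the paper's $d_{P_\xi}$ at $\xi=u(g)$, and your key chain $u(x_f)\geq I(\mu u(g))\geq \mu\, u(x_g)$ is exactly the inequality driving the paper's Proposition \ref{super_homo_rep_confidence_style_abs}. What your route buys is self-containedness (no extension lemma, no semicontinuous envelopes, no appeal to Theorem 36 of \cite{cerreia2011uncertainty}) and an explicit form of $D_f$ analogous to the paper's remark following Proposition \ref{dependent_variational_main}; what the paper's route buys is reusability of the abstract functional-analytic statements across the variational and confidence cases. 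Two small points to tighten: the infimum defining $\mu$ must be taken over $\{s:u(g)(s)>0\}$ (with the case $u(g)\equiv 0$, hence $d_g\equiv\infty$, handled separately), and the continuity of the max--min functional in the (ii)$\Rightarrow$(i) direction deserves an explicit argument rather than a pointer, though the paper is no more explicit on that point than you are.
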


This result generalizes the existing representations for homothetic preferences mentioned above and mirrors the act-dependent variational model of Proposition \ref{dependent_variational_main}. Notice, positive superhomogeneity of the aggregators (Theorem \ref{representation_IRAA}) is captured by the set of confidence functions enlarging as the utility levels increase. Therefore, the decision maker displays a higher degree of relative confidence for higher
utility levels. 




The following example relates to the representation of \nameref{ax_DRAA} of Proposition \ref{dependent_confidence_main} by considering act-dependent entropic-confidence preferences with outside option.



\begin{example}[confidence preferences with outside option]  \label{confidence_benchmark} Following Example \ref{multiplier_benchmark}, we model a decision maker that can mitigate the uncertainty of each act $f \in \mathcal{F}$ by exerting effort whenever it is valuable to do so. Formally, to each act $f \in \mathcal{F}$ is associated the set
\[
D_f=\left\lbrace q\in Q:\int u(f)\textnormal{d} q\geq \theta \right\rbrace\cup\left\lbrace q_{\text{u}}\right\rbrace
\]
where $Q \subseteq \bigtriangleup(S)$ is a finite set of models, $S$ is finite, $\Sigma=2^S$, $\theta\in [0, \infty)$, and $q_{\text{u}}$ denotes the uniform distribution over $S$. Notice that, for $f, g \in \mathcal{F}$, $u(f) \leq u(g)$ implies $D_f \subseteq D_g$. 

We define act-dependent entropic-confidence preferences as follows \begin{equation} \label{act_dep_entropic_confidence}
 V(f)=\max\limits_{q\in D_f}\min\limits_{p\in \bigtriangleup(S)}\frac{\int u(f)\textnormal{d} p }{ \textnormal{exp}(-R(p\lVert q))}   
\end{equation}
where, as before, $R(\cdot\lVert\cdot)$ denotes the relative entropy. These preferences satisfy \nameref{ax_DRAA}; they are a special case of the act-dependent confidence preferences of Proposition \ref{dependent_confidence_main}.

We interpret the representation \eqref{act_dep_entropic_confidence} by following the narrative of the intrapersonal game between two conflicting selves. To evaluate the act $f$, Optimism selects the most favorable model among the ones justifying $f$ over the outside option. This choice, $q \in D_f$, determines the benchmark probability to compute the relative entropy $R(\cdot \lVert q)$. As a result, Pessimism faces a trade-off: the choice of the pessimistic model $p \in \bigtriangleup(S)$ determines the expected utility of the agent, but, at the same time, the more it diverges from $q$, the higher the value of the relative entropy $R(p \lVert q)$, and of the ratio in \eqref{act_dep_entropic_confidence}. \hfill $ \blacktriangleleft$
\end{example}

\section{A risk sharing application}
In this section, we apply our representations of changing ambiguity attitudes to an exchange economy with a single consumption good and no aggregate uncertainty. We want to investigate whether it is efficient for agents displaying general ambiguity preferences to take bets. We show that, under regularity requirements, the joint combination of \nameref{ax_DAAA} and increasing relative ambiguity aversion, when at least one of them is non-constant, implies \textit{strict pseudoconcavity at certainty}. Under this condition, as shown by \cite{ghirardato2018risk}, betting is inefficient if and only if the agents share at least one \textit{supporting probability},\footnote{We adopt the terminology of \cite{ghirardato2018risk}, although \cite{rigotti2008subjective} first introduced this notion under the name of \textit{subjective beliefs}.} i.e., the agents' beliefs that support the consumption bundle have a non-empty intersection.

\paragraph{Non-constant changing ambiguity attitudes.} We introduce non-constant absolute and relative ambiguity attitudes, which
follow from the notions already discussed with obvious adjustments.

\begin{axiom}[non-constant decreasing absolute ambiguity aversion]\label{ax_nc_DAAA}
For all $f\in\mathcal{F}$, $x,y,z\in X$, and $\alpha\in (0,1)$, if $y\succ x$, then \[
\alpha f+(1-\alpha)x\succsim \alpha z+(1-\alpha)x\implies \alpha f+(1-\alpha)y\succ \alpha z+(1-\alpha)y.
\] \end{axiom}

Recall, if it exists, $x_*\in X$ denotes the worst consequence, that is,  $y\succsim x_*$ for all $y\in X$.

\begin{axiom}[non-constant increasing relative ambiguity aversion]\label{ax_nc_IRAA}
For all $f\in\mathcal{F}$, $y\in X$, and $\alpha,\beta\in (0,1)$, if $\alpha > \beta$, then
\[
\alpha f+(1-\alpha)x_*\succsim \alpha y+(1-\alpha)x_*\implies \beta f+(1-\beta)x_*\succ \beta y+(1-\beta)x_*.
\]  \end{axiom}

Lemma \ref{lemma:combined} in Appendix \ref{appendix_risk_sharing} shows that, under MBA preferences, \nameref{ax_DAAA} and increasing relative ambiguity aversion hold simultaneously if and only if the certainty equivalent functional satisfies both constant superadditivity and positively subhomogeneity.\footnote{In the context of recursive ambiguity models, Lemma 1 in \cite{strzalecki2013temporal} shows that constant superadditive and positively subhomogeneous functionals represent preferences for early resolution of uncertainty.} Furthermore, whenever either one of the two behavioral properties holds non-constantly, the corresponding functional property holds strict.



\paragraph{The economy.} We model an exchange economy populated by finitely many agents $N:=\{1, \dots, N\}$. Let the state space $S$ be finite as well. Each agent, indexed by $i \in N$, has a utility function  $V_i:\mathbb{R}_{+}^S\rightarrow\mathbb{R}$ and is dispensed with an endowment $\omega_i \in \mathbb{R}_{+}^S$. For simplicity, we abstract away from risk attitudes and restrict the attention to the case of risk neutrality, i.e., $V_i(x)=x$ for every $x\in\mathbb{R}_+$. Finally, as in \cite{ghirardato2018risk}, this economy features no aggregate uncertainty, i.e., $\sum_i \omega_i=\bar{\omega}$ for some $\bar{\omega}>0$.

An \textit{allocation} is a vector $(f_1, \ldots, f_N)\in \mathbb{R}_{+}^{N\times S}$ where each $f_i $ is the consumption bundle assigned to agent $i \in N$ contingent on each state. We say that an allocation is \textit{feasible} if $\sum_i f_i=\bar{\omega}$; \textit{interior} if, for all $i \in N$, $f_i >0$; \textit{full-insurance} if, for all $i \in N$, $f_i= x_i$ for some $x_i \in \mathbb{R}_{+}$; \textit{Pareto-efficient} if it is feasible, and there is no other feasible allocation $(g_1, \ldots, g_N)$ such that $V_i(g_i) \geq V_i(f_i)$ for all $i \in N$, and $V_j(g_j) > V_j(f_j)$ for some $j \in N$; a \textit{competitive equilibrium with transfers} if it is feasible, and there exist prices $q\in \mathbb{R}_{++}^S$ and transfers $(T_i)_{i \in N}\in \mathbb{R}_{+}^N$  with $\sum_{i} T_i=0$ 
 such that $f_i \in \arg\max_{\{g\in \mathbb{R}_{+}^S:q \cdot g\leq q \cdot \omega_i +T_i\}}V_i(g)$ for all $i \in N$.
 

\paragraph{Supporting probabilities and strict pseudoconcavity.} For every $i \in N$, the set of \textit{supporting probabilities} at an allocation $f\in \R^S_+$ is $$ \pi_i(f)=\left\{p \in \bigtriangleup(S): \forall g \in \mathbb{R}_{+}^S, V_i(g) \geq V_i(f) \implies p \cdot g \geq p \cdot f\right\}.$$ Notice that, $\pi_i(f)$ can be interpreted as the set of (normalized) prices such that any bundle weakly preferred to $f$ is at least as costly as $f$.


Following \cite{ghirardato2018risk}, a function $V:\R^S_+\to \R$ is \textit{strictly pseudoconcave at} $f\in \R^S_{++}$ if, for all $g\neq f$,
\[
V(g)\geq V(f)\Longrightarrow \forall q\in \partial V(f), \>\>\> q\cdot (g-f)>0,
\]
where $\partial V(f)$ denotes the  Clarke subdifferential of $V$ at $f$ (\citeauthor{clarke}, \citeyear{clarke}).\footnote{For $n \geq 1$, an open subset $B \subseteq \R^n$, and a function $V : B \to \R$, the \textit{Clarke subdifferential} of $V$ at $b \in B$ is
$$\partial V(b) = \text{cl} \> \text{conv} \left\{
\lim_{k\to \infty} d^k:\exists (b^k) \to b \> \text{such that} \> d^k = \nabla V(b^k), \forall k\right\}, 
$$ where $\text{cl} \> \text{conv}$ denotes the closure of the convex hull, and $\nabla V(b^k)$ the gradient of $V$ at $b^k$.} 
Furthermore, $V$ is \textit{strict pseudoconcavity at certainty} if it is strictly pseudoconcave at $x$ for all $x \in \R_{++}$.    
\paragraph{Changing ambiguity attitudes and risk sharing.}  As in \cite{ghirardato2018risk}, we say that $V: \mathbb{R}_{+}^S \rightarrow \mathbb{R}$ is \textit{nice} if it is locally Lipschitz, strictly monotone and, for every $x \in \R_{++}$, continuously differentiable in a neighborhood of $x$ with $\nabla V(x)\neq 0$, where $\nabla V(x)$ denotes the gradient of $V$ at $x$. We are ready to state the main result of this section.


\begin{proposition}\label{risksharingth}
 If  $V:\mathbb{R}_{+}^S\rightarrow\mathbb{R}$ is normalized, nice, and satisfies constant superadditivity and positive subhomogeneity, with at least one being strict, then $V$ satisfies strict pseudoconcavity at certainty.
\end{proposition}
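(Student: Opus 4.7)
The plan is to fix an arbitrary $x \in \R_{++}$ and $g \in \R^S_+$ with $g \neq x$ (identifying $x$ with its constant bundle) and $V(g) \geq V(x) = x$ (by normalization), and show $\nabla V(x) \cdot (g - x) > 0$; since niceness gives $\partial V(x) = \{\nabla V(x)\}$, this is exactly what strict pseudoconcavity at $x$ requires. I introduce $F(\mu) := V((1-\mu) x + \mu g)$ for $\mu \in [0,1]$. Because $x \in \R^S_{++}$ and $V$ is $C^1$ near $x$, $F$ is differentiable at $\mu = 0$ with $F'(0) = \nabla V(x) \cdot (g - x)$, so the task reduces to proving $F'(0) > 0$.

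The first step is a ``concavity along segments with a constant endpoint'' inequality. For $\mu_0 \in (0, 1]$, set $g_{\mu_0} := (1-\mu_0) x + \mu_0 g$ and note the identity $g_{\lambda \mu_0} = \lambda g_{\mu_0} + (1-\lambda) x$ for $\lambda \in [0,1]$. Applying constant superadditivity (to $\varphi = \lambda g_{\mu_0} \geq 0$ with constant $k = (1-\lambda) x \geq 0$) and then positive subhomogeneity yields
\[
F(\lambda \mu_0) \geq V(\lambda g_{\mu_0}) + (1-\lambda) x \geq \lambda F(\mu_0) + (1-\lambda) F(0).
\]
Substituting $\mu := \lambda \mu_0$ and rearranging, the map $\mu \mapsto (F(\mu) - F(0))/\mu$ is non-increasing on $(0, 1]$. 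Passing to the limit as $\mu \downarrow 0$ produces the chord-to-derivative estimate $F'(0) \geq (F(\mu_0) - F(0))/\mu_0$ for every $\mu_0 \in (0, 1]$.

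It therefore suffices to exhibit a single $\mu_0 \in (0, 1)$ with $F(\mu_0) > x$. If $V(g) > x$, any such $\mu_0$ works, because the same two-step bound gives $F(\mu_0) \geq \mu_0 V(g) + (1-\mu_0) x > x$. If instead $V(g) = x$, normalization and strict monotonicity force $g \neq 0$, so for any $\mu_0 \in (0, 1)$ the chain $F(\mu_0) \geq V(\mu_0 g) + (1-\mu_0) x \geq \mu_0 V(g) + (1-\mu_0) x = x$ must contain at least one strict inequality, since by hypothesis one of constant superadditivity or positive subhomogeneity is strict and the required nondegeneracy is satisfied ($\mu_0 g \neq 0$ and $(1-\mu_0) x > 0$). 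In either case $F'(0) > 0$, completing the argument.

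The principal obstacle I expect is exactly the subcase $V(g) = V(x)$: strictness of a single axiom delivers only a qualitative gap $F(\mu_0) > x$, with no a priori linear-in-$\mu_0$ lower bound, so a naive difference-quotient estimate at $\mu_0 \downarrow 0$ does not close the argument. The chord-monotonicity of the second step is the crucial device that converts any single strict pointwise gap into the strict derivative inequality demanded by strict pseudoconcavity at certainty.
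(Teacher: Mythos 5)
Your proof is correct, and its substantive core is identical to the paper's: the two-step chain
$V(\lambda g+(1-\lambda)x)\geq V(\lambda g)+(1-\lambda)x\geq \lambda V(g)+(1-\lambda)V(x)$, obtained from constant superadditivity followed by positive subhomogeneity and normalization, is exactly the ``concavity at certainty'' inequality the paper establishes, and your bookkeeping of where strictness enters (first inequality needs $\lambda g\neq 0$ and $(1-\lambda)x>0$; second needs $g\neq 0$ and $\lambda\in(0,1)$; the case $V(g)>V(x)$ needs no strictness at all) matches the paper's claim that at least one inequality is strict. Where you genuinely diverge is the last step: the paper stops at the concavity-at-certainty inequality and invokes Proposition 2 of Ghirardato--Siniscalchi to pass from it to strict pseudoconcavity at certainty, whereas you prove that implication from scratch by noting that the same inequality, applied along the segment, makes the difference quotient $\mu\mapsto (F(\mu)-F(0))/\mu$ non-increasing, so that a single strict gap $F(\mu_0)>F(0)$ propagates down to $F'(0)=\nabla V(x)\cdot(g-x)>0$. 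This buys self-containedness (no reliance on the cited result, and an explicit use of the fact that niceness reduces the Clarke subdifferential to the singleton $\{\nabla V(x)\}$) at the cost of a longer argument; the paper's route is shorter but delegates precisely the step you identify as the ``principal obstacle.''
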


The proof follows from observing that the joint combination of constant superadditivity and positive subhomogeneity implies that $V$ satisfies \virgo{concavity at certainty,} that is, for every $f \in \mathcal{F}$, $x \in \R_{++}$, and $\alpha \in (0,1)$, $$V(\alpha f + (1-\alpha) x) \geq \alpha V(f) + (1-\alpha) V(x).$$

By applying Proposition \ref{risksharingth} and Theorem 3 in \cite{ghirardato2018risk}, we retrieve the following equivalences, which generalize the risk sharing result of \cite{rigotti2008subjective}. Thus, we establish a connection between risk sharing and our general representations of changing ambiguity attitudes.

\begin{corollary}\label{corollary_NiceAllocations} For each $i \in N$, assume $V_i$ is nice, and satisfies constant superadditivity and positive subhomogeneity, with at least one being strict. The following are equivalent
\begin{enumerate}
\item[(i)]  Every Pareto-efficient allocation is a full-insurance allocation.
\item[(ii)] Every feasible, full-insurance allocation is Pareto-efficient.
\item[(iii)]  For every feasible, full-insurance allocation $(x_1, \ldots, x_N)$, $\>\> \bigcap_{i\in N} \pi_i( x_i) \neq \emptyset.$
\end{enumerate}
Furthermore, under the above equivalent conditions, every interior, feasible, full-insurance allocation is a competitive equilibrium with transfers. 
\end{corollary}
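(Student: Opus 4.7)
The plan is to reduce the corollary to a direct application of Proposition \ref{risksharingth} followed by Theorem 3 of \cite{ghirardato2018risk}. First, I would verify that every hypothesis of Proposition \ref{risksharingth} is met by each agent's utility $V_i$: niceness is directly assumed; normalization $V_i(x)=x$ on constant consumption bundles is supplied by the stipulated risk neutrality $V_i(x)=x$ for $x\in\mathbb{R}_+$; and constant superadditivity together with positive subhomogeneity, with at least one being strict, is exactly the extra hypothesis of the corollary. Applying Proposition \ref{risksharingth} to each $V_i$ then yields strict pseudoconcavity at certainty for every agent.

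With strict pseudoconcavity at certainty in hand for all agents, together with niceness, I would invoke Theorem 3 of \cite{ghirardato2018risk}, which delivers precisely the three-way equivalence among: absence of betting on Pareto-efficient allocations (every Pareto-efficient allocation is full-insurance), Pareto-efficiency of every feasible full-insurance allocation, and non-emptiness of $\bigcap_{i\in N}\pi_i(x_i)$ at each feasible full-insurance allocation $(x_1,\dots,x_N)$. Once one checks the terminological dictionary confirming that our supporting probabilities coincide with the \emph{subjective beliefs} in \cite{ghirardato2018risk}, the equivalences (i)$\Leftrightarrow$(ii)$\Leftrightarrow$(iii) transfer verbatim.

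For the final clause on competitive equilibria with transfers, I would argue concretely. Fix an interior, feasible, full-insurance allocation $(x_1,\dots,x_N)$. Under condition (iii), pick some $q\in\bigcap_{i\in N}\pi_i(x_i)$; strict monotonicity embedded in niceness forces $q\in\mathbb{R}_{++}^S$, so it is a valid price vector. Define transfers $T_i := q\cdot x_i - q\cdot \omega_i$, which sum to zero by feasibility and no aggregate uncertainty. By definition of $\pi_i(x_i)$, any bundle $g\in\mathbb{R}_+^S$ with $V_i(g)\geq V_i(x_i)$ satisfies $q\cdot g\geq q\cdot x_i = q\cdot \omega_i + T_i$; strict pseudoconcavity at $x_i$ upgrades this to a strict inequality whenever $g\neq x_i$. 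Hence $x_i$ is the unique maximizer of $V_i$ on agent $i$'s transfer-augmented budget set, establishing the competitive-equilibrium statement.

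The main obstacle is organizational rather than analytical, since the substantive work is packaged inside Proposition \ref{risksharingth} and \cite{ghirardato2018risk}'s Theorem 3. The delicate points are (a) confirming that our $\pi_i(\cdot)$ matches the notion used in \cite{ghirardato2018risk}, and (b) ensuring $\bigcap_i \pi_i(x_i)\subseteq \mathbb{R}_{++}^S$ so that the common supporting probability furnishes a legitimate positive price vector; if strict monotonicity alone does not suffice, the differentiability supplied by niceness gives $\pi_i(x_i) = \{\nabla V_i(x_i)/\|\nabla V_i(x_i)\|_1\}$ (up to normalization) near certainty, which can be used to locate $q$ in $\mathbb{R}_{++}^S$.
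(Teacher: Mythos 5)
Your proposal is correct and takes essentially the same route as the paper, whose proof is exactly the one-line combination of Proposition \ref{risksharingth} (applicable since risk neutrality supplies normalization) with Theorem 3 of \cite{ghirardato2018risk}. Your hand-crafted verification of the competitive-equilibrium clause is sound but not needed, as that clause is already part of the cited Theorem 3.
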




In contrast to this result, \nameref{ax_DRAA} appears at odds with risk sharing. Intuitively, by Theorem \ref{representation_IRAA}, this property implies that preferences are \virgo{convex at zero,} that is, for every $f \in \mathcal{F}$ and $\alpha\in (0,1)$, $$V(\alpha f+(1-\alpha)0)\leq \alpha V(f)+(1-\alpha)V(0).$$  The following example considers an economy where each $V_i$ is nice but satisfies positive superhomogeneity as well as constant additivity. We construct a full-insurance allocation that is not Pareto-efficient. This suggests that agents displaying \nameref{ax_DRAA} may be willing to bet even if they share the same beliefs.

\begin{example} \label{DRAA_betting}
Let $S=\{s_1, s_2\}$, $N=\{1,2\}$, and $\omega_1=\omega_2=(1/2, 1/2)$. Define $V_1=V_2:=V$ as 
$$V(f)=\max_{H\in\Psi} H(f),$$
where $\Psi=\{H_1,H_2,H_3\}$, and \begin{align*}
H_1(f)&=1/9 \cdot f(s_1)+8/9 \cdot f(s_2)-0.1 \\
H_2(f)&=\frac{1}{10}\log\left(\frac{1}{2}\left(e^{10\left(\frac{1}{4}f(s_1) + \frac{3}{4}f(s_2)\right)}\right) + \frac{1}{2}\left(e^{10\left(\frac{3}{4}f(s_1) + \frac{1}{4}f(s_2)\right)}\right)\right)\\
H_3(f)&=8/9 \cdot f(s_1) + 1/9 \cdot f(s_2)-0.1.
\end{align*}
\noindent
Since $V_1=V_2$ the agents share the same beliefs at the initial full-insurance allocation. However, we show that such allocation is not Pareto-efficient which implies that Corollary \ref{corollary_NiceAllocations} does not hold under non-constant \nameref{ax_DRAA}. To this end, notice that, $V$ is monotone, normalized, and continuous. Therefore, by Proposition 1 in \cite{OmnibusSimone}, $V$ represents MBA preferences. Furthermore, since each $H\in\Psi$ is positively superhomogeneous, $V$ is positively superhomogeneous as well, and, by Theorem \ref{representation_IRAA}, \nameref{ax_DRAA} holds. Finally, $V$ is also nice.



The initial full-insurance endowment is not Pareto-efficient. Indeed, the feasible allocation $((0.4,0.6),(0.6,0.4))$ achieves a strictly higher level of utility for both agents:\begin{align*}
H_1((1/2,1/2)) = 0.4,\quad
H_2((1/2,1/2)) =  0.5, \quad
H_3((1/2,1/2)) = 0.4,
\end{align*}
while \begin{align*}
H_1((0.4,0.6)) &= H_3((0.6,0.4)) =  0.4 \overline{7}\\
H_2((0.4,0.6)) &= H_2((0.6,0.4)) = 0.512 \\
H_3((0.4,0.6)) &= H_1((0.6,0.4)) =0.3 \overline{2},
\end{align*} which implies that $V_1((0.4,0.6))=V_2((0.6,0.4))=0.512> V((1/2,1/2))=0.5.$ \hfill $ \blacktriangleleft$
\end{example}

\appendix

\section*{Appendix}
\addcontentsline{toc}{section}{Appendices}
\renewcommand{\thesubsection}{\Alph{subsection}}

The Appendix is organized as follows. In Appendix \ref{app_A}, we present the proofs of some mathematical results we later employ in Appendix \ref{app_B} to prove the results in the main text. 

\subsection{Mathematical appendix} \label{app_A}

\subsubsection{Toolkit lemmas}

In this part of the appendix we provide characterizations of constant superadditivity which we later use to prove our results in the main text. 

\begin{lemma}\label{banale}
Fix $K\subseteq \R$. A map $I:\BK\to \R$ is constant superadditive if and only if $I(\varphi+k)\leq I(\varphi)+k$ for all $\varphi\in \BK$ and $k\leq 0$ such that $\varphi+k\in \BK$.
\end{lemma}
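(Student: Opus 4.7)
The plan is a straightforward change-of-variable argument, since the two conditions differ only by which side of zero the scalar $k$ lies on and by the direction of the inequality. I will establish both directions by substituting $\varphi \mapsto \varphi + k$ and $k \mapsto -k$, which converts a statement about nonpositive shifts into one about nonnegative shifts, and vice versa.

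For the forward direction, I assume constant superadditivity, i.e., $I(\psi + h) \geq I(\psi) + h$ for all $\psi \in \BK$ and $h \geq 0$ with $\psi + h \in \BK$. Now fix $\varphi \in \BK$ and $k \leq 0$ with $\varphi + k \in \BK$. The idea is to set $\psi := \varphi + k \in \BK$ and $h := -k \geq 0$, observing that $\psi + h = \varphi \in \BK$, so the hypothesis is applicable. It yields
\[
I(\varphi) = I(\psi + h) \geq I(\psi) + h = I(\varphi + k) - k,
\]
which rearranges exactly to $I(\varphi + k) \leq I(\varphi) + k$, as required.

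For the converse, I assume the inequality $I(\varphi + k) \leq I(\varphi) + k$ for all $\varphi \in \BK$ and $k \leq 0$ with $\varphi + k \in \BK$, and I fix $\psi \in \BK$ and $h \geq 0$ with $\psi + h \in \BK$. Applying the hypothesis to $\varphi := \psi + h \in \BK$ and $k := -h \leq 0$ (so that $\varphi + k = \psi \in \BK$) produces
\[
I(\psi) = I(\varphi + k) \leq I(\varphi) + k = I(\psi + h) - h,
\]
which rearranges to $I(\psi + h) \geq I(\psi) + h$, establishing constant superadditivity.

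There is essentially no obstacle here: the only thing to verify carefully is that the substitution preserves membership in $\BK$ in both directions, which is immediate since the substitution simply swaps the roles of $\varphi$ and $\varphi+k$ (both of which are assumed to lie in $\BK$). No further topological or algebraic structure of $\BK$ is needed beyond the fact that the pair $\{\varphi,\varphi+k\}$ is symmetric under the transformation, so the lemma is a one-line reformulation rather than a substantive result.
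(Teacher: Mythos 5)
Your proof is correct and follows essentially the same route as the paper's: the paper writes $I(\varphi)=I(\varphi+k-k)\geq I(\varphi+k)-k$, which is exactly your substitution $\psi=\varphi+k$, $h=-k$, and handles the converse analogously. Nothing is missing.
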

\begin{proof}
Suppose $I$ is constant superadditive, $\varphi\in \BK$, and $k\leq 0$ such that $\varphi+k\in \BK$. It follows that
\[
I(\varphi)=I(\varphi+k-k)\geq I(\varphi+k)-k.
\]
The converse is proved in an analogous fashion.
\end{proof}

\begin{lemma}\label{setticemia}
Fix a convex $K\subseteq \R$ with $0\in \textnormal{int}K$ and a continuous map $I:\BK\to \R$. We have that $I$ is constant superadditive if and only if 
\begin{equation}\label{setticemia_cond1}
I(\alpha \varphi+(1-\alpha)k)\geq I(\alpha\varphi)+(1-\alpha)k
\end{equation}
for all $\alpha\in [0,1]$, $\varphi\in \BK$, and $k\in K\cap \mathbb{R}_+$.
\end{lemma}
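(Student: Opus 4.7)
My plan is to prove both directions of the equivalence separately.

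The forward direction $(\Rightarrow)$ is direct. Assuming $I$ is constant superadditive, I fix $\alpha\in[0,1]$, $\varphi\in\BK$, and $k\in K\cap\mathbb{R}_+$; both $\alpha\varphi$ and $\alpha\varphi+(1-\alpha)k$ lie in $\BK$ by convexity of $K$ together with $0\in K$. Applying constant superadditivity to $\alpha\varphi$ with the nonnegative shift $(1-\alpha)k$ then yields \eqref{setticemia_cond1}.

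For the converse $(\Leftarrow)$, I fix $\psi\in\BK$ and $c\geq 0$ with $\psi+c\in\BK$ and prove $I(\psi+c)\geq I(\psi)+c$ in three steps. Step one extracts from \eqref{setticemia_cond1} a ``single-application'' local lemma via the reparametrization $\tilde\varphi := \varphi/\alpha$ (so $\alpha\tilde\varphi = \varphi$): I obtain $I(\varphi+c')\geq I(\varphi)+c'$ whenever $\varphi\in\BK$ admits some $\alpha\in(0,1)$ with $\varphi/\alpha\in\BK$ and $c'\in[0,(1-\alpha)K_+]$, where $K_+ := \sup(K\cap\mathbb{R}_+)$ is strictly positive by the hypothesis $0\in\textnormal{int}K$. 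Indeed, after choosing $\alpha'\in[\alpha,1)$ and $k\in K\cap\mathbb{R}_+$ so that $(1-\alpha')k=c'$, convexity of $K$ together with $0\in K$ gives $\varphi/\alpha'\in\BK$, and \eqref{setticemia_cond1} applied to $\tilde\varphi=\varphi/\alpha'$ and $k$ delivers the claim.

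Step two treats the interior case where $\psi$ takes values in $\textnormal{int}K$. I define $J(t):=I(\psi+t)$ on $[0,c]$ and $S:=\{t\in[0,c]:J(t)\geq J(0)+t\}$. The set $S$ is closed (by continuity of $I$) and contains $0$; for $t\in S$ with $t<c$, the function $\psi+t$ still lies in $B_0(\Sigma,\textnormal{int}K)$, since $\psi(s)>\inf K$ and $\psi(s)+c\in K$ forces $\psi(s)+t<\sup K$ for $t<c$. Hence the single-application lemma of Step one yields $J(t+h)\geq J(t)+h\geq J(0)+t+h$ for all sufficiently small $h>0$, so $S$ contains a right-neighborhood of each of its points strictly less than $c$. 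A standard connectedness argument (let $t^*=\sup S$; if $t^*<c$ the right-neighborhood property contradicts the supremum) then forces $S=[0,c]$, whence $c\in S$. Step three removes the interior assumption by approximation: I set $\psi_n := (1-1/n)\psi+(1/n)\bar x$ for $\bar x\in\textnormal{int}K$ chosen so that $\bar x+c\in K$; this places $\psi_n\in B_0(\Sigma,\textnormal{int}K)$ and $\psi_n+c\in\BK$, so Step two gives $I(\psi_n+c)\geq I(\psi_n)+c$, and passing to the limit via continuity of $I$ concludes. The degenerate case $c=\textnormal{diam}(K)$ is handled by additionally approximating $c$ from below.

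The main obstacle is Step two: the single-application lemma only delivers a local form of constant superadditivity whose allowable shift $(1-\alpha_*(\psi+t))K_+$ may shrink to zero as $\psi+t$ approaches the boundary of $\BK$. The connectedness/bootstrapping argument is what upgrades this local property to a global one, and both the hypothesis $0\in\textnormal{int}K$ (ensuring $K_+>0$) and continuity of $I$ (used in Steps two and three) are essential for the strategy to work.
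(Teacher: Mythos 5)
Your proof is correct, and its engine is the same as the paper's: both directions exploit the substitution $\tilde\varphi=\varphi/\alpha'$ so that \eqref{setticemia_cond1} delivers $I(\varphi+c')\geq I(\varphi)+c'$ for small shifts $c'=(1-\alpha')k$, and then accumulate these small shifts into the full increment. Where you diverge is in how the accumulation is organized. The paper first reduces, via continuity of $I$, to the case $K=(a,b)$ open, then picks a \emph{single} $\alpha\in(0,1)$ with both $\varphi/\alpha$ and $(\varphi+k)/\alpha$ in $\BK$ (so that every intermediate point $(\varphi+\tfrac{m}{n}k)/\alpha$ lies in $\BK$ by convexity), chooses $n$ with $\tfrac{k/n}{1-\alpha}\in K$, and telescopes the inequality over $n$ equal steps --- a short, explicitly finite argument. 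You instead run a continuous-induction bootstrap: the set $S=\{t:J(t)\geq J(0)+t\}$ is closed, contains $0$, and is open to the right at every $t<c$, hence equals $[0,c]$. This buys you robustness --- the admissible step size is allowed to shrink as $\psi+t$ drifts toward the boundary of $\BK$, so you never need a uniform $\alpha$ --- at the price of the sup/closedness argument and the extra Step three. Your Step three is actually a point in your favor: the paper's ``without loss of generality $K=(a,b)$'' is exactly the interior reduction that you carry out explicitly via the convex perturbation $\psi_n=(1-1/n)\psi+(1/n)\bar x$ and continuity, including the degenerate case $c=\mathrm{diam}(K)$. Two cosmetic remarks: the endpoint of your interval $[0,(1-\alpha)K_+]$ in Step one need not be attained when $\sup(K\cap\R_+)$ is not achieved (harmless, since only small $c'$ are ever used), and in Step two you should say a word on why $\psi+t$ taking values in $\mathrm{int}\,K$ together with $0\in\mathrm{int}\,K$ guarantees the existence of some $\alpha\in(0,1)$ with $(\psi+t)/\alpha\in\BK$ --- it does, by finiteness of the range, but the verification is currently implicit.
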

\begin{proof}
The proof follows from a minor modification of Lemma 5 in \cite{Niveloidsextension}, we report it here for completeness. If $I$ is constant superadditive, then it is straightforward to see that \eqref{setticemia_cond1} must hold. As for the converse, first notice that, since $I$ is continuous, we can assume that $K=(a,b)$ for some $a,b\in \bar{\R}$ without loss of generality. Let $\varphi\in B_0(\Sigma,K)$ and $k\in \mathbb{R}_{++}$ be such that $\varphi+k\in \BK$. Then, we have that $a<\inf \varphi,\sup(\varphi+k)<b$ and there exists $\alpha\in (0,1)$ such that $\varphi/\alpha,(\varphi+k)/\alpha\in \BK$. Since $k>0$ and $0\in \textnormal{int}K=(a,b)$, there exists $n\geq 2$ such that
\[
\frac{\frac{1}{n}k}{1-\alpha}\in K.
\]
Then, by convexity of $K$,
\[
\frac{\varphi+\frac{m}{n}k}{\alpha}=\frac{m}{n}\frac{\varphi+k}{\alpha}+\left(1-\frac{m}{n}\right)\frac{\varphi}{\alpha}\in \BK
\]
for all $m=0,\ldots,n-1$. For all $m=0,\ldots,n-1$, by \eqref{setticemia_cond1}, it follows that
\[
I\left(\varphi+\frac{m+1}{n}k\right)=I\left(\alpha\frac{\varphi+\frac{m}{n}k}{\alpha}+(1-\alpha)\frac{\frac{1}{n}k}{1-\alpha}\right)\geq I\left(\varphi+\frac{m}{n}k\right)+\frac{k}{n}.
\]
Moreover, this yields,
\[
I\left(\varphi+k\right)-I(\varphi)=\sum_{m=0}^{n-1}\left(I\left(\varphi+\frac{m+1}{n}k\right)-I\left(\varphi+\frac{m}{n}k\right)\right)\geq \sum_{m=0}^{n-1}\frac{k}{n}=k.
\]
Thus, $I$ is constant superadditive.
\end{proof}

For all $T:\BK\to \R$,  $\varphi \in \BK$, define $\bar{T}: B_0(\Sigma,-K)\to \R$ as $\bar{T}(\varphi)=-T(-\varphi)$.

\begin{lemma}\label{banalino}
Fix $K\subseteq \R$ and a map $I:\BK\to \R$. Then,  $I$ is constant superadditive if and only if $\bar{I}$ is constant superadditive.
\end{lemma}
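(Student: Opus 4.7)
The plan is to reduce the equivalence to a change of variables plus a single appeal to Lemma \ref{banale}. The definition $\bar{I}(\psi)=-I(-\psi)$ means that constant superadditivity of $\bar{I}$ translates, after substituting $\varphi=-\psi\in\BK$ and $k'=-k\leq 0$, into an inequality of $I$ for nonpositive constants, which is exactly the condition Lemma \ref{banale} says is equivalent to constant superadditivity of $I$.

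Concretely, I would argue as follows. Assume $\bar{I}$ is constant superadditive. Fix $\psi\in B_0(\Sigma,-K)$ and $k\geq 0$ with $\psi+k\in B_0(\Sigma,-K)$. The defining inequality $\bar{I}(\psi+k)\geq \bar{I}(\psi)+k$ unwinds, by definition of $\bar{I}$, to $I(-\psi-k)\leq I(-\psi)-k$. Setting $\varphi:=-\psi\in \BK$ and $k':=-k\leq 0$, this reads $I(\varphi+k')\leq I(\varphi)+k'$ for all $\varphi\in\BK$ and $k'\leq 0$ with $\varphi+k'\in\BK$. By Lemma \ref{banale}, this is equivalent to $I$ being constant superadditive.

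For the converse, I would run the same substitutions backwards: if $I$ is constant superadditive, then by Lemma \ref{banale} the inequality $I(\varphi+k')\leq I(\varphi)+k'$ holds for all $\varphi\in\BK$ and all $k'\leq 0$ with $\varphi+k'\in\BK$; setting $\psi=-\varphi$ and $k=-k'\geq 0$ and applying the definition of $\bar{I}$ yields $\bar{I}(\psi+k)\geq\bar{I}(\psi)+k$ on the full domain of $\bar{I}$.

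There is no real obstacle here: the work is entirely bookkeeping with signs and domains, and the only substantive input is the already-established symmetry between the $k\geq 0$ and $k\leq 0$ formulations of constant superadditivity given by Lemma \ref{banale}. I would keep the write-up short — essentially the two-line chain of equivalences above in each direction — and be careful only to note that $\psi+k\in B_0(\Sigma,-K)$ is exactly equivalent to $\varphi+k'\in\BK$ under the substitution, so the quantifiers match up correctly.
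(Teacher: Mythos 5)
Your proof is correct and follows essentially the same route as the paper: unwind the definition of $\bar{I}$ via the sign-flip substitution and invoke Lemma \ref{banale} to pass between the $k\geq 0$ and $k\leq 0$ formulations of constant superadditivity. The only cosmetic difference is that the paper applies Lemma \ref{banale} to $\bar{I}$ and disposes of the converse via the involution $I=\bar{\bar{I}}$, whereas you apply it to $I$ and run the substitution backwards explicitly; these are the same argument.
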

\begin{proof}
Suppose $I$ is constant superadditive. If $\varphi\in B_0(\Sigma,-K)$, $k\leq 0$, and $\varphi+k\in B_0(\Sigma,-K)$, then
\[
\bar{I}(\varphi+k)=-I(-\varphi-k)\leq -I(-\varphi)+k=\bar{I}(\varphi)+k.
\]
Since $k\leq 0$, by Lemma \ref{banale}, $\bar{I}$ is constant superadditive. The converse follows by $I=\bar{\bar{I}}$.
\end{proof}

\subsubsection{Extension results}

Here we provide some instrumental extension results. For each interval $K\subseteq \R$, we denote by $K_{\infty}:=K\cup [\sup K,\infty)$.

\begin{lemma}\label{const_superadd_extension}
If $K$ is an interval and $T:\BK\to \R$ is constant superadditive, monotone, and normalized, then the map $\tilde{T}:B_0(\Sigma,K_{\infty})\to \bar{\R}$ defined as
\[
\tilde{T}(\psi)=\sup\left\lbrace T(\varphi)+m:\varphi\in \BK,\ m\geq 0,\ \varphi+m\leq \psi \right\rbrace
\]
for all $\psi\in B_0(\Sigma,K_{\infty})$ is a real-valued, constant superadditive, monotone, and normalized extension of $T$ to $B_0(\Sigma,K_{\infty})$.
\end{lemma}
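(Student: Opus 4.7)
The plan is to verify in turn that $\tilde{T}$ is real-valued, extends $T$, is monotone, normalized, and constant superadditive. Four of these follow by direct manipulation of the defining supremum; the conceptually crucial step, where the hypothesis of constant superadditivity of $T$ is genuinely used, is showing that $\tilde{T}$ agrees with $T$ on $\BK$.

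For real-valuedness I first establish that the defining set is non-empty for every $\psi\in B_0(\Sigma,K_\infty)$. Since $\psi$ is simple, $\min \psi$ is attained in $K_\infty$, and, because $K$ is an interval, one can pick $c\in K$ with $c\leq \min \psi$ (this works also when $\min \psi\in [\sup K,\infty)\setminus K$, by taking any $c\in K$). Setting $\varphi\equiv c$ and $m=\min \psi -c\geq 0$ gives $\varphi+m=\min \psi\leq \psi$, whence $\tilde{T}(\psi)\geq T(c)+m=\min \psi>-\infty$ by normalization. For the upper bound, if $(\varphi,m)$ is any feasible pair, monotonicity and normalization of $T$ yield $T(\varphi)\leq T(\sup \varphi)=\sup \varphi$, and $\sup \varphi+m=\sup(\varphi+m)\leq \sup \psi$, so $T(\varphi)+m\leq \sup \psi<\infty$, which gives $\tilde{T}(\psi)\leq \sup \psi$.

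Next, I show $\tilde{T}(\psi)=T(\psi)$ for $\psi\in \BK$. The inequality $\tilde{T}(\psi)\geq T(\psi)$ is immediate with $(\varphi,m)=(\psi,0)$. For the reverse, fix any feasible $(\varphi,m)$. Because $K$ is an interval and $\varphi+m$ takes values in $[\min \varphi,\max \psi]\subseteq K$, we have $\varphi+m\in \BK$. Constant superadditivity of $T$ then yields $T(\varphi+m)\geq T(\varphi)+m$, and monotonicity of $T$ gives $T(\psi)\geq T(\varphi+m)$, whence $T(\varphi)+m\leq T(\psi)$. Passing to the supremum proves $\tilde{T}(\psi)\leq T(\psi)$. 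I expect this step to be the main obstacle, since it is the unique place that simultaneously relies on the interval structure of $K$ (to keep $\varphi+m$ inside $\BK$) and on the constant superadditivity hypothesis.

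The remaining three properties follow from the structure of the feasible set. Monotonicity: if $\psi_1\leq \psi_2$, every feasible $(\varphi,m)$ for $\psi_1$ is feasible for $\psi_2$, so $\tilde{T}(\psi_1)\leq \tilde{T}(\psi_2)$. Normalization at a constant $k\in K_\infty$: the upper bound $\tilde{T}(k)\leq k$ was already shown, while the construction in the first paragraph (applied to $\psi\equiv k$) gives $\tilde{T}(k)\geq k$. Constant superadditivity: given $\psi\in B_0(\Sigma,K_\infty)$, $k\geq 0$ with $\psi+k\in B_0(\Sigma,K_\infty)$, and any feasible $(\varphi,m)$ for $\psi$, the pair $(\varphi,m+k)$ satisfies $\varphi+(m+k)\leq \psi+k$ with $m+k\geq 0$, so it is feasible for $\psi+k$, and $T(\varphi)+(m+k)=(T(\varphi)+m)+k$; taking the supremum over $(\varphi,m)$ yields $\tilde{T}(\psi+k)\geq \tilde{T}(\psi)+k$.
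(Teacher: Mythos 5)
Your proposal is correct and follows essentially the same route as the paper's proof: extension via $T(\varphi)+m\leq T(\varphi+m)\leq T(\psi)$ (using that $\varphi+m$ stays in $\BK$ because $K$ is an interval), monotonicity and constant superadditivity by comparing feasible sets, and real-valuedness via the bound $T(\varphi)+m\leq \sup\varphi+m\leq\sup\psi$. Your treatment is in fact slightly more careful than the paper's on one point--you explicitly verify that the feasible set is non-empty, so that $\tilde{T}(\psi)>-\infty$ also for $\psi\notin\BK$--and your superadditivity step (shifting $m$ to $m+k$) is a marginally more direct version of the paper's rewriting of the supremum in terms of $\inf_{s}\{\psi(s)-\varphi(s)\}$.
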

\begin{proof} 
Monotonicity is immediate as the larger $\psi$ gets the larger is the set over which we are taking the supremum. To prove that $\tilde{T}$ extends $T$ notice that for all $\psi\in \BK$, we have $\tilde{T}(\psi)\geq T(\psi)$. Conversely, notice that for all $\psi,\varphi\in \BK$, and $m\geq 0$, with $\varphi+m\leq \psi$, we have that $\varphi+m\in \BK$ and by monotonicity and constant superadditivity of $T$, we have
\[
T(\varphi)+m\leq T(\varphi+m)\leq T(\psi).
\]
By the arbitrariness of $\varphi$ and $m$, we have $\tilde{T}(\psi)=T(\psi)$. Thus, $\tilde{T}$ is a monotonic extension of $T$. Now we prove that $\tilde{T}$ satisfies constant superadditivity. To this end notice first that
\[
\tilde{T}(\psi)=\sup\limits_{\varphi\in \BK,\ \varphi\leq \psi}\left\lbrace T(\varphi)+\inf\limits_{s\in S}\left\lbrace \psi(s)-\varphi(s) \right\rbrace\right\rbrace
\]
for all $\psi\in B_0(\Sigma,K_{\infty})$. Let $\psi\in B_0(\Sigma,K_{\infty})$ and $k\geq 0$,
\begin{align*}
\tilde{T}(\psi+k)&=\sup\limits_{\varphi\in \BK,\ \varphi\leq \psi+k}\left\lbrace T(\varphi)+\inf\limits_{s\in S}\left\lbrace \psi(s)+k-\varphi(s) \right\rbrace\right\rbrace\\
&\geq \sup\limits_{\varphi\in \BK,\ \varphi\leq \psi}\left\lbrace T(\varphi)+\inf\limits_{s\in S}\left\lbrace \psi(s)+k-\varphi(s) \right\rbrace\right\rbrace\\
&=\sup\limits_{\varphi\in \BK,\ \varphi\leq \psi}\left\lbrace T(\varphi)+\inf\limits_{s\in S}\left\lbrace \psi(s)-\varphi(s) \right\rbrace\right\rbrace+k\\
&=\tilde{T}(\psi)+k.
\end{align*}
Next we prove that $\tilde{T}$ is a real-valued map. If $\psi\in \BK$, then $\tilde{T}(\psi)=T(\psi)\in \R$. If $\psi\in B_0(\Sigma,K_{\infty})\setminus \BK$, then for all $\varphi\in \BK$ and $m\geq 0$ with $\varphi+m\leq \psi$, we have that
\[
T(\varphi)+m\leq T(\sup \varphi)+m=\sup\varphi+m\leq \sup \psi<\infty
\]
the equality follows from the normalization of $T$ and the fact that $\varphi$ is a simple function.

To conclude we show that $\tilde{T}$ is also normalized. If $k\in K$, then $\tilde{T}(k)=T(k)=k$. If $k\in K_{\infty}\setminus K$, then $k\geq \sup K$. Then, there exists $t\in K$ and $m\geq 0$ such that $t+m=k$, and hence $\tilde{T}(k)=T(t)+m=t+m=k$. Thus, $\tilde{T}$ is normalized.
\end{proof}

\begin{lemma}\label{positive_superhomo_extension}
If $K$ is an interval with $\inf K=0$ and $T:\BK\to \R$ is positively superhomogeneous, monotone, and normalized, then the map $\tilde{T}:B_0(\Sigma,K_{\infty})\to \bar{\R}$ defined as
\[
\tilde{T}(\psi)=\sup\left\lbrace \alpha T(\varphi):\varphi\in \BK,\ \alpha\geq 1,\ \alpha\varphi\leq \psi \right\rbrace
\]
for all $\psi\in B_0(\Sigma,K_{\infty})$ is a positively superhomogeneous, monotone, and normalized extension of $T$ to $B_0(\Sigma,K_{\infty})$.
\end{lemma}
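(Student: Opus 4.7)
The plan is to follow the template of Lemma \ref{const_superadd_extension}, with positive rescalings by $\alpha \geq 1$ replacing additive translations by $m \geq 0$. Monotonicity of $\tilde{T}$ is immediate: if $\psi_1 \leq \psi_2$, then any pair $(\alpha,\varphi)$ admissible for $\tilde{T}(\psi_1)$ is admissible for $\tilde{T}(\psi_2)$, so the supremum only grows. To show $\tilde{T}$ extends $T$ on $B_0(\Sigma,K)$, fix $\psi \in B_0(\Sigma,K)$. The pair $(1,\psi)$ is admissible, giving $\tilde{T}(\psi)\geq T(\psi)$. Conversely, for any admissible $(\alpha,\varphi)$, the hypothesis $\inf K = 0$ together with $\alpha\geq 1$ forces $\alpha\varphi \geq \varphi \geq 0$, while $\alpha\varphi \leq \psi$ keeps $\alpha\varphi$ bounded above by $\sup\psi \in K$; thus $\alpha\varphi \in B_0(\Sigma,K)$. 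Applying positive superhomogeneity of $T$ to $\alpha\varphi$ with scalar $1/\alpha \in (0,1]$ yields $T(\varphi) \leq (1/\alpha) T(\alpha\varphi)$, i.e., $\alpha T(\varphi) \leq T(\alpha\varphi) \leq T(\psi)$ by monotonicity. Taking the supremum gives $\tilde{T}(\psi) \leq T(\psi)$.

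For positive superhomogeneity of $\tilde{T}$, fix $\gamma \in (0,1)$ and $\psi \in B_0(\Sigma,K_\infty)$ with $\gamma\psi \in B_0(\Sigma,K_\infty)$. Given any pair $(\alpha,\varphi)$ admissible for $\tilde{T}(\gamma\psi)$, the rescaled pair $(\alpha/\gamma,\varphi)$ is admissible for $\tilde{T}(\psi)$: indeed $\alpha/\gamma > \alpha \geq 1$, $\varphi \in B_0(\Sigma,K)$, and $(\alpha/\gamma)\varphi \leq \psi$. Hence $\alpha T(\varphi) = \gamma\cdot(\alpha/\gamma) T(\varphi) \leq \gamma \tilde{T}(\psi)$; taking the supremum over admissible pairs yields $\tilde{T}(\gamma\psi) \leq \gamma \tilde{T}(\psi)$.

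For normalization, if $k \in K$ the extension property and normalization of $T$ give $\tilde{T}(k) = T(k) = k$. If $k \in K_\infty \setminus K$, then $k \geq \sup K > 0$; pick any $t \in K$ with $t > 0$, which exists since $K$ is a nondegenerate interval with $\inf K = 0$, and use the admissible pair $(k/t,t)$ to obtain $\tilde{T}(k) \geq (k/t) T(t) = k$. For the reverse bound, any admissible $(\alpha,\varphi)$ satisfies $\alpha T(\varphi) \leq \alpha T(\sup\varphi) = \alpha \sup\varphi = \sup(\alpha\varphi) \leq k$ by monotonicity and normalization of $T$, where $\sup\varphi \in K$ because $\varphi$ is simple. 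The main obstacle I anticipate is the membership check $\alpha\varphi \in B_0(\Sigma,K)$ in the extension step, which is precisely where the assumption $\inf K = 0$ is essential---without nonnegativity, scaling by $\alpha \geq 1$ could push values below $\inf K$, and positive superhomogeneity of $T$ could no longer be invoked. The rest is a routine adaptation of the argument for Lemma \ref{const_superadd_extension}.
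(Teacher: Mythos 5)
Your proof is correct and follows essentially the same strategy as the paper's: monotonicity from the growth of the admissible set, the extension property via $\alpha T(\varphi)\leq T(\alpha\varphi)\leq T(\psi)$ (with the same convexity argument for $\alpha\varphi\in B_0(\Sigma,K)$), and normalization via the pair $(k/t,t)$ together with the bound $\alpha T(\varphi)\leq\sup(\alpha\varphi)$. The only cosmetic difference is in the superhomogeneity step, where you re-index admissible pairs to get $\tilde{T}(\gamma\psi)\leq\gamma\tilde{T}(\psi)$ for $\gamma\in(0,1)$ rather than rewriting $\tilde{T}$ with the inner infimum as the paper does; the two computations are equivalent up to the change of variables $\alpha=1/\gamma$.
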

\begin{proof}
Monotonicity is immediate as the larger $\psi$ gets the larger is the set over which we are taking the supremum. To prove that $\tilde{T}$ extends $T$ notice that for all $\psi\in \BK$, we have $\tilde{T}(\psi)\geq T(\psi)$. Conversely, notice that for all $\psi,\varphi\in \BK$, and $\alpha\geq 1$, with $\alpha\varphi\leq \psi$, we have that $\alpha \varphi\in \BK$ and by monotonicity and positive superhomogeneity of $T$, we have
\[
\alpha T(\varphi)\leq T(\alpha\varphi)\leq T(\psi).
\]
By the arbitrariness of $\varphi$ and $\alpha$, we have $\tilde{T}(\psi)=T(\psi)$. Thus, $\tilde{T}$ is a monotonic extension of $T$. Now we prove that $\tilde{T}$ satisfies positive superhomogeneity. To this end notice first that
\[
\tilde{T}(\psi)=\sup\limits_{\varphi\in \BK,\ \varphi\leq \psi}\left\lbrace T(\varphi)\inf\limits_{s\in S:\varphi(s)>0}\left\lbrace \frac{\psi(s)}{\varphi(s)} \right\rbrace\right\rbrace
\]
for all $\psi\in B_0(\Sigma,K_{\infty})$. Let $\psi\in B_0(\Sigma,K_{\infty})$ and $\alpha\geq 1$,
\begin{align*}
\tilde{T}(\alpha\psi)&=\sup\limits_{\varphi\in \BK,\ \varphi\leq \alpha\psi}\left\lbrace T(\varphi)\inf\limits_{s\in S:\varphi(s)>0}\left\lbrace \frac{\alpha\psi(s)}{\varphi(s)} \right\rbrace\right\rbrace\\
&\geq \sup\limits_{\varphi\in \BK,\ \varphi\leq \psi}\left\lbrace T(\varphi)\inf\limits_{s\in S:\varphi(s)>0}\left\lbrace \frac{\alpha\psi(s)}{\varphi(s)} \right\rbrace\right\rbrace\\
&=\alpha\sup\limits_{\varphi\in \BK,\ \varphi\leq \psi}\left\lbrace T(\varphi)\inf\limits_{s\in S:\varphi(s)>0}\left\lbrace \frac{\psi(s)}{\varphi(s)} \right\rbrace\right\rbrace\\
&=\alpha\tilde{T}(\psi).
\end{align*}
Next we prove that $\tilde{T}$ is a real-valued map. If $\psi\in \BK$, then $\tilde{T}(\psi)=T(\psi)\in \R$. If $\psi\in B_0(\Sigma,K_{\infty})\setminus \BK$, then for all $\varphi\in \BK$ and $\alpha\geq 1$ with $\alpha \varphi\leq \psi$, we have that
\[
\alpha T(\varphi)\leq \alpha T(\sup \varphi)=\alpha\sup\varphi\leq \sup \psi<\infty
\]
the equality follows from the normalization of $T$ and the fact that $\varphi$ is a simple function.

To conclude we prove that $\tilde{T}$ is normalized. If $k\in K$, then $\tilde{T}(k)=T(k)=k$. If $k\in K_{\infty}\setminus K$, then $k\geq \sup K$. Therefore, there exist $\alpha\geq 1$ and $t\in K$ such that $\alpha t=k$, and hence $\tilde{T}(k)=\alpha T(t)=\alpha t=k$.
\end{proof}

\subsubsection{Envelope representations}
Before going into further results, we provide notation used throughout the remainder of Appendix \ref{app_A}. For a convex set $K\subseteq \mathbb{R}$, and $\xi,\varphi\in \BK$, we define the following sets
\begin{align*}
&C_{\xi}(\varphi)=\left\lbrace k\in \R:\xi+k\geq \varphi,\ \xi+k\in \BK \right\rbrace, \
M_{\xi}(\varphi)=\left\lbrace \alpha>0:\alpha\xi\geq \varphi,\ \alpha\xi\in \BK  \right\rbrace
\\
&c_{\xi}(\varphi)=\left\lbrace k\in \R:\varphi\geq \xi+k,\ \xi+k\in \BK \right\rbrace,\ m_{\xi}(\varphi)=\left\lbrace \alpha>0:\varphi\geq \alpha\xi,\ \alpha\xi\in \BK \right\rbrace.
\end{align*}
All the sets defined above could well be empty.
Fixing a map $I:\BK\to \bar{\R}$, we define the following auxiliary functionals associated to $I$:
 \begin{align*}
&I_{\xi}(\varphi)=\inf\left\lbrace I(\xi+k):k\in C_{\xi}(\varphi) \right\rbrace, \ J_{\xi}(\varphi)=\inf\left\lbrace I(\alpha\xi):\alpha\in M_{\xi}(\varphi) \right\rbrace,\\ 
& S_{\xi}(\varphi)=\sup\left\lbrace I(\xi+k):k\in c_{\xi}(\varphi) \right\rbrace, H_{\xi}(\varphi)=\sup\left\lbrace I(\alpha\xi):\alpha\in m_{\xi}(\varphi) \right\rbrace
\end{align*} for all $\xi,\varphi\in \BK$. We adopt the convention that $\inf\emptyset=\infty$ and $\sup\emptyset=-\infty$. If $C_{\xi}(\varphi)=M_{\xi}(\varphi)=\emptyset$, then $I_{\xi}(\varphi)=J_{\xi}(\varphi)=\infty$, while if $c_{\xi}(\varphi)=m_{\xi}(\varphi)=\emptyset$, $S_{\xi}(\varphi)=H_{\xi}(\varphi)=-\infty$.

\begin{proposition}\label{auxiliary_functionals_basic}
Fix a convex $K\subseteq \R$ and a monotone and normalized map $I:\BK\to \bar{\R}$. Then, for all $\xi\in \BK$,
\begin{enumerate}[label=(\roman*)]
\item $I_{\xi}$, $J_{\xi}$, $S_{\xi}$, and $H_{\xi}$ are monotone.
\item $I_{\xi}\geq I$, $J_{\xi}\geq I$, $S_{\xi}\leq I$, and $H_{\xi}\leq I$. 
\item $I_{\xi}$ is quasiconvex and $S_{\xi}$ is quasiconcave.
\item If $K\subseteq \mathbb{R}_+$, then $J_{\xi}$ is quasiconvex and $H_{\xi}$ is quasiconcave.
\end{enumerate}
\end{proposition}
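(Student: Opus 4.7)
My plan is to verify each of the four claims using only the monotonicity of $I$, the set-inclusion properties of $C_\xi$, $M_\xi$, $c_\xi$, $m_\xi$ under pointwise order, and the fact that since $K$ is an interval, $\BK$ is closed under taking the pointwise maximum or minimum of two of its elements.

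For (i), monotonicity is a direct set-inclusion observation. If $\varphi\le\psi$ then $C_\xi(\psi)\subseteq C_\xi(\varphi)$ and $M_\xi(\psi)\subseteq M_\xi(\varphi)$, so the corresponding infima move upward, giving $I_\xi(\psi)\ge I_\xi(\varphi)$ and $J_\xi(\psi)\ge J_\xi(\varphi)$; the reverse inclusions hold for $c_\xi$ and $m_\xi$, yielding monotonicity of $S_\xi$ and $H_\xi$. For (ii), any $k\in C_\xi(\varphi)$ satisfies $\xi+k\ge\varphi$, so monotonicity of $I$ forces $I(\xi+k)\ge I(\varphi)$; taking the infimum gives $I_\xi(\varphi)\ge I(\varphi)$, and the conventional $I_\xi(\varphi)=+\infty\ge I(\varphi)$ handles the case $C_\xi(\varphi)=\emptyset$. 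The same template yields $J_\xi\ge I$ and, dually, $S_\xi\le I$ and $H_\xi\le I$.

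For (iii), I will prove $I_\xi$ is quasiconvex; the argument for $S_\xi$ is symmetric. Fix $\varphi,\psi\in\BK$ and $\lambda\in[0,1]$, and assume without loss of generality that $\max\{I_\xi(\varphi),I_\xi(\psi)\}<+\infty$. For any $\epsilon>0$ pick $k_1\in C_\xi(\varphi)$ and $k_2\in C_\xi(\psi)$ with $I(\xi+k_1)<I_\xi(\varphi)+\epsilon$ and $I(\xi+k_2)<I_\xi(\psi)+\epsilon$, and set $k:=\max\{k_1,k_2\}$. Since $k\in\{k_1,k_2\}$, automatically $\xi+k\in\BK$; moreover $\xi+k\ge\xi+k_1\ge\varphi$ and $\xi+k\ge\xi+k_2\ge\psi$, so $\xi+k\ge\lambda\varphi+(1-\lambda)\psi$. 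Hence $k\in C_\xi(\lambda\varphi+(1-\lambda)\psi)$ and $I_\xi(\lambda\varphi+(1-\lambda)\psi)\le I(\xi+k)<\max\{I_\xi(\varphi),I_\xi(\psi)\}+\epsilon$; letting $\epsilon\to 0$ gives quasiconvexity. The quasiconcavity of $S_\xi$ follows by replacing $\max\{k_1,k_2\}$ with $\min\{k_1,k_2\}$ and picking $k_i\in c_\xi$ near-optimal for the suprema. For (iv), the same scheme transports to the multiplicative setting: for $J_\xi$ take $\alpha:=\max\{\alpha_1,\alpha_2\}$ with $\alpha_i\in M_\xi$ near-optimal, and for $H_\xi$ take $\alpha:=\min\{\alpha_1,\alpha_2\}$ with $\alpha_i\in m_\xi$; the hypothesis $K\subseteq\R_+$ forces $\xi\ge 0$, which is precisely what is needed to guarantee $\max\{\alpha_1,\alpha_2\}\,\xi\ge\alpha_i\xi$ and $\min\{\alpha_1,\alpha_2\}\,\xi\le\alpha_i\xi$ pointwise.

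There is no substantive obstacle: the entire proposition is a bookkeeping exercise. The only points needing attention are (a) verifying the candidate $k$ or $\alpha$ really lies in the target set $C_\xi(\lambda\varphi+(1-\lambda)\psi)$ (respectively $M_\xi$, $c_\xi$, $m_\xi$), (b) checking $\xi+k$ or $\alpha\xi$ remains in $\BK$---both guaranteed by $K$ being an interval---and (c) disposing of the empty-set cases via the conventions $\inf\emptyset=+\infty$ and $\sup\emptyset=-\infty$.
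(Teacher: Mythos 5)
Your proposal is correct and follows essentially the same route as the paper's proof: set inclusions for monotonicity, the empty-set conventions plus monotonicity of $I$ for the bounds, and the near-optimal $k_1,k_2$ (resp.\ $\alpha_1,\alpha_2$) with $k=\max\{k_1,k_2\}$ (resp.\ min/max of the $\alpha_i$, using $\xi\geq 0$ when $K\subseteq\R_+$) for quasiconvexity and quasiconcavity. The only cosmetic difference is that you correctly observe $\xi+k\in\BK$ because $k$ equals one of the $k_i$, so the closure of $\BK$ under pointwise maxima you mention at the outset is not actually needed.
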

\begin{proof}
\begin{enumerate}[label=\textit{(\roman*)}]
\item For all $\varphi,\psi,\xi\in \BK$ with $\varphi\geq \psi$ we have $C_{\xi}(\varphi)\subseteq C_{\xi}(\psi)$, $M_{\xi}(\varphi)\subseteq M_{\xi}(\psi)$, $c_{\xi}(\psi)\subseteq c_{\xi}(\varphi)$, and $m_{\xi}(\psi)\subseteq m_{\xi}(\varphi)$. This implies that $I_{\xi}$, $J_{\xi}$, $S_{\xi}$ and $H_{\xi}$ are monotone.  

\item Fix $\varphi,\xi\in \BK$. If $C_{\xi}(\varphi)=\emptyset$, $I_{\xi}(\varphi)=\infty\geq I(\varphi)$, while $J_{\xi}(\varphi)=\infty\geq I(\varphi)$ whenever $M_{\xi}(\varphi)=\emptyset$, and similarly if $c_{\xi}(\varphi)=\emptyset$, then $I(\varphi)\geq -\infty=S_{\xi}(\varphi)$, while $I(\varphi)\geq -\infty=H_{\xi}(\varphi)$ whenever $m_{\xi}(\varphi)=\emptyset$. Thus, we assume such sets to be nonempty. By monotonicity of $I$, we have that $I(\xi+k)\geq I(\varphi)$ for all $k\in C_{\xi}(\varphi)$. Analogously, $I(\alpha\xi)\geq I(\varphi)$ for all $\alpha\in M_{\xi}(\varphi)$. Thus, $I_{\xi}(\varphi)\geq I(\varphi)$ and $J_{\xi}(\varphi)\geq I(\varphi)$ for all $\varphi\in \BK$. An analogous argument for $k\in c_{\xi}(\varphi)$ and $\alpha\in m_{\xi}(\varphi)$ concludes the proof of the claim.

\item Fix $\xi\in \BK$, $\alpha\in (0,1)$, and $\varphi_1,\varphi_2\in \BK$. If $\max\left\lbrace I_{\xi}(\varphi_1),I_{\xi}(\varphi_2)\right\rbrace=\infty$, then $
I_{\xi}(\alpha\varphi_1+(1-\alpha)\varphi_2)\leq \max\left\lbrace I_{\xi}(\varphi_1),I_{\xi}(\varphi_2)\right\rbrace$.
Thus, assume that $I_{\xi}(\varphi_1),I_{\xi}(\varphi_2)$ are finite. We prove that for all $t\in \mathbb{R}$,\footnote{Here we follow the same steps provided in \cite{RuoduCashSubad} (proof of Theorem 4.1 therein).}
\[
\left(\forall i=1,2,\ I_{\xi}(\varphi_i)\leq t\right)\Longrightarrow I_{\xi}\left(\alpha\varphi_1+(1-\alpha)\varphi_2\right)\leq t.
\]
For all $\varepsilon>0$, there exist $k_1,k_2\in \mathbb{R}$ such that, for all $i=1,2$, $\xi+k_i\geq \varphi_i$, $\xi+k_i\in \BK$, and
\[
I(\xi+k_i)\leq I_{\xi}(\varphi_i)+\varepsilon\leq t+\varepsilon.
\]
We have that $\alpha \varphi_1+(1-\alpha)\varphi_2\leq \alpha (\xi+k_1)+(1-\alpha)(\xi+k_2)\leq \xi+k$
where $k=\max\left\lbrace k_1,k_2 \right\rbrace$ and hence, by monotonicity of $I$,
\begin{align*}
I_{\xi}(\alpha\varphi_1+(1-\alpha)\varphi_2)&\leq I(\xi+k)\leq t+\varepsilon.
\end{align*}
The claim follows by arbitrariness of $\varepsilon$. The quasiconcavity of $S_{\xi}$ is proved analogously.
\item Fix $\xi\in \BK$, $\alpha\in (0,1)$, and $\varphi_1,\varphi_2\in \BK$. If $\max\left\lbrace J_{\xi}(\varphi_1),J_{\xi}(\varphi_2)\right\rbrace=\infty$, then $
J_{\xi}(\alpha\varphi_1+(1-\alpha)\varphi_2)\leq \max\left\lbrace J_{\xi}(\varphi_1),J_{\xi}(\varphi_2)\right\rbrace$.
Thus, assume that $J_{\xi}(\varphi_1),J_{\xi}(\varphi_2)$ are finite

Now we pass to the quasiconvexity of $J_{\xi}$. We prove that for all $t\in \mathbb{R}$,
\[
\left(\forall i=1,2,\ J_{\xi}(\varphi_i)\leq t\right)\Longrightarrow J_{\xi}\left(\alpha\varphi_1+(1-\alpha)\varphi_2\right)\leq t.
\]
For all $\varepsilon>0$, there are $\beta_1,\beta_2>0$ so that, for all $i=1,2$, $\beta_i\xi\geq \varphi_i$, $\beta_i\xi\in \BK$, and
\[
I(\beta_i\xi) \leq J_{\xi}(\varphi_i)+\varepsilon \leq t+\varepsilon.
\]
Since $K\subseteq \R_+$, we have that $\xi\geq0$ and hence $\alpha \varphi_1+(1-\alpha)\varphi_2\leq \alpha \beta_1\xi +(1-\alpha) \beta_2\xi \leq \beta\xi$
where $\beta=\max\left\lbrace \beta_1,\beta_2 \right\rbrace$. Therefore, by monotonicity of $I$,
\begin{align*}
J_{\xi}(\alpha\varphi_1+(1-\alpha)\varphi_2)&\leq I(\beta \xi)\leq t+\varepsilon.
\end{align*}
By the arbitrariety of $\varepsilon$ the claim follows. The quasiconcavity of $H_{\xi}$ is proved analogously.
\end{enumerate}
\end{proof}

\begin{proposition}\label{auxiliary_functionals_constant_sup}
Fix a convex $K\subseteq \R$ and a monotone map $I:\BK\to \bar{\R}$. If $I$ is constant superadditive, then, for all $\xi\in \BK$, $I_{\xi}$ and $S_{\xi}$ are constant superadditive.
\end{proposition}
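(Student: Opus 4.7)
The plan is to handle $I_\xi$ and $S_\xi$ separately, but via parallel arguments that both hinge on two simple ingredients: \textit{(a)} the fact that a scalar shift by $k\geq 0$ of a function in $C_\xi(\varphi+k)$ (resp.\ $c_\xi(\varphi)$) keeps us inside $B_0(\Sigma,K)$ because $K$ is convex (hence an interval), and \textit{(b)} constant superadditivity of $I$ applied to the shifted element. The infimum structure of $I_\xi$ and the supremum structure of $S_\xi$ then yield the conclusion by taking inf/sup after the pointwise comparison.

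For $I_\xi$, fix $\xi,\varphi\in\BK$ and $k\geq 0$ with $\varphi+k\in\BK$. First I would rule out the trivial case $C_\xi(\varphi+k)=\emptyset$, where $I_\xi(\varphi+k)=\infty$ and the inequality is immediate. Otherwise, pick any $m\in C_\xi(\varphi+k)$, so that $\xi+m\geq \varphi+k$ and $\xi+m\in \BK$. Then $\xi+m-k$ dominates $\varphi$ and is bounded above by $\xi+m$, so pointwise it lies between $\varphi(s)\in K$ and $(\xi+m)(s)\in K$; convexity of $K$ (an interval) forces $\xi+m-k\in \BK$. Hence $m-k\in C_\xi(\varphi)$, and constant superadditivity of $I$ with shift $k\geq 0$ applied to $\xi+m-k$ gives $I(\xi+m)\geq I(\xi+m-k)+k\geq I_\xi(\varphi)+k$. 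Taking the infimum over $m\in C_\xi(\varphi+k)$ yields $I_\xi(\varphi+k)\geq I_\xi(\varphi)+k$.

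For $S_\xi$, the argument is the dual one: fix $\xi,\varphi\in \BK$, $k\geq 0$ with $\varphi+k\in\BK$, and assume $c_\xi(\varphi)\neq\emptyset$ (else $S_\xi(\varphi)=-\infty$ and the inequality is trivial). For any $m\in c_\xi(\varphi)$ we have $\xi+m\leq \varphi$ and $\xi+m\in \BK$, so $\xi+m+k\leq \varphi+k$ and, by the same interval argument, $\xi+m+k\in\BK$. Thus $m+k\in c_\xi(\varphi+k)$, and constant superadditivity gives $I(\xi+m+k)\geq I(\xi+m)+k$, so $S_\xi(\varphi+k)\geq I(\xi+m)+k$; taking the supremum over $m\in c_\xi(\varphi)$ yields $S_\xi(\varphi+k)\geq S_\xi(\varphi)+k$.

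The only subtlety to watch is membership in $B_0(\Sigma,K)$ when performing the shift, which is exactly where convexity of $K$ is used. I do not expect a real obstacle here: once one fixes the proper direction of the shift (subtracting $k$ for $I_\xi$, adding $k$ for $S_\xi$), the shifted function is sandwiched pointwise between two members of $\BK$, so the interval structure of $K$ closes the gap. The $\pm\infty$ cases are absorbed trivially into the sup/inf conventions already in force.
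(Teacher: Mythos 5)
Your proof is correct and follows essentially the same route as the paper: both arguments rest on the shift correspondence between $C_{\xi}(\varphi+k)$ and $C_{\xi}(\varphi)$ (resp.\ $c_{\xi}(\varphi)$ and $c_{\xi}(\varphi+k)$), the convexity-of-$K$ sandwich argument for membership in $B_0(\Sigma,K)$, and one application of constant superadditivity of $I$ to the shifted element before passing to the infimum or supremum. The only cosmetic difference is that you treat $S_{\xi}$ directly by adding $k$, whereas the paper defers to the analogous argument with $m\leq 0$ via Lemma \ref{banale}.
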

\begin{proof}
Let $\varphi,\xi\in \BK$, and $m\geq 0$ be such that $\varphi+m\in \BK$. If $I_{\xi}(\varphi+m)=\infty$, then the claim follows. Thus, suppose $I_{\xi}(\varphi+m)\in \R$. By Proposition \ref{auxiliary_functionals_basic} and the monotonicity of $I$, we have that $I_{\xi}$ is monotone, and hence $I_{\xi}(\varphi)\in \mathbb{R}$. These simple observations imply that $C_{\xi}(\varphi+m)$ and $C_{\xi}(\varphi)$ are not empty. Then, we have that
\begin{align*}
I_{\xi}(\varphi+m)&=\inf\left\lbrace I(\xi+k):k\in \mathbb{R}\ \textnormal{s.t.}\ \xi+k\in \BK\ \textnormal{and}\ \xi+(k-m)\geq \varphi\right\rbrace\\
&=\inf\left\lbrace I(\xi+(k-m)+m):k\in \mathbb{R}\ \textnormal{s.t.}\ \xi+k\in \BK\ \textnormal{and}\ \xi+(k-m)\geq \varphi\right\rbrace\\
&\geq \inf\left\lbrace I(\xi+k-m):k\in \mathbb{R}\ \textnormal{s.t.}\ \xi+k\in \BK\ \textnormal{and}\ \xi+(k-m)\geq \varphi\right\rbrace+m\\
&\geq \inf\left\lbrace I(\xi+r):r\in \mathbb{R}\ \textnormal{s.t.}\ \xi+r\in \BK\ \textnormal{and}\ \xi+r\geq \varphi\right\rbrace+m\\
&=I_{\xi}(\varphi)+m
\end{align*}
where the second-to-last inequality follows from the constant superadditivity of $I$ and the fact that since $K$ is convex and $m\geq 0$, the following inequalities
\[
\varphi\leq \xi+k-m\leq \xi+k
\]
imply that $\xi+k-m\in \BK$, whenever $\xi+k\in \BK$. The last inequality follows from observing that for all $k\in \mathbb{R}$ such that $\xi+k\in \BK$ and $\xi+k-m\geq \varphi$, there is $r\in \mathbb{R}$ such that $\xi+r\in \BK$ and $\xi+r\geq \varphi$. The proof of the constant superadditivity of $S_{\xi}$ is analogous and follows the same steps choosing $m\leq 0$ and applying Lemma \ref{banale}.
\end{proof}
Denote by $\mathcal{K}$ the collection of intervals $K$ in $\R$ which satisfy at least one of the following: $K\in\lbrace [0,a], (0,a], [0,a), (0,a), [b,0], (b,0], [b,0),(b,0):a>0,b<0\rbrace$; $ 0\in \textnormal{int}K$; $K=[0,\infty), K=(-\infty,0]$.

\begin{proposition}\label{auxiliary_functionals_positive_sup}
Fix $K\in \mathcal{K}$ and a monotone map $I:\BK\to \bar{\R}$. If $I$ is positively superhomogeneous, then, for all $\xi\in \BK$, $J_{\xi}$ is positively superhomogeneous. If $K=[0,\infty)$, then $H_{\xi}$ is positively superhomogeneous.
\end{proposition}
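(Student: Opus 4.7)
The plan is to verify positive superhomogeneity of $J_\xi$ (and then of $H_\xi$ under the stronger hypothesis $K=[0,\infty)$) directly from the definitions, using the positive superhomogeneity of $I$ applied to the functions of the form $\alpha\xi$.

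First I would establish a preliminary closure observation: for every $K\in\mathcal{K}$, every $\psi\in\BK$, and every $\gamma\in(0,1)$, one has $\gamma\psi\in\BK$. This is a straightforward case analysis on $\mathcal{K}$. When $K$ has $0$ as an endpoint and is bounded on the other side, multiplication by $\gamma\in(0,1)$ pushes every value strictly toward $0$, hence stays in $K$ (with each open/closed endpoint preserved by inspection). When $0\in\textnormal{int}\,K$, the segment from $0$ to any point of $K$ lies in $K$ by convexity of the interval, so $\gamma\psi(s)\in K$ pointwise. The unbounded cases $K=[0,\infty)$ and $K=(-\infty,0]$ are trivial. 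This observation is precisely what ensures $\gamma(\alpha\xi)\in\BK$ whenever $\alpha\xi\in\BK$, so that positive superhomogeneity of $I$ can be invoked at $\alpha\xi$.

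Second, I would prove the claim for $J_\xi$. Fix $\xi\in\BK$, $\varphi\in\BK$, and $\gamma\in(0,1)$ with $\gamma\varphi\in\BK$. If $M_\xi(\varphi)=\emptyset$ then $J_\xi(\varphi)=\infty$ and the inequality $J_\xi(\gamma\varphi)\leq\gamma J_\xi(\varphi)$ is vacuous. Otherwise, for any $\alpha\in M_\xi(\varphi)$ we have $\alpha\xi\geq\varphi$ and $\alpha\xi\in\BK$, so $\gamma\alpha\xi\geq\gamma\varphi$ and, by the closure step, $\gamma\alpha\xi\in\BK$; thus $\gamma\alpha\in M_\xi(\gamma\varphi)$. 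Positive superhomogeneity of $I$ applied to $\alpha\xi$ then yields
\[
J_\xi(\gamma\varphi)\;\leq\; I(\gamma\alpha\xi)\;\leq\; \gamma\,I(\alpha\xi).
\]
Taking the infimum over $\alpha\in M_\xi(\varphi)$ and using $\gamma>0$ to pull the scalar out of the infimum gives $J_\xi(\gamma\varphi)\leq \gamma J_\xi(\varphi)$. Third, for the $H_\xi$ claim with $K=[0,\infty)$, I would run a dual argument: if $m_\xi(\gamma\varphi)=\emptyset$ then $H_\xi(\gamma\varphi)=-\infty$; otherwise, for each $\beta\in m_\xi(\gamma\varphi)$ set $\alpha=\beta/\gamma>0$. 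Since $\xi\geq 0$, $\alpha\xi=\beta\xi/\gamma\geq 0$ lies in $\BK$, and dividing $\beta\xi\leq\gamma\varphi$ by $\gamma$ gives $\alpha\xi\leq\varphi$, so $\alpha\in m_\xi(\varphi)$. Positive superhomogeneity of $I$ then yields $I(\beta\xi)=I(\gamma\alpha\xi)\leq\gamma\,I(\alpha\xi)\leq \gamma H_\xi(\varphi)$, and taking the supremum over $\beta$ delivers $H_\xi(\gamma\varphi)\leq\gamma H_\xi(\varphi)$.

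The hard part — and precisely the reason the $H_\xi$ statement needs the more restrictive hypothesis $K=[0,\infty)$ — is that the ``upward'' rescaling $\alpha\mapsto\alpha/\gamma$ used for $H_\xi$ can push $\alpha\xi$ out of $\BK$ whenever $K$ is bounded or lies in $\R_-$. For instance, with $K=[-1,0]$, $\xi\equiv-1$, $\alpha=1$, and $\gamma=1/2$, one obtains $\alpha\xi/\gamma=-2\notin K$, so the argument just sketched breaks. For $J_\xi$ the rescaling instead goes ``downward'' by the factor $\gamma\in(0,1)$, and the closure observation in the first step absorbs this uniformly across all $K\in\mathcal{K}$, which is why positive superhomogeneity of $J_\xi$ holds on the whole class $\mathcal{K}$ while that of $H_\xi$ does not.
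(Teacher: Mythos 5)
Your proof is correct and takes essentially the same route as the paper's: both arguments rest on the observation that $\BK$ is closed under multiplication by $\gamma\in(0,1)$ for every $K\in\mathcal{K}$, rescale the index sets $M_{\xi}$ (resp.\ $m_{\xi}$, using $K=[0,\infty)$ there), and apply positive superhomogeneity of $I$ to the functions $\alpha\xi$. The only differences are presentational: you isolate the closure step as an explicit preliminary observation and handle empty index sets directly, while the paper phrases the same argument in the equivalent $\lambda\geq 1$ form.
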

\begin{proof}
Let $\lambda\geq 1$ and $\varphi,\xi\in \BK$ with $\lambda\varphi\in \BK$. If $J_{\xi}(\lambda\varphi)=\infty$, then the claim follows. If $J_{\xi}(\lambda\varphi)\in \R$, then there exists $\alpha>0$ such that $\alpha\xi\geq \lambda\varphi$ and $\alpha\xi\in \BK$. Since $K\in \mathcal{K}$ and $\lambda\geq 1$, we have that $(\alpha/\lambda)\xi\in \BK$. Thus, $J_{\xi}(\varphi)\in \R$. These simple observations imply that $M_{\xi}(\lambda\varphi)$ and $M_{\xi}(\varphi)$ are not empty. Then, we have that
\begin{align*}
J_{\xi}(\lambda\varphi)&=\inf\left\lbrace I(\alpha\xi):\alpha>0\ \textnormal{s.t.}\ \alpha\xi\geq \lambda \varphi\ \textnormal{and}\ \alpha\xi\in \BK \right\rbrace\\
&=\inf\left\lbrace I\left(\alpha\lambda \frac{\xi}{\lambda}\right):\alpha>0\ \textnormal{s.t.}\ \frac{\alpha}{\lambda} \xi\geq  \varphi\ \textnormal{and}\ \alpha\xi\in \BK  \right\rbrace\\
&\geq \lambda\inf\left\lbrace I\left(\alpha\frac{\xi}{\lambda}\right):\alpha>0\ \textnormal{s.t.}\ \frac{\alpha}{\lambda} \xi\geq  \varphi\ \textnormal{and}\ \alpha\xi\in \BK  \right\rbrace\\
&\geq \lambda\inf\left\lbrace I\left(\gamma\xi\right):\gamma>0\ \textnormal{s.t.}\ \gamma \xi\geq  \varphi\ \textnormal{and}\ \gamma\xi\in \BK \right\rbrace=\lambda J_{\xi}(\varphi)
\end{align*}
where the second-to-last inequality follows from positive superhomogeneity and the fact that $(\alpha/\lambda)\xi\in \BK$ whenever $\alpha\xi\in \BK$. The last inequality follows from the fact that, for each $\alpha>0$, with $(\alpha/\lambda)\xi\geq  \varphi$ and $\alpha\xi\in \BK$, there exists $\gamma>0$ such that $\gamma \xi\geq \varphi$ and $\gamma\xi\in \BK$. Thus, $J_{\xi}$ is positively superhomogeneous. 
\\
If $K=[0,\infty)$, the exact same arguments but with $\lambda\in (0,1)$ such that $\lambda\varphi\in \BK$, yield that $H_{\xi}(\lambda\varphi)\leq \lambda H_{\xi}(\varphi)$, and hence $H_{\xi}$ is positively superhomogeneous. The condition $K=[0,\infty)$ is used to guarantee that $(\alpha/\lambda)\xi\in \BK$.
\end{proof}

\subsubsection{Semicontinuity}

For each $\varphi,\psi\in \B$, the set $[\varphi,\psi]=\lbrace \xi\in \B:\varphi\leq \xi\leq \psi\rbrace$ will be referred to as an \textit{order interval}. A subset $Y$ of $\BK$ is \textit{lower open} (resp., \textit{upper open}) if, for all $\varphi\in Y$, there exists $\varepsilon>0$ such that $[\varphi-\varepsilon,\varphi]\subseteq Y$ (resp., $[\varphi,\varphi+\varepsilon]\subseteq Y$). Fix $K\subseteq \R$. It is important to notice that whenever $K$ is a nontrivial interval, $\BK$ is either lower open, or upper open, or an order interval. Given a function $T:\BK\to \bar{\mathbb{R}}$ we define $T^-:\BK\to \bar{\mathbb{R}}$ as
\[
T^-(\varphi)=\sup\limits_{U\in \mathcal{N}(\varphi, K)}\inf\limits_{\psi\in U}T(\psi)
\]
for all $\varphi\in \BK$, where $\mathcal{N}(\varphi, K)$ denotes the set of all neighborhoods of $\varphi$.\footnote{The neighborhoods of $\varphi\in \BK$ are the subsets of $\BK$ that contain $\varphi$ in their interior, relative to $(\BK,\lVert\cdot\rVert_{\infty})$.} For simplicity, when $K= \R$ we write $\mathcal{N}(\varphi)$. Such function $T^-$ is referred to as the \textit{lower semicontinuous envelope} of $T$. By Proposition 3.5 in \cite{DalMasoBook}, we have that
\begin{equation}\label{aziz_muntazir}
\lbrace T^-\leq k \rbrace = \bigcap\limits_{t>k}\textnormal{cl}(\lbrace T\leq t \rbrace)
\end{equation}
for all $k\in \mathbb{R}$. Analogously, we define the \textit{upper semicontinuous envelope} of $T$ as the function $T^+:\BK\to \bar{\mathbb{R}}$ such that $T^+(\varphi)=\inf_{U\in \mathcal{N}(\varphi, K)}\sup_{\psi\in U}T(\psi)$ for all $\varphi\in \BK$. We report some basic properties of such envelopes, the first two points extend Lemmas 29 and 30 in \cite{cerreia2011uncertainty}.
\begin{lemma}\label{basic_prop_cont_envelopes}
Let $K\subseteq \R$ be convex and fix a map $T:\BK\to \bar{\mathbb{R}}$. Then,
\begin{enumerate}[label=(\roman*)]
\item \label{cont_env_mon_lsc} If $T$ is monotone and $\BK$ is lower open, then $T^-$ is monotone.
\item \label{cont_env_mon_usc} If $T$ is monotone and $\BK$ is either upper open or an order interval, then $T^+$ is monotone.
\item \label{cont_env_quas} If $T$ is quasiconvex (quasiconcave), then $T^-$ and $T^+$ are quasiconvex (quasiconcave).
\item \label{cont_env_cs} If $T$ is constant superadditive, then $T^-$ and $T^+$ are constant superadditive.
\item \label{cont_env_ps} If $T$ is positively superhomogeneous, then $T^-$ and $T^+$ are positively superhomogeneous.
\end{enumerate}
\end{lemma}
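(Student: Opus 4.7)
The plan is to prove the five assertions via the sublevel-set identity~\eqref{aziz_muntazir}, its superlevel dual $\lbrace T^+ \geq k \rbrace = \bigcap_{t<k}\textnormal{cl}\lbrace T \geq t \rbrace$ (obtained by applying~\eqref{aziz_muntazir} to $-T$), and direct manipulation of the defining sup-inf / inf-sup formulas when more convenient.

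For~\ref{cont_env_mon_lsc} and~\ref{cont_env_mon_usc}, monotonicity of $T$ makes $\lbrace T \leq t \rbrace$ downward-closed and $\lbrace T \geq t \rbrace$ upward-closed in $\BK$. The crux is that the $\lVert\cdot\rVert_\infty$-closure in $\BK$ of a downward-closed set $A$ remains downward-closed under lower-openness of $\BK$: given $\xi \in \textnormal{cl}(A)$ and $\eta \leq \xi$ with $\eta \in \BK$, take $\xi_n \in A$ with $\xi_n \to \xi$ and set $\eta_n := \min(\eta, \xi_n)$; then $\eta_n \leq \xi_n$, $\eta_n \to \eta$, and lower-openness yields $\eta_n \in \BK$ (hence in $A$) for large $n$. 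Intersecting in~\eqref{aziz_muntazir} preserves the downward-closed property, so $T^-$ is monotone. Part~\ref{cont_env_mon_usc} is symmetric, using $\eta_n := \max(\eta, \xi_n)$ together with the dual identity, where $\eta_n \in \BK$ eventually either by upper-openness or, when $\BK$ is an order interval, automatically because the fixed upper bound of $\BK$ dominates $\eta_n$.

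For~\ref{cont_env_quas}, convexity of each $\lbrace T \leq t \rbrace$ (when $T$ is quasiconvex) passes to its closure and intersection, so $\lbrace T^- \leq t \rbrace$ is convex and $T^-$ is quasiconvex; the dual identity handles $T^+$ when $T$ is quasiconcave. For the crossed cases (e.g.\ $T^+$ quasiconvex from $T$ quasiconvex), I would argue directly: given $T^+(\varphi), T^+(\psi) < t$, pick neighborhoods $U \ni \varphi$, $V \ni \psi$ in $\BK$ on which $T < t$; quasiconvexity of $T$ yields $T < t$ on $\alpha U + (1-\alpha) V$, which contains a neighborhood of $\alpha \varphi + (1-\alpha)\psi$ in $\BK$ by convexity of $K$.

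For~\ref{cont_env_cs}, work directly with the defining formula. Fix $\varphi, \varphi + k \in \BK$ with $k \geq 0$ and $\varepsilon > 0$; pick a neighborhood $V$ of $\varphi$ in $\BK$ with $\inf_{\eta \in V} T(\eta) \geq T^-(\varphi) - \varepsilon$. For any $\chi$ in a suitable neighborhood of $\varphi + k$ with $\chi - k \in \BK$, the point $\chi - k$ lies in $V$ after shrinking, so constant superadditivity of $T$ gives $T(\chi) \geq T(\chi - k) + k \geq T^-(\varphi) + k - \varepsilon$; passing to the appropriate infimum and supremum yields $T^-(\varphi + k) \geq T^-(\varphi) + k$. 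The $T^+$ case is dual. For~\ref{cont_env_ps}, the argument parallels~\ref{cont_env_cs} with the additive shift $+k$ replaced by the dilation by $\gamma \in (0,1)$: positive superhomogeneity gives $T(\gamma \eta) \leq \gamma T(\eta)$ on the corresponding neighborhood, yielding $T^\pm(\gamma \varphi) \leq \gamma T^\pm(\varphi)$. The main technical obstacle common to~\ref{cont_env_cs} and~\ref{cont_env_ps} is ensuring that the shift/dilation of a neighborhood of $\varphi$ (resp.\ $\gamma \varphi$) remains a neighborhood in $\BK$—especially when $\varphi$ lies on the boundary of $K^S$ and so $\chi - k$ (resp.\ $\chi/\gamma$) may fall outside $\BK$. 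This is resolved by using convexity of $K$ together with the hypotheses $\varphi, \varphi + k \in \BK$ (resp.\ $\varphi, \gamma \varphi \in \BK$), which force the relevant segments to lie in $\BK$ and permit the neighborhoods to be chosen so that the requisite back-shifts or back-scalings stay in $\BK$.
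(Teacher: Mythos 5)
Your architecture for parts~\ref{cont_env_mon_lsc}--\ref{cont_env_quas} is genuinely different from the paper's and in places cleaner. For monotonicity the paper adapts the sequential characterizations in Lemmas 29--30 of \cite{cerreia2011uncertainty} and must treat the order-interval case separately via the truncation $\hat{T}(\varphi)=T(\varphi\wedge b\vee a)$; your route through downward/upward-closedness of the level sets in \eqref{aziz_muntazir} avoids all of that. Note, moreover, that your lattice trick works for free: $\eta_n=\min(\eta,\xi_n)$ (resp.\ $\max$) takes values that are values of $\eta$ or of $\xi_n$, hence lies in $\BK$ automatically, so the openness hypotheses you invoke at that step are not what makes the argument go through. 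Part~\ref{cont_env_quas} is where you have a real unproven claim: in the crossed case you assert that $\alpha U+(1-\alpha)V$ contains a relative neighborhood of $\alpha\varphi+(1-\alpha)\psi$ in $\BK$ ``by convexity of $K$.'' This is true but not obvious: the naive splitting $u=\varphi+e$, $v=\psi+e$ of a perturbation $e$ can push $u$ or $v$ out of $\BK$ when $\varphi$ or $\psi$ sits at the boundary of $K$ (take $K=[0,1]$, $\varphi(s)=1$, $\psi(s)=0$, $e(s)>0$); one must redistribute $e(s)$ between the two coordinates state by state, using that $\alpha\varphi(s)+(1-\alpha)\psi(s)+e(s)\in K$ guarantees enough combined room. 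Either supply that argument or work with the level-set identities throughout, e.g.\ $\lbrace T^+<k\rbrace=\bigcup_{t<k}\textnormal{int}\lbrace T\le t\rbrace$, keeping in mind that convexity of relative interiors in $\BK$ raises the same splitting issue.

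For parts~\ref{cont_env_cs} and~\ref{cont_env_ps} you follow the same direct route as the paper and correctly identify the boundary obstacle, but your stated resolution---choosing the neighborhoods so that the back-shifts stay in $\BK$---cannot be implemented literally: if $\varphi+k$ is, say, the top of an order interval $\BK$, then \emph{every} relative neighborhood $V$ of $\varphi+k$ contains points $\chi$ with $\chi-k\notin\BK$, and the infimum defining $T^-(\varphi+k)$ runs over all of $V$, not just the good points. (The paper's own proof is cavalier about exactly this: its claimed identity $\mathcal{N}(\varphi+k,K)=\lbrace U+k:U\in\mathcal{N}(\varphi,K)\rbrace$ fails at the boundary.) The correct fix is to back-shift by slightly \emph{less} than $k$: for $\chi\in B_{\delta}(\varphi+k)\cap\BK$ and $k'=k-\delta\ge0$ one has $\varphi(s)<\chi(s)-k'\le\chi(s)$ pointwise, so $\chi-k'\in\BK$ by convexity of $K$, whence $T(\chi)\ge T(\chi-k')+k-\delta\ge\inf_{B_{2\delta}(\varphi)\cap\BK}T+k-\delta$, and letting $\delta\downarrow0$ yields constant superadditivity of $T^-$; the analogous rescaling by a factor slightly off $\gamma$ handles~\ref{cont_env_ps} and the $T^+$ cases. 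So the ideas are all present, but the crossed case of~\ref{cont_env_quas} and the boundary step in~\ref{cont_env_cs}--\ref{cont_env_ps} need these arguments spelled out before the proof is complete.
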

\begin{proof}
We show that \textit{\ref{cont_env_mon_lsc}} and \textit{\ref{cont_env_mon_usc}} hold. Lemma 29 and 30 of \cite{cerreia2011uncertainty} can be easily adapted to show that, when $\BK$ is lower open, then $T^{-}(\varphi)=\sup_{n}T(\varphi_n)$ for all $\varphi,\varphi_n\in \BK$ and $\varphi_n\to \varphi$ with $\varphi>\varphi_n$. Analogously, when $\BK$ is upper open, then $T^{+}(\varphi)=\inf_{n}T(\varphi_n)$ for all $\varphi,\varphi_n\in \BK$ and $\varphi_n\to \varphi$ with $\varphi<\varphi_n$. Then, in these cases, monotonicity of $T^-$ and $T^+$ readily follows from the same proofs as in Lemmas 29 and 30 of \cite{cerreia2011uncertainty}.\footnote{Their proofs are provided in the working paper version of their paper.} The only remaining case is the one with $K=[a,b]$ for some $a<b$. To this end, define $\hat{T}:\B\to \bar{\R}$ as
\[
\hat{T}(\varphi)=T(\varphi\wedge b\vee a)
\]
for all $\varphi\in \B$. Clearly, $\hat{T}$ is a monotonic extension of $T$. We want to show that $\hat{T}^+$ is an extension of $T^+$. If this is the case, by the monotonicity of $\hat{T}$, Lemma 29 of \cite{cerreia2011uncertainty} would yield that $T^+$ must be monotone. Fix $\varphi\in \BK$. For all $U\in \mathcal{N}(\varphi)$, we have that
\[
\sup\limits_{\psi\in U}\hat{T}(\psi)\geq \sup\limits_{\psi\in U\cap \BK}\hat{T}(\psi)=\sup\limits_{\psi\in U\cap \BK}T(\psi)
\]
which yields
\begin{align*}
\hat{T}^+(\varphi)&=\inf\limits_{U\in\mathcal{N}(\varphi)}\sup\limits_{\psi\in U}\hat{T}(\psi)\geq \inf\limits_{U\in \mathcal{N}(\varphi)}\sup\limits_{\psi\in U\cap \BK}T(\psi)\\
&=\inf\limits_{O\in \mathcal{N}(\varphi,K)}\sup\limits_{\psi\in O}T(\psi)=T^+(\varphi).
\end{align*}
To prove the converse inequality, notice that, by Proposition 3.6(b) in \cite{DalMasoBook}, there exists a sequence $(\varphi_n)_{n\in \mathbb{N}}$ in $\B$ such that $\varphi_n\to \varphi$ and 
\begin{equation}\label{spacobotilia}
    \hat{T}^+(\varphi)\leq \liminf\limits_{n\to\infty}\hat{T}(\varphi_n).
\end{equation}
Then, by definition of $\hat{T}$, we have that
\begin{align*}
    \hat{T}^+(\varphi)&\leq \liminf\limits_{n\to\infty}\hat{T}(\varphi_n)=\liminf\limits_{n\to\infty}T(\varphi_n\wedge b\vee a)\\
    &\leq \limsup\limits_{n\to\infty}T(\varphi_n\wedge b\vee a)\leq \limsup\limits_{n\to\infty}T^+(\varphi_n\wedge b\vee a)\leq T^{+}(\varphi)
\end{align*}
where the first inequality is due to \eqref{spacobotilia}, the second to definition of $\hat{T}$, the third to the properties of limit superiors, the fourth to the fact that $T^+\geq T$, and the last one to the fact that $\varphi_n\wedge b\vee a\to \varphi$ and that $T^+$ is upper semicontinuous. Thus, we have that $\hat{T}^+(\varphi)= T^+(\varphi)$.
\par\medskip

Point \textit{\ref{cont_env_quas}} follows from \eqref{aziz_muntazir} and the analogous version for $T^+$. To prove point \textit{\ref{cont_env_cs}}, let $\varphi\in \BK$ and $k\geq 0$ with $\varphi+k\in \BK$. Notice that $\mathcal{N}(\varphi+k, K)=\left\lbrace U+k:U\in \mathcal{N}(\varphi, K) \right\rbrace$. Thus, 
    \begin{align*}
T^-(\varphi+k)&=\sup\limits_{U+k\in \mathcal{N}(\varphi, K)}\inf\limits_{\psi\in U+k} T(\psi)=\sup\limits_{U\in \mathcal{N}(\varphi, K)}\inf\limits_{\psi\in U} T(\psi+k)\\
&\geq \sup\limits_{U\in \mathcal{N}(\varphi, K)}\inf\limits_{\psi\in U} T(\psi)+k=T^-(\varphi)+k.
    \end{align*}
    The proof for $T^+$ is analogous. To prove point \textit{\ref{cont_env_ps}} let $\varphi\in \BK$ and $\alpha\geq 1$ such that $\alpha\varphi\in \BK$. Notice that $\mathcal{N}(\alpha\varphi, K)=\left\lbrace \alpha U:U\in \mathcal{N}(\varphi, K) \right\rbrace.$ Thus, 
    \begin{align*}
T^-(\alpha\varphi)&=\sup\limits_{\alpha U\in \mathcal{N}(\varphi, K)}\inf\limits_{\psi\in \alpha U} T(\psi)=\sup\limits_{U\in \mathcal{N}(\varphi, K)}\inf\limits_{\psi\in U} T(\alpha \psi)\\
&\geq \sup\limits_{U\in \mathcal{N}(\varphi, K)}\inf\limits_{\psi\in U} \alpha T(\psi)=\alpha T^-(\varphi). \end{align*} 
The proof for $T^+$ is analogous.\end{proof}

\begin{proposition}\label{cont_representations}
Let $K\subseteq \R$ be convex and $I:\BK\to \bar{\R}$ a monotone map. The following hold
\begin{enumerate}[label=(\roman*)]
\item For all $\varphi\in \BK$, 
\[
I(\varphi)=\min\limits_{\xi\in \BK}I^-_{\xi}(\varphi)=\min\limits_{\xi\in \BK}J^-_{\xi}(\varphi)=\max\limits_{\xi\in \BK}S^-_{\xi}(\varphi)=\max\limits_{\xi\in \BK}H^-_{\xi}(\varphi)
\]
provided that $\BK$ is lower open and $I$ lower semicontinuous.
\item For all $\varphi\in \BK$, 
\[
I(\varphi)=\min\limits_{\xi\in \BK}I^+_{\xi}(\varphi)=\min\limits_{\xi\in \BK}J^+_{\xi}(\varphi)=\max\limits_{\xi\in \BK}S^+_{\xi}(\varphi)=\max\limits_{\xi\in \BK}H^+_{\xi}(\varphi)
\]
provided that $\BK$ is either upper open or an order interval and $I$ upper semicontinuous.
\end{enumerate}
\end{proposition}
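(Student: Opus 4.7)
My plan is to deduce all four equalities in part (i) from two ingredients: the pointwise bounds $I_\xi, J_\xi \geq I \geq S_\xi, H_\xi$ from Proposition \ref{auxiliary_functionals_basic}(ii), together with a distinguished choice of $\xi$ that achieves the extremum, namely $\xi = \varphi$. I sketch part (i); part (ii) is symmetric, using the upper semicontinuous envelope and approximations of $\varphi$ from above, plus the extension trick already used in the proof of Lemma \ref{basic_prop_cont_envelopes}(ii) to reduce the order-interval case to the upper-open case.

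For the min representations ($I_\xi^-$ and $J_\xi^-$), the upper bound $\min_\xi I_\xi^-(\varphi) \geq I(\varphi)$ is immediate: $I_\xi \geq I$ yields $I_\xi^- \geq I^- = I$, since the lsc envelope is monotone in the pointwise order and $I$ is lsc by hypothesis. For the matching lower bound I take $\xi = \varphi$ and note that $0 \in C_\varphi(\psi)$ whenever $\psi \leq \varphi$ lies in $\BK$, so $I_\varphi(\psi) \leq I(\varphi)$. Because $\BK$ is lower open, every neighborhood of $\varphi$ in $\BK$ contains such a $\psi$ (for instance $\varphi - \delta \mathbf{1}$ for $\delta > 0$ small), hence $I_\varphi^-(\varphi) \leq I(\varphi)$, and the min is attained at $\xi = \varphi$. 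The $J_\xi^-$ case is identical upon replacing $k = 0 \in C_\varphi(\psi)$ by $\alpha = 1 \in M_\varphi(\psi)$.

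For the max representations ($S_\xi^-$ and $H_\xi^-$), the bound $S_\xi^- \leq S_\xi \leq I$ gives $\max_\xi S_\xi^-(\varphi) \leq I(\varphi)$ at once. For the matching lower bound I again take $\xi = \varphi$ and consider $\varphi_n = \varphi - (1/n)\mathbf{1}$, which lies in $\BK$ for all $n$ large by lower openness. Taking $k = -1/n \in c_\varphi(\varphi_n)$ gives $S_\varphi(\varphi_n) \geq I(\varphi_n)$; monotonicity and lower semicontinuity of $I$ together pinch $I(\varphi_n) \to I(\varphi)$. The characterization $S_\varphi^-(\varphi) = \sup_n S_\varphi(\psi_n)$ over sequences $\psi_n < \varphi$ converging to $\varphi$, which applies because $S_\varphi$ is monotone by Proposition \ref{auxiliary_functionals_basic}(i) and $\BK$ is lower open, then gives $S_\varphi^-(\varphi) \geq I(\varphi)$. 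For $H_\xi^-$ I run the same argument with scalings $\alpha_n \in m_\varphi(\varphi_n)$ chosen so that $\alpha_n \varphi \leq \varphi_n$ and $\alpha_n \varphi \to \varphi$, after which the same pinching by monotonicity and lsc of $I$ closes the estimate.

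The expected main obstacle lies in the $H_\xi^-$ case, where the admissible scalings depend on the sign pattern of $\varphi$. When $\varphi$ is bounded away from $0$ (automatic when $K \subseteq \R_+$, the hypothesis under which Proposition \ref{auxiliary_functionals_basic}(iv) already ensures quasiconcavity of $H_\xi$), one simply takes $\alpha_n$ close to $1$ from below when $\varphi > 0$, or from above when $\varphi < 0$, and the argument goes through verbatim. When $\varphi$ has states at which it vanishes, the choice $\xi = \varphi$ fails because no $\alpha \neq 1$ can match the perturbation there; one instead selects a dominating $\xi = \varphi + \delta \mathbf{1}$, admissible by the lower openness of $\BK$ applied in the shifted direction, so that scalings $\alpha_n \to 1$ still satisfy $\alpha_n \xi \leq \varphi_n$ while $\alpha_n \xi \to \varphi$, and the same chain of inequalities applies.
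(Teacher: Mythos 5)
Your treatment of the $I_{\xi}^-$, $J_{\xi}^-$, and $S_{\xi}^-$ equalities is correct and follows the same strategy as the paper: the one-sided bounds of Proposition \ref{auxiliary_functionals_basic} combined with attainment at $\xi=\varphi$. The paper in fact only writes out the $I_{\xi}^-$ case and asserts that ``the exact same proof'' yields the rest; your explicit argument for $S_{\xi}^-$ (perturbing to $\varphi_n=\varphi-1/n$, using $-1/n\in c_{\varphi}(\varphi_n)$, and pinching $I(\varphi_n)\to I(\varphi)$ via monotonicity and lower semicontinuity) supplies precisely the extra step that the max representations require in part (i) and that the paper leaves implicit. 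A minor remark: for the min cases the detour through $\psi\leq \varphi$ is unnecessary, since $\varphi$ itself belongs to every neighborhood and $I^-_{\varphi}(\varphi)\leq I_{\varphi}(\varphi)=I(\varphi)$ directly.

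The genuine gap is in the $H_{\xi}^-$ equality, and your patch for the degenerate case does not work. If $\varphi$ vanishes at some state $s$, the replacement $\xi=\varphi+\delta$ fails on three counts: lower openness of $\BK$ guarantees room \emph{below} $\varphi$, not above, so $\varphi+\delta$ need not lie in $\BK$ (e.g.\ $K=(-1,1]$ with $\max\varphi=1$); at the vanishing state $\alpha_n\xi(s)=\alpha_n\delta>0>\varphi_n(s)=-1/n$, so no $\alpha_n>0$ satisfies $\alpha_n\xi\leq\varphi_n$; and with $\xi$ fixed and $\alpha_n\to 1$ one gets $\alpha_n\xi\to\varphi+\delta\neq\varphi$, so the pinching never closes. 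You also do not address $\varphi$ of mixed sign, which is possible whenever $0\in\textnormal{int}K$ (e.g.\ $K=\R$): if $\varphi(s_1)>0>\varphi(s_2)$, then $\alpha\varphi\leq\varphi-1/n$ forces $\alpha<1$ at $s_1$ and $\alpha>1$ at $s_2$, so $m_{\varphi}(\varphi_n)=\emptyset$, $H_{\varphi}(\varphi_n)=-\infty$, and $H^-_{\varphi}(\varphi)=-\infty$; one can drive $H^-_{\xi}(\varphi)$ up to $I(\varphi)$ along a sequence $\xi_{\delta}\to\varphi$, but that only produces a supremum, not the claimed maximum at a single $\xi$. In fairness, the paper's own proof is equally silent here, and downstream the $J$- and $H$-representations are invoked only for $K\subseteq\R_+$ (Propositions \ref{auxiliary_functionals_positive_sup} and \ref{super_homo_rep}), where every simple $\varphi\in\BK$ with $\BK$ lower open is bounded away from zero and your ``bounded away from $0$'' argument suffices; but as a proof of the proposition as stated, for arbitrary convex $K$, the $H_{\xi}^-$ (and, symmetrically, the order-interval $I^+_{\xi}$, $J^+_{\xi}$) equalities are not established.
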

\begin{proof}
Let $\varphi\in \BK$. By Proposition \ref{auxiliary_functionals_basic}, we have that 
\[
I^-_{\xi}(\varphi)\geq \sup\limits_{U\in \mathcal{N}(\varphi, K)}\inf\limits_{\psi\in U}I(\psi)=I(\varphi)
\]
for all $\varphi,\xi\in \BK$, where the last equality follows by the continuity of $I$. Moreover, 
\[
I^-_{\varphi}(\varphi)=\sup\limits_{U\in \mathcal{N}(\varphi, K)}\inf\limits_{\psi\in U}I_{\varphi}(\psi)\leq \sup\limits_{U\in \mathcal{N}(\varphi, K)}I_{\varphi}(\varphi)=I(\varphi)
\]
for all $\varphi\in \BK$, where the last equality follows from the fact that $I_{\varphi}(\varphi)=I(\varphi)$. The exact same proof can be provided to retrieve the rest of the equalities. 
\end{proof}

\subsubsection{Quasiconvex duality} \label{quasiconvex_section}
Given $K\subseteq \R$ and $T:\BK\to \bar{\R}$, we define the function $G_T:\R\times\bigtriangleup(S)\to \bar{\R}$ as
\[
G_T(t,p)=\sup\left\lbrace T(\varphi): \varphi\in \BK\ \textnormal{and}\ \int_S\varphi\textnormal{d}p\leq t \right\rbrace
\]
for all $(t,p)\in \R\times\bigtriangleup(S)$. It is immediate to see that $G_T$ is monotone in the first argument. Moreover, by Lemma 31 in \cite{cerreia2011uncertainty}, $G_T$ is quasiconvex.



\begin{proposition}\label{quasic_cs_dual_rep_base}
If $K\subseteq \R$ is convex and unbounded from above and $T:\BK\to \bar{\R}$ is constant superadditive, then $G_{T}$ is constant superadditive in the first argument.
\end{proposition}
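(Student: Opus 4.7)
The plan is to unpack the definition of $G_T$ and transfer the constant superadditivity of $T$ to $G_T(\cdot,p)$ for each fixed $p\in \bigtriangleup(S)$ via a simple shift-and-take-sup argument. Fix $t\in \R$, $k\geq 0$, and $p\in \bigtriangleup(S)$. The goal is to show
\[
G_T(t+k,p)\geq G_T(t,p)+k.
\]

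The key observation needed is that the shift $\varphi\mapsto \varphi+k$ maps the feasible set $\{\varphi\in\BK:\int\varphi\,\mathrm{d}p\leq t\}$ into the feasible set $\{\psi\in\BK:\int\psi\,\mathrm{d}p\leq t+k\}$. This uses that $K$ is convex and unbounded from above: as a convex subset of $\R$ with $\sup K=\infty$, $K$ is upward-closed from any of its points, so $\varphi(s)\in K$ and $k\geq 0$ imply $\varphi(s)+k\in K$ for every $s\in S$. Hence $\varphi+k\in \BK$, and clearly $\int(\varphi+k)\,\mathrm{d}p=\int\varphi\,\mathrm{d}p+k\leq t+k$.

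With this in hand, pick any $\varphi\in\BK$ with $\int\varphi\,\mathrm{d}p\leq t$. By the constant superadditivity of $T$,
\[
T(\varphi+k)\geq T(\varphi)+k,
\]
and since $\varphi+k$ is feasible in the supremum defining $G_T(t+k,p)$,
\[
G_T(t+k,p)\geq T(\varphi+k)\geq T(\varphi)+k.
\]
Taking the supremum over all admissible $\varphi$ on the right yields the desired inequality $G_T(t+k,p)\geq G_T(t,p)+k$.

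The only degenerate cases to flag are when the feasible set is empty (in which case $G_T(t,p)=-\infty$ and the inequality is automatic) or when $G_T(t,p)=+\infty$ (in which case one selects a sequence $(\varphi_n)$ with $T(\varphi_n)\to \infty$, each satisfying $\int\varphi_n\,\mathrm{d}p\leq t$, and applies the same pointwise bound to conclude $G_T(t+k,p)=+\infty$). No real obstacle arises; the assumption that $K$ is unbounded from above is exactly what ensures feasibility is preserved under nonnegative constant shifts, and everything else is a direct translation of the constant superadditivity of $T$ through the sup defining $G_T$.
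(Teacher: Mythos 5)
Your proof is correct and follows essentially the same route as the paper's: both arguments rest on the observation that, since $K$ is convex and unbounded from above, $\varphi+k\in \BK$ whenever $\varphi\in\BK$ and $k\geq 0$, so the shifted function is feasible for $G_T(t+k,p)$ and constant superadditivity of $T$ transfers directly. The only cosmetic difference is that the paper extracts a maximizing sequence $T(\varphi_n)\uparrow G_T(t,p)$ and passes to the limit, whereas you take the supremum over the feasible set directly (and explicitly flag the degenerate cases), which is an equivalent formulation of the same argument.
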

\begin{proof}
The proof mimics proof methods provided in \cite{CMMRwealth}. Fix $t\in \R$, $k\geq 0$, and $p\in \bigtriangleup(S)$. By definition of $G_{T}(t,p)$, there exists a sequence $(\varphi_n)_{n\in \mathbb{N}}$ in $B_0(\Sigma,K)$ such that $T(\varphi_n)\uparrow G_{T}(t,p)$ and $\int_S\varphi_n\textnormal{d}p \leq t$ for all $n\in \mathbb{N}$. Then, we have that $\int_S\varphi_n+k\leq t+k$. Since $K$ is unbounded from above, we have that $\varphi_n+k\in \BK$ for all $n\in \mathbb{N}$. Given that $T$ is constant superadditive, we have
\[
G_T(t+k,p)\geq T(\varphi_n+k)\geq T(\varphi_n)+k\to G_T(t,p)+k
\]
proving that $G_T$ is constant superadditive in the first argument.
\end{proof}

\begin{proposition}\label{quasic_ps_dual_rep}
If either $K=(a,\infty)$, $K=[b,\infty)$, or $K=\R$ for some $a>0$ $b\geq 0$, and $T:\BK\to \bar{\R}$ is positively superhomogeneous, then $G_{T}$ is positively superhomogeneous in the first argument. 
\end{proposition}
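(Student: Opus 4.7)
The strategy is to follow the template of Proposition~\ref{quasic_cs_dual_rep_base}, replacing the translation $\varphi\mapsto\varphi+k$ with the rescaling $\varphi\mapsto\gamma\varphi$. Fix $t\in\R$, $\gamma\in(0,1)$, and $p\in\bigtriangleup(S)$; the goal is to show $G_T(\gamma t,p)\leq \gamma G_T(t,p)$. If the admissible set $\{\psi\in\BK:\int_S\psi\,\textnormal{d}p\leq \gamma t\}$ is empty, then $G_T(\gamma t,p)=-\infty$ and the inequality is immediate; otherwise, by definition of the supremum, pick a sequence $(\psi_n)_{n\in\mathbb{N}}$ in $\BK$ with $\int_S\psi_n\,\textnormal{d}p\leq \gamma t$ and $T(\psi_n)\uparrow G_T(\gamma t,p)$.

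Set $\varphi_n:=\psi_n/\gamma$, so that $\int_S\varphi_n\,\textnormal{d}p\leq t$ and $\gamma\varphi_n=\psi_n$. The crux is verifying that $\varphi_n\in\BK$ under each of the three hypotheses on $K$: for $K=\R$ it is immediate; for $K=[b,\infty)$ with $b\geq 0$, since $\psi_n\geq b\geq 0$, we have $\psi_n/\gamma\geq b/\gamma\geq b$; and for $K=(a,\infty)$ with $a>0$, since $\psi_n>a>0$, we have $\psi_n/\gamma>a/\gamma>a$. With $\varphi_n\in\BK$ secured, positive superhomogeneity of $T$ (applied to $\varphi_n$ and $\gamma\in(0,1)$, noting that both $\varphi_n$ and $\gamma\varphi_n$ lie in $\BK$) yields
\[
T(\psi_n)=T(\gamma\varphi_n)\leq \gamma T(\varphi_n)\leq \gamma G_T(t,p),
\]
where the last inequality is the definition of $G_T(t,p)$ applied to $\varphi_n$. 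Passing to the limit in $n$ gives $G_T(\gamma t,p)\leq \gamma G_T(t,p)$, as desired.

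The only nontrivial step is the domain check $\varphi_n\in\BK$, which is precisely what dictates the hypothesis on $K$: the allowable sets are exactly those closed under multiplication by scalars $1/\gamma>1$ on the relevant sign-restricted values, which is why $K$ must either equal $\R$ or be a half-line bounded below by a nonnegative number. Extended real-valued subtleties (e.g., $G_T(t,p)=\pm\infty$) are absorbed by the conventions $\gamma\cdot(\pm\infty)=\pm\infty$ for $\gamma>0$, and the supremum-limit interchange is standard.
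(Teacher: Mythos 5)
Your proof is correct and follows essentially the same route as the paper's: approximate the relevant supremum by a sequence, rescale it (with the hypothesis on $K$ guaranteeing the rescaled functions stay in $B_0(\Sigma,K)$), apply positive superhomogeneity of $T$, and pass to the limit. The only cosmetic difference is that you work with the $\gamma\in(0,1)$ form of the definition and approximate $G_T(\gamma t,p)$, whereas the paper approximates $G_T(t,p)$ and scales up by $\alpha\geq 1$; these are the same argument under $\alpha=1/\gamma$.
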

\begin{proof}
Fix $t\in \R$, $\alpha\geq 1$, and $p\in \bigtriangleup(S)$. By definition of $G_{T}(t,p)$, there exists a sequence $(\varphi_n)_{n\in \mathbb{N}}$ in $\BK$ such that $T(\varphi_n)\uparrow G_{T}(t,p)$ and $\int_S\varphi_n\textnormal{d}p \leq t$ for all $n\in \mathbb{N}$. Then, we have that $\int_S\alpha\varphi_n\leq \alpha t$. By the hypotheses on $K$, we have that $\alpha\varphi_n\in \BK$ for all $n\in \mathbb{N}$. Given that $T$ is positively superhomogeneous, we have
\[
G_T(\alpha t,p)\geq T(\alpha \varphi_n)\geq \alpha T(\varphi_n)\to \alpha G_T(t,p)
\]
proving that $G_T$ is positively superhomogeneous in the first argument.
\end{proof}
The exact same approach also yields the following.
\begin{lemma}\label{quasic_ph_dual_rep}
If $K=[0,\infty)$ and $T:\BK\to \bar{\R}$ is positively homogeneous, then $G_{T}$ is positively homogeneous.
\end{lemma}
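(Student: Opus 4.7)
The plan is to mimic the proof of Proposition \ref{quasic_ps_dual_rep}, exploiting the fact that when $K=[0,\infty)$ the cone $\BK$ is closed under multiplication by any strictly positive scalar. Since positive homogeneity decomposes into positive superhomogeneity and positive subhomogeneity, it suffices to verify the two inequalities $G_T(\alpha t,p)\geq \alpha G_T(t,p)$ and $G_T(\alpha t,p)\leq \alpha G_T(t,p)$ for every $\alpha>0$, $t\in\R$, and $p\in\bigtriangleup(S)$.

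For the first inequality, I would invoke directly Proposition \ref{quasic_ps_dual_rep}: since $K=[0,\infty)$ fits the hypotheses there, positive superhomogeneity of $T$ transfers to positive superhomogeneity of $G_T$ in the first argument. Alternatively, the argument can be repeated verbatim: picking a sequence $(\varphi_n)\subseteq\BK$ with $\int_S\varphi_n\,\textnormal{d}p\leq t$ and $T(\varphi_n)\uparrow G_T(t,p)$, the act $\alpha\varphi_n$ still lies in $\BK$ because $\alpha>0$ and $\varphi_n\geq 0$; hence $\int_S\alpha\varphi_n\,\textnormal{d}p\leq \alpha t$ and
\[
G_T(\alpha t,p)\geq T(\alpha\varphi_n)=\alpha T(\varphi_n)\longrightarrow \alpha G_T(t,p).
\]

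For the reverse inequality, I would use the stronger hypothesis of full positive homogeneity to perform the same trick in the opposite direction. Pick a sequence $(\psi_n)\subseteq\BK$ with $\int_S\psi_n\,\textnormal{d}p\leq \alpha t$ and $T(\psi_n)\uparrow G_T(\alpha t,p)$, and set $\varphi_n:=\psi_n/\alpha$. Crucially, $\varphi_n\in\BK$ precisely because $K=[0,\infty)$ is closed under division by $\alpha>0$; moreover $\int_S\varphi_n\,\textnormal{d}p\leq t$. Positive homogeneity of $T$ then yields $T(\psi_n)=\alpha T(\varphi_n)\leq \alpha G_T(t,p)$, and letting $n\to\infty$ gives $G_T(\alpha t,p)\leq \alpha G_T(t,p)$.

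There is no real obstacle here: the proof is a symmetric twin of Proposition \ref{quasic_ps_dual_rep}. The only subtle point is checking that the rescaled acts remain in $\BK$ in both directions, which is exactly why the hypothesis $K=[0,\infty)$ is needed (a left-closed half-line of the form $[b,\infty)$ with $b>0$ would fail the second step, as dividing by $\alpha\in(0,1)$ could push values outside $K$).
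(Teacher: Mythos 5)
Your proof is correct and is essentially the paper's own argument: the paper fixes $\alpha>0$ (not just $\alpha\geq 1$), rescales the approximating sequence $\varphi_n\mapsto\alpha\varphi_n$, and uses that $K=[0,\infty)$ is a cone to get $G_T(\alpha t,p)\geq\alpha G_T(t,p)$ for all $\alpha>0$, which already yields the reverse inequality by substituting $1/\alpha$ and $\alpha t$. Your explicit second step with $\varphi_n:=\psi_n/\alpha$ is just that substitution spelled out, so the two proofs coincide.
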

\begin{proof}
Fix $t\in \R$, $\alpha>0$, and $p\in \bigtriangleup(S)$. By definition of $G_{T}(t,p)$, there exists a sequence $(\varphi_n)_{n\in \mathbb{N}}$ in $\BK$ such that $T(\varphi_n)\uparrow G_{T}(t,p)$ and $\int_S\varphi_n\textnormal{d}p \leq t$ for all $n\in \mathbb{N}$. Then, we have that $\int_S\alpha\varphi_n\leq \alpha t$. Since $K=[0,\infty)$, we have that $\alpha\varphi_n\in \BK$ for all $n\in \mathbb{N}$. Moreover, since $T$ is positively homogeneous we have that
\[
G_T(\alpha t,p)\geq T(\alpha \varphi_n)=\alpha T(\varphi_n)\to \alpha G_T(t,p)
\]
proving that $G_T$ is positively homogeneous in the first argument.
\end{proof}

\subsubsection{Monotone and normalized functionals}
Section \ref{quasiconvex_section} provided the tools to retrieve the following general representation. 
\begin{proposition}\label{envelop_rep_gen}
Fix a convex $K\subseteq \R$ and a continuous map $I:\BK\to \bar{\R}$. Then, the following are equivalent 
\begin{enumerate}[label=(\roman*)]
\item \label{envgen_1} $I$ is monotone and normalized.
\item \label{envgen_2} There exists a family $\Phi$ of monotone and quasiconvex functionals mapping from $\BK$ to $\bar{\R}$ such that
\[
I(\varphi)=\min\limits_{J\in \Phi}J(\varphi)
\]
for all $\varphi\in \BK$. Moreover, $\min_{J\in \Phi}J(k)=k$ for all $k\in K$. If $\BK$ is lower open, then each $J$ can be taken lower semicontinuous. If $\BK$ is upper open or an order interval, then each $J$ can be taken upper semicontinuous.

\item \label{envgen_3} There exists a family $\Psi$ of monotone and quasiconcave functionals mapping from $\BK$ to $\bar{\R}$ such that
\[
I(\varphi)=\max\limits_{H\in \Psi}H(\varphi)
\]
for all $\varphi\in \BK$. Moreover, $\max_{H\in \Psi}H(k)=k$ for all $k\in K$. If $\BK$ is lower open, then each $H$ can be taken lower semicontinuous. If $\BK$ is upper open or an order interval, then each $H$ can be taken upper semicontinuous.
\end{enumerate}
\end{proposition}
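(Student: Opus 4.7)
The plan is to deduce the three-way equivalence from the envelope machinery already developed, namely Proposition \ref{auxiliary_functionals_basic}, Proposition \ref{cont_representations}, and the semicontinuity preservation in Lemma \ref{basic_prop_cont_envelopes}. Since $K$ is a convex subset of $\R$, I may assume $K$ is a nontrivial interval (the singleton case being trivial), and therefore $\BK$ is either lower open, upper open, or an order interval.

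The easy directions \textit{\ref{envgen_2}} $\Rightarrow$ \textit{\ref{envgen_1}} and \textit{\ref{envgen_3}} $\Rightarrow$ \textit{\ref{envgen_1}} go first: the pointwise minimum (resp.\ maximum) of monotone functionals is monotone, and the stated normalization conditions $\min_{J \in \Phi} J(k) = k$ and $\max_{H \in \Psi} H(k) = k$ for $k \in K$ immediately yield $I(k)=k$. Continuity of $I$ is part of the standing hypothesis of the proposition, so nothing else is required.

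For \textit{\ref{envgen_1}} $\Rightarrow$ \textit{\ref{envgen_2}}, I will split on the structure of $\BK$. In the lower-open case I set $\Phi = \{I^-_\xi : \xi \in \BK\}$; in the upper-open or order-interval case I set $\Phi = \{I^+_\xi : \xi \in \BK\}$. Proposition \ref{auxiliary_functionals_basic} supplies that each $I_\xi$ is monotone and quasiconvex and satisfies $I_\xi \geq I$ with equality at $\xi$, and Lemma \ref{basic_prop_cont_envelopes} parts \ref{cont_env_mon_lsc}, \ref{cont_env_mon_usc}, \ref{cont_env_quas} shows that the chosen envelope inherits monotonicity and quasiconvexity as well as the correct (lower or upper) semicontinuity. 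Continuity of $I$ gives both lower and upper semicontinuity, so the appropriate clause of Proposition \ref{cont_representations} yields $I(\varphi) = \min_{\xi \in \BK} I^?_\xi(\varphi)$ for $?\in\{-,+\}$ as chosen. Family-normalization at $k \in K$ follows from $I^?_k(k) = I(k) = k$ combined with the inequality $I^?_\xi \geq I$ inherited from Proposition \ref{auxiliary_functionals_basic}.

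The implication \textit{\ref{envgen_1}} $\Rightarrow$ \textit{\ref{envgen_3}} is entirely parallel, replacing $I_\xi$ with the sup-envelope $S_\xi$ and reading off the max-formula $I(\varphi) = \max_{\xi} S^?_\xi(\varphi)$ from Proposition \ref{cont_representations}; quasiconcavity in place of quasiconvexity is again handled by Lemma \ref{basic_prop_cont_envelopes}\textit{\ref{cont_env_quas}}. The only care required throughout is bookkeeping over which of the three structural cases for $\BK$ one is in, so as to apply the correct envelope and the correct semicontinuity clause of Lemma \ref{basic_prop_cont_envelopes}; I anticipate no fresh technical obstacle beyond this case-tracking.
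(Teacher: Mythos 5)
Your proposal is correct and follows essentially the same route as the paper's own proof: the trivial directions from monotonicity plus normalization, and for \textit{(i)}$\Rightarrow$\textit{(ii)} (resp.\ \textit{(iii)}) the families $\Phi=\{I^{-}_{\xi}\}$ or $\{I^{+}_{\xi}\}$ (resp.\ $\Psi=\{S^{-}_{\xi}\}$ or $\{S^{+}_{\xi}\}$) according to whether $\BK$ is lower open, upper open, or an order interval, combined with Propositions \ref{auxiliary_functionals_basic} and \ref{cont_representations} and Lemma \ref{basic_prop_cont_envelopes}. No gaps.
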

\begin{proof}
\textit{\ref{envgen_1}} implies \textit{\ref{envgen_2}}. Suppose first that $\BK$ is lower open. Let $\Phi=\lbrace I_{\xi}^{-}:\xi\in \BK\rbrace$. By Proposition \ref{auxiliary_functionals_basic} and Lemma \ref{basic_prop_cont_envelopes}, we have that each $J\in \Phi$ is monotone, quasiconvex, and lower semicontinuous. Moreover, by Proposition \ref{cont_representations}, we have that, for all $\varphi\in\BK$,
\[
I(\varphi)=\min\limits_{J\in \Phi}J(\varphi).
\]
Since $I$ is normalized, $\min_{J\in \Phi}J(k)=k$ for all $k\in K$. If $\BK$ is upper open or an order interval the same steps applied to the family $\Phi=\lbrace I^+_{\xi}:\xi\in \BK \rbrace$, yield the claim with upper semicontinuity. 
\par\medskip
\textit{\ref{envgen_1}} implies \textit{\ref{envgen_3}}. We apply the same steps as in the previous implication, but with $\Psi=\lbrace S_{\xi}^{-}:\xi\in \BK\rbrace$, when $B_0(\Sigma,K)$ is lower open, or with $\Psi=\lbrace S_{\xi}^{+}:\xi\in \BK\rbrace$, when $B_0(\Sigma,K)$ is either upper open or an order interval. By Proposition \ref{auxiliary_functionals_basic} and Lemma \ref{basic_prop_cont_envelopes}, we have that each $H\in \Psi$ is monotone, quasiconcave, and lower semicontinuous or upper semicontinuous depending on whether $B_0(\Sigma,K)$ is lower open, upper open, or an order interval. Moreover, by Proposition \ref{cont_representations}, we have that, for all $\varphi\in\BK$,
\[
I(\varphi)=\max\limits_{H\in \Psi}H(\varphi).
\]
Since $I$ is normalized, $\max_{H\in \Psi}H(k)=k$ for all $k\in K$. 
\par\medskip
To conclude, notice that the implications: \textit{\ref{envgen_2}} implies \textit{\ref{envgen_1}} and \textit{\ref{envgen_3}} implies \textit{\ref{envgen_1}} follow immediately from the monotonicity of all $J\in \Phi$ and $H\in \Psi$, and the normalization from $\min_{J\in \Phi}J(k)=k$ and $\max_{H\in \Psi}H(k)=k$ for all $k\in K$.
\end{proof}

\subsubsection{Constant superadditivity}


\begin{proposition}\label{prop_rep}
Fix a convex $K\subseteq \R$ and a continuous map $I:\BK\to \bar{\R}$. Then, the following are equivalent
\begin{enumerate}[label=(\roman*)]
\item \label{item1_prop_rep_sub} $I$ is monotone, normalized, and constant superadditive.
\item \label{item2_prop_rep_sub} There exists a family $\Phi$ of monotone, constant superadditive, and quasiconvex functionals  mapping from $\BK$ to $\bar{\R}$ such that
\[
I(\varphi)=\min\limits_{J\in \Phi}J(\varphi)
\]
for all $\varphi\in B_0(\Sigma,K)$. Moreover,  $\min_{J\in \Phi}J(k)=k$ for all $k\in K$. If $\BK$ is lower open, then each $J$ can be taken lower semicontinuous. If $\BK$ is upper open or an order interval, then each $J$ can be taken upper semicontinuous.

\item \label{item3_prop_rep_sub} There exists a family $\Psi$ of monotone, constant superadditive, and quasiconcave functionals  mapping from $\BK$ to $\bar{\R}$ such that
\[
I(\varphi)=\max\limits_{H\in \Psi}H(\varphi)
\]
for all $\varphi\in B_0(\Sigma,K)$. Moreover, $\max_{H\in \Psi}H(k)=k$ for all $k\in K$. If $\BK$ is lower open, then each $H$ can be taken lower semicontinuous. If $\BK$ is upper open or an order interval, then each $H$ can be taken upper semicontinuous.
\end{enumerate}
\end{proposition}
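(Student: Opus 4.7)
The plan is to mirror the structure of Proposition \ref{envelop_rep_gen}, promoting each step to carry constant superadditivity alongside monotonicity and quasiconvexity/concavity. The key observation is that the toolkit needed has already been assembled: the auxiliary functionals $I_\xi$ and $S_\xi$ preserve the relevant properties (Propositions \ref{auxiliary_functionals_basic} and \ref{auxiliary_functionals_constant_sup}), their semicontinuous envelopes inherit those properties (Lemma \ref{basic_prop_cont_envelopes}), and the recovery identity follows from Proposition \ref{cont_representations}.

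For the implication (i) $\Rightarrow$ (ii), I would take $\Phi=\{I^-_\xi:\xi\in\BK\}$ when $\BK$ is lower open and $\Phi=\{I^+_\xi:\xi\in\BK\}$ when $\BK$ is either upper open or an order interval. Each $I_\xi$ is monotone and quasiconvex by Proposition \ref{auxiliary_functionals_basic} and constant superadditive by Proposition \ref{auxiliary_functionals_constant_sup}; the appropriate envelope then preserves all three properties by Lemma \ref{basic_prop_cont_envelopes}\ref{cont_env_mon_lsc}--\ref{cont_env_mon_usc}, \ref{cont_env_quas}, and \ref{cont_env_cs}; and Proposition \ref{cont_representations} yields $I(\varphi)=\min_{\xi\in\BK} I^\pm_\xi(\varphi)$. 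Normalization passes through because $I^\pm_\xi(k)\geq I(k)=k$ for every $\xi$ while $I^\pm_k(k)=k$ by construction. The implication (i) $\Rightarrow$ (iii) is identical after replacing $I_\xi$ with $S_\xi$, using the $\max$ side of Proposition \ref{cont_representations} and the quasiconcavity clauses of Proposition \ref{auxiliary_functionals_basic} and Lemma \ref{basic_prop_cont_envelopes}.

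For the converses, monotonicity, normalization, and constant superadditivity each commute with the pointwise minimum (respectively maximum). For instance, if every $J\in\Phi$ satisfies $J(\varphi+k)\geq J(\varphi)+k$ for $k\geq 0$ with $\varphi+k\in\BK$, then
$$I(\varphi+k)=\min_{J\in\Phi}J(\varphi+k)\geq \min_{J\in\Phi}\bigl(J(\varphi)+k\bigr)=I(\varphi)+k,$$
and analogously for $\max$. Continuity of $I$ is a standing hypothesis of the proposition and so need not be re-derived here.

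The only real subtlety is bookkeeping: one must pair the correct envelope ($(\cdot)^-$ or $(\cdot)^+$) with the topological type of $\BK$ so that both the monotonicity clause of Lemma \ref{basic_prop_cont_envelopes} and the corresponding half of Proposition \ref{cont_representations} apply. Beyond this dichotomy, the argument is a direct assembly of results already in place, with no genuinely new analytic step required.
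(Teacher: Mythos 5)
Your proposal is correct and follows essentially the same route as the paper's proof: both take $\Phi=\{I^{-}_{\xi}\}$ or $\{I^{+}_{\xi}\}$ (and $\Psi=\{S^{-}_{\xi}\}$ or $\{S^{+}_{\xi}\}$) according to whether $\BK$ is lower open, upper open, or an order interval, and invoke Propositions \ref{auxiliary_functionals_basic}, \ref{auxiliary_functionals_constant_sup}, Lemma \ref{basic_prop_cont_envelopes}, and Proposition \ref{cont_representations} in exactly the same way, with the converse directions handled by the same pointwise min/max argument the paper labels as immediate.
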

\begin{proof}
It is immediate to see that both \textit{\ref{item2_prop_rep_sub}} and \textit{\ref{item3_prop_rep_sub}} imply \textit{\ref{item1_prop_rep_sub}}. We prove that \textit{\ref{item1_prop_rep_sub}} implies \textit{\ref{item2_prop_rep_sub}}. By Proposition \ref{cont_representations}, if $K$ is lower open, then
\[
I(\varphi)=\min\limits_{\xi\in \BK}I^-_{\xi}(\varphi)
\]
for all $\varphi\in \BK$. If $K$ is upper open or an order interval, the same result holds with $I^+_{\xi}$ in place of $I^{-}_{\xi}$ for all $\xi\in \BK$. Moreover, by Propositions \ref{auxiliary_functionals_basic}, \ref{auxiliary_functionals_constant_sup}, and Lemma \ref{basic_prop_cont_envelopes} each $I^-_{\xi},I^{+}_{\xi}$, depending on whether $K$ is lower open, upper open, or an order interval, is monotone, quasiconvex, and constant superadditive. Thus, letting either $\Phi=\lbrace I^-_{\xi}:\xi\in \BK\rbrace$ or $\Phi=\lbrace I^+_{\xi}:\xi\in \BK \rbrace$ according to the cases mentioned above, the claim follows. To conclude the proof it is sufficient to show that \textit{\ref{item1_prop_rep_sub}} implies \textit{\ref{item3_prop_rep_sub}}. The proof is totally analogous to the previous steps with either
\[
I(\varphi)=\max\limits_{\xi\in \BK}S^-_{\xi}(\varphi)\ \textnormal{or}\ I(\varphi)=\max\limits_{\xi\in \BK}S^+_{\xi}(\varphi)
\]
in the case where $K$ is lower open, upper open, or an order interval. In particular, we let $\Psi=\lbrace S^-_{\xi}:\xi\in \BK\rbrace$ when $K$ is lower open. Instead if $K$ is upper open or an order interval we set $\Psi=\lbrace S^+_{\xi}:\xi\in \BK\rbrace$. By Propositions \ref{auxiliary_functionals_basic}, \ref{auxiliary_functionals_constant_sup}, and Lemma \ref{basic_prop_cont_envelopes} each $S^-_{\xi},S^{+}_{\xi}$, depending on whether $K$ is lower open, upper open, or an order interval, is monotone, quasiconcave, and constant superadditive. 
\end{proof}

\begin{proposition}\label{quasic_cs_dual_rep} Fix a convex $K\subseteq \R$ and a continuous map $I:\BK\to \R$. Then, the following are equivalent 
\begin{enumerate}[label=(\roman*)]
\item \label{qccsenvgen_1} $I$ is monotone, normalized, and constant superadditive.
\item \label{qccsenvgen_2} There exists a family $\mathcal{G}$ of constant superadditive and increasing in the first argument, and quasiconvex functions $G:\R\times \bigtriangleup(S)\to \R$ such that
\[
I(\varphi)=\max\limits_{G\in \mathcal{G}}\inf\limits_{p\in \bigtriangleup(S)}G\left(\int_S \varphi \textnormal{d}p,p\right)
\]
for all $\varphi\in \BK$. Moreover, $\max_{G\in \mathcal{G}}\inf_{p\in \bigtriangleup(S)}G(k,p)=k$ for all $k\in K$.
\end{enumerate}
\end{proposition}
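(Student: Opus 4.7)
The implication (ii) $\Rightarrow$ (i) is routine: for each $G \in \mathcal{G}$, constant superadditivity of $G(\cdot, p)$ combined with the identity $\int(\varphi + k)\textnormal{d}p = \int \varphi\textnormal{d}p + k$ ensures that $\varphi \mapsto G(\int \varphi\textnormal{d}p, p)$ is monotone and constant superadditive; these two properties pass through the $\inf_p$ and the $\max_G$. The assumed normalization $\max_G \inf_p G(k,p) = k$ on $K$ yields $I(k)=k$ on $K$. Continuity of $I$ is part of the hypothesis on $\mathcal{G}$.

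For (i) $\Rightarrow$ (ii), my first move is to apply Proposition \ref{prop_rep}(iii) to obtain a family $\Psi$ of monotone, quasiconcave, constant superadditive, appropriately semicontinuous functionals $H: \BK \to \bar{\R}$ with $I(\varphi) = \max_{H \in \Psi} H(\varphi)$ and $\max_{H\in\Psi} H(k) = k$ on $K$. For each such $H$, I would then invoke the quasiconvex duality machinery discussed in the section on quasiconvex duality (along the lines of Lemma 31 in \cite{cerreia2011uncertainty}) to write $H(\varphi) = \inf_{p} G_H(\int \varphi\textnormal{d}p, p)$, where
\[
G_H(t,p) = \sup\lbrace H(\psi) : \psi \in \BK,\ \textstyle\int \psi\textnormal{d}p \leq t \rbrace.
\]
By construction $G_H$ is monotone in $t$ and, by the general duality argument, quasiconvex. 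When $K$ is unbounded above, Proposition \ref{quasic_cs_dual_rep_base} directly yields constant superadditivity of $G_H$ in the first argument, and setting $\mathcal{G} = \lbrace G_H : H \in \Psi \rbrace$ together with the normalization inherited from $\Psi$ concludes the argument.

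The main difficulty lies in the case where $K$ is bounded above, because the step ``$\varphi_n + k \in \BK$'' in the proof of Proposition \ref{quasic_cs_dual_rep_base} may fail. My plan is to first extend each $H$ to $\tilde H: B_0(\Sigma, K_\infty) \to \bar{\R}$ via Lemma \ref{const_superadd_extension}, which preserves monotonicity, constant superadditivity, and normalization, on a domain that is now unbounded above; one can then apply the dual construction and Proposition \ref{quasic_cs_dual_rep_base} to $\tilde H$ to obtain a constant superadditive $G_{\tilde H}$. The delicate point, which I expect to be the real technical hurdle, is that Lemma \ref{const_superadd_extension} does not automatically preserve quasiconcavity of $H$, which is exactly what is needed for the quasiconvex duality step to recover $H$ on $\BK$ from $G_{\tilde H}$. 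Handling this cleanly requires combining the envelope continuity results of Lemma \ref{basic_prop_cont_envelopes} with the auxiliary functionals $S_\xi^{\pm}$ from Propositions \ref{auxiliary_functionals_basic}--\ref{auxiliary_functionals_constant_sup}, which are \emph{simultaneously} quasiconcave and constant superadditive; the family $\mathcal{G}$ is then obtained as the collection of their (extended) quasiconvex duals, and the equality $I(\varphi) = \max_H \inf_p G_H(\int \varphi\textnormal{d}p, p)$ follows by restricting to $\BK$.
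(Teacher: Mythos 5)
Your (ii) $\Rightarrow$ (i) direction and your treatment of the case where $K$ is unbounded above are fine and match the paper. The gap is in the bounded-above case, and you have correctly located it but not closed it. Your plan decomposes $I$ on $\BK$ first (via Proposition \ref{prop_rep}) and then tries to extend each quasiconcave piece $H$ to $B_0(\Sigma,K_{\infty})$ via Lemma \ref{const_superadd_extension} so that Proposition \ref{quasic_cs_dual_rep_base} becomes applicable. But that extension is only shown to preserve monotonicity, normalization, and constant superadditivity; it need not preserve quasiconcavity, and quasiconcavity (together with the right semicontinuity) is exactly what Theorem 36 of \cite{cerreia2011uncertainty} needs to recover $H$ as $\inf_p G_H\left(\int\cdot\,\textnormal{d}p,p\right)$. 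Your proposed remedy --- taking the $S_{\xi}^{\pm}$ on $\BK$ and then their ``(extended) quasiconvex duals'' --- runs into the same obstruction: either you dualize on $\BK$, where Proposition \ref{quasic_cs_dual_rep_base} does not apply because $K$ is bounded above, or you extend first and lose quasiconcavity again.

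The paper resolves this by reversing the order of the two operations. Extend $I$ itself (not the pieces) to $\tilde I$ on $B_0(\Sigma,K_{\infty})$ via Lemma \ref{const_superadd_extension} --- this only requires monotonicity, normalization, and constant superadditivity of $I$, all of which hold, and quasiconcavity of $I$ is never needed --- then pass to the appropriate semicontinuous envelope $\tilde I^{-}$ or $\tilde I^{+}$ using Lemma \ref{basic_prop_cont_envelopes}, and only then apply Proposition \ref{prop_rep} on the unbounded domain $B_0(\Sigma,K_{\infty})$. The resulting auxiliary functionals are built directly on $B_0(\Sigma,K_{\infty})$ and are simultaneously monotone, quasiconcave, constant superadditive, and semicontinuous there, so Theorem 36 of \cite{cerreia2011uncertainty} and Proposition \ref{quasic_cs_dual_rep_base} apply with no further work; restricting to $\BK$ and using normalization finishes the proof. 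The missing idea is therefore not a new lemma but the observation that the extension step must precede the decomposition step.
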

\begin{proof}
It is immediate to see that \textit{\ref{qccsenvgen_2}} implies \textit{\ref{qccsenvgen_1}}. To prove the converse, notice by Lemma \ref{const_superadd_extension}, $I$ admits a monotone, normalized, and constant superadditive extension $\tilde{I}$. Moreover, we have two cases:
\begin{itemize}
\item If $K$ is lower open, then $\tilde{I}^-$, by the continuity of $I$ and Lemma \ref{basic_prop_cont_envelopes}, is a lower semicontinuous, monotone, and constant superadditive extension of $I$ on $B_0(\Sigma,K_{\infty})$. Then, the result follows from Proposition \ref{prop_rep}, Theorem 36 in \cite{cerreia2011uncertainty},\footnote{Theorem 36 in \cite{cerreia2011uncertainty} is provided for a function mapping in $\R$, but the exact same proof they provide applies to extended real-valued maps.} Proposition \ref{quasic_cs_dual_rep_base}, and the normalization of $I=\tilde{I}^+|_{\BK}$.
\item If $K$ is upper open or an order interval, then $\tilde{I}^+$, by the continuity of $I$ and Lemma \ref{basic_prop_cont_envelopes}, is an upper semicontinuous, monotone, and constant superadditive extension of $I$ on $B_0(\Sigma,K_{\infty})$. Then, the result follows from Proposition \ref{prop_rep}, Theorem 36 in \cite{cerreia2011uncertainty}, Proposition \ref{quasic_cs_dual_rep_base}, and the normalization of $I=\tilde{I}^+|_{\BK}$.
\end{itemize}
Thus, the proof is concluded.
\end{proof}

The following result is adapted from Proposition A.3 in \cite{RuoduCashSubad}. 


\begin{proposition}\label{dependent_variational_abs}
Fix a convex $K\subseteq \R$ and a continuous map $I:\BK\to \R$. Then, the following are equivalent
\begin{enumerate}[label=(\roman*)]
\item \label{item1_prop_rep_sub_var} $I$ is monotone, normalized, and constant superadditive.
\item \label{item2_prop_rep_sub_var} For all $\varphi\in \BK$, there exists a family $\Phi_{\varphi}$ of monotone, constant additive, and convex functionals, mapping from $\BK$ to $\R$, such that
\[
I(\varphi)=\min\limits_{J\in \Phi_{\varphi}}J(\varphi).
\]
Moreover, $\min_{J\in \Phi_{\varphi}}J(k)=k$ for all $k\in K$, and $\Phi_{\varphi_1}\subseteq \Phi_{\varphi_2}$ for all $\varphi_1\geq \varphi_2$ in $\BK$.

\item \label{item3_prop_rep_sub_var} For all $\varphi\in \BK$, there exists a family $\Psi_{\varphi}$ of monotone, constant additive, and concave functionals, mapping from $\BK$ to $\R$, such that
\[
I(\varphi)=\max\limits_{H\in \Psi_{\varphi}}H(\varphi).
\]
Moreover, $\max_{H\in \Psi_{\varphi}}H(k)=k$ for all $k\in K$, and $\Psi_{\varphi_1}\subseteq \Psi_{\varphi_2}$ for all $\varphi_2\geq \varphi_1$ in $\BK$.
\end{enumerate}
\end{proposition}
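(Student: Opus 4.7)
The plan is to adapt the argument behind Proposition A.3 in \cite{RuoduCashSubad} to our bounded (or potentially unbounded) convex set $K$, constructing the families $\Phi_\varphi$ and $\Psi_\varphi$ explicitly. The key observation is that, under constant superadditivity and monotonicity, for every $g\in\BK$ the affine-minimum functional built out of $g$ lies below $I$ and agrees with $I$ at $g$ itself, which is exactly what we need.

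For the direction \ref{item1_prop_rep_sub_var} $\Rightarrow$ \ref{item3_prop_rep_sub_var}, given $\varphi\in\BK$, I would set
\[
\Psi_\varphi=\left\{H_g:g\in\BK,\ g\leq\varphi\right\},\qquad H_g(\chi):=I(g)+\inf_{p\in\bigtriangleup(S)}\int_S(\chi-g)\textnormal{d}p=I(g)+\inf_{s\in S}\bigl(\chi(s)-g(s)\bigr).
\]
Each $H_g$ is real-valued on $\BK$ (as $g,\chi$ are simple), concave and constant additive since it is an infimum of affine functionals of the form $\chi\mapsto \int\chi\,\textnormal{d}p+I(g)-\int g\,\textnormal{d}p$, and monotone because $\chi\mapsto\inf_s(\chi(s)-g(s))$ is monotone. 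To show $H_g(\varphi)\leq I(\varphi)$ for every $g\leq\varphi$, set $m:=\inf_s(\varphi-g)\geq 0$. Then $\varphi\geq g+m$ with $g+m\in\BK$ by convexity of $K$, and combining monotonicity with constant superadditivity yields $I(\varphi)\geq I(g+m)\geq I(g)+m=H_g(\varphi)$. Picking $g=\varphi$ gives $H_\varphi(\varphi)=I(\varphi)$, so the max is attained; normalization follows from $H_k(k)=I(k)=k$, and $\Psi_{\varphi_1}\subseteq\Psi_{\varphi_2}$ for $\varphi_2\geq\varphi_1$ is immediate from the definition.

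For the direction \ref{item1_prop_rep_sub_var} $\Rightarrow$ \ref{item2_prop_rep_sub_var}, I would dualize: for each $\varphi\in\BK$ define
\[
\Phi_\varphi=\left\{J_g:g\in\BK,\ g\geq\varphi\right\},\qquad J_g(\chi):=I(g)+\sup_{s\in S}\bigl(\chi(s)-g(s)\bigr),
\]
which is a supremum of affine functionals, hence convex, constant additive, monotone, and real-valued. Writing $m:=\sup_s(\varphi-g)\leq 0$ for $g\geq\varphi$, we have $\varphi\leq g+m$ with $g+m\in\BK$, and Lemma \ref{banale} together with monotonicity give $I(\varphi)\leq I(g+m)\leq I(g)+m=J_g(\varphi)$. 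Again $J_\varphi(\varphi)=I(\varphi)$, normalization is automatic, and $\Phi_{\varphi_1}\subseteq\Phi_{\varphi_2}$ for $\varphi_1\geq\varphi_2$ holds by construction.

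The converse implications are straightforward verifications. Taking \ref{item3_prop_rep_sub_var} as an example: monotonicity of $I$ follows because each $H\in\Psi_\varphi$ is monotone and the sets $\Psi_\varphi$ are monotone in $\varphi$; normalization is the stated $\max_{H\in\Psi_k}H(k)=k$; and constant superadditivity follows since for $c\geq 0$ with $\varphi+c\in\BK$ we have $\Psi_{\varphi}\subseteq\Psi_{\varphi+c}$, so by constant additivity of each $H$,
\[
I(\varphi+c)=\max_{H\in\Psi_{\varphi+c}}H(\varphi+c)\geq \max_{H\in\Psi_{\varphi}}\bigl(H(\varphi)+c\bigr)=I(\varphi)+c,
\]
and the argument for \ref{item2_prop_rep_sub_var} is identical. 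The only mild technical point to watch is ensuring $H_g,J_g$ remain real-valued and the order-interval definitions respect $\BK$ when $K$ has endpoints; this is handled by the convexity of $K$ guaranteeing that $g+m\in\BK$ in the steps above.
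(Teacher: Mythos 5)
Your proof is correct and follows essentially the same route as the paper: the convex family $\Phi_\varphi$ is built from exactly the same functionals $\chi\mapsto I(g)+\sup_s(\chi(s)-g(s))$ over $g\geq\varphi$, with the same convexity-of-$K$ argument to keep $g+m$ inside $\BK$, and the converse directions use the same nesting-plus-constant-additivity computation. The only cosmetic difference is that you construct the concave family $\Psi_\varphi$ directly via $I(g)+\inf_s(\chi(s)-g(s))$ for $g\leq\varphi$, whereas the paper obtains it by reflecting the convex case through $\bar I(\varphi)=-I(-\varphi)$ and Lemma \ref{banalino}; the two constructions coincide.
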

\begin{proof}
We start proving that \textit{\ref{item1_prop_rep_sub_var}} implies \textit{\ref{item2_prop_rep_sub_var}}. Let $\varphi\in \BK$. For all $\xi\in \BK$ with $\xi\geq \varphi$, we have that for all $s\in S$,
\[
\xi(s)\geq \xi(s)+\sup\{\varphi-\xi\}\geq \xi(s)+\varphi(s)-\xi(s)=\varphi(s).
\]
Thus, since $K$ is convex, for all $\xi\in \BK$ with $\xi\geq \varphi$ it follows that $\xi+\sup\{\varphi-\xi\}\in \BK$ and $\sup\{\varphi-\xi\}\leq 0$. By monotonicity and constant superadditivity of $I$,
\[
I(\varphi)\leq\min\limits_{\xi\geq \varphi}I(\xi+\sup\{\varphi-\xi\})\leq \min\limits_{\xi\geq \varphi}I(\xi)+\sup\{\varphi-\xi\}\leq I(\varphi).
\]
For all $\xi,\psi\in \BK$, define $T_{\xi}(\psi)=I(\xi)+\sup\lbrace\psi-\xi\rbrace$. Therefore, letting $\Phi_{\varphi}=\lbrace T_{\xi} :\xi\in \BK,\ \xi\geq \varphi\rbrace$, we have that for all $\varphi\in \BK$,
\[
I(\varphi)=\min\limits_{J\in \Phi_{\varphi}}J(\varphi).
\]
Notice that each $J\in \Phi_{\varphi}$ is monotone, constant additive, and quasiconvex. Thus, each $J\in \Phi_{\varphi}$ is also convex. Moreover, as $I$ is normalized,  $\min_{J\in \Phi_{k}}J(k)=k$ for all $k\in \mathbb{R}$. Clearly, $\Phi_{\varphi_1}\subseteq \Phi_{\varphi_2}$ for all $\varphi_1\geq \varphi_2$ in $\BK$.

\textit{\ref{item1_prop_rep_sub_var}} implies \textit{\ref{item3_prop_rep_sub_var}}. By Lemma \ref{banalino}, $\bar{I}$ is constant superadditive. Since it is also monotone and normalized, we have that
\[
-I(-\varphi)=\min\limits_{J\in \Phi_{\varphi}}J(\varphi)
\]
for all $\varphi\in \B$ and some family of monotone, constant additive, and convex functionals. Thus, we have that $I(\varphi)=\max_{J\in \Phi_{-\varphi}}\bar{J}(\varphi)$ for all $\varphi\in \B$. Letting $\Psi_{\varphi}=\lbrace \bar{J}:J\in \Phi_{-\varphi}\rbrace$ the claim follows. In particular, notice that
\[
\varphi_1\leq \varphi_2 \Longrightarrow \Psi_{\varphi_1}\subseteq \Psi_{\varphi_2}
\]
for all $\varphi_1,\varphi_2\in \B$.
\par\medskip
\textit{\ref{item3_prop_rep_sub_var}} implies \textit{\ref{item1_prop_rep_sub_var}}.
Monotonicity and normalization are straightforward. We only need to prove that $I$ is constant superadditive. To this end, let $\varphi\in \BK$ and $k\geq 0$ with $\varphi+k\in \BK$. By assumption, we have $\Psi_{\varphi}\subseteq \Psi_{\varphi+k}$, and hence
\[
I(\varphi+k)=\max\limits_{J\in \Psi_{\varphi+k}}J(\varphi+k)=\max\limits_{J\in \Psi_{\varphi+k}}J(\varphi)+k\geq \max\limits_{J\in \Psi_{\varphi}}J(\varphi)+k=I(\varphi)+k.
\]
The proof that \textit{\ref{item2_prop_rep_sub_var}} implies \textit{\ref{item1_prop_rep_sub_var}} is proved analogously upon choosing $k\leq 0$.
\end{proof}
Therefore, applying convex duality results we retrieve the following representation. \cite{RuoduCashSubad} provides a specific form for the penalty functions, the interested reader can consult their Appendix A.

\begin{proposition}\label{dependent_variational}
Fix a convex $K\subseteq \R$ and a continuous map $I:\BK\to \R$. Then, the following are equivalent
\begin{enumerate}[label=(\roman*)]
\item \label{item1_prop_rep_sub_variat_pr} $I$ is monotone, normalized, and constant superadditive.
\item \label{item2_prop_rep_sub_variat_pr} For all $\varphi\in \BK$, there exists a family $C_{\varphi}$ of lower semicontinuous and convex functionals $c:\bigtriangleup(S)\to (-\infty,\infty]$ such that
\[
I(\varphi)=\max\limits_{c\in C_{\varphi}}\min\limits_{p\in \bigtriangleup(S)}\left\lbrace \int_S\varphi\textnormal{d}p+c(p)\right\rbrace.
\]
Moreover, $\max_{c\in C_k}\min_{p\in\bigtriangleup(S)}c(p)=0$ and $C_{\varphi_1}\subseteq C_{\varphi_2}$ for all $\varphi_2 \geq \varphi_1$ in $\BK$.
\end{enumerate}
\end{proposition}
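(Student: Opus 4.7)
The plan is to obtain this representation by applying Fenchel--Moreau duality to each of the concave niveloids in the representation already provided by Proposition \ref{dependent_variational_abs}. That proposition yields, for each $\varphi\in\BK$, a nested family $\Psi_\varphi$ of monotone, constant additive, concave functionals $H:\BK\to\R$ such that $I(\varphi)=\max_{H\in\Psi_\varphi}H(\varphi)$, with $\max_{H\in\Psi_\varphi}H(k)=k$ for all $k\in K$ and $\Psi_{\varphi_1}\subseteq\Psi_{\varphi_2}$ whenever $\varphi_1\leq\varphi_2$.

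To pass from \textit{(i)} to \textit{(ii)}, I would first extend each $H\in\Psi_\varphi$ to a monotone, constant additive, concave functional $\tilde H:\B\to\R$ using the sup-envelope techniques of Lemma \ref{const_superadd_extension} (monotonicity and constant additivity are immediate, while concavity is preserved because the extension is a supremum of linear perturbations of $H$). Standard niveloid duality in concave form then yields
\[
\tilde H(\varphi)=\min_{p\in\bigtriangleup(S)}\left\{\int\varphi\,\textnormal{d}p+c_H(p)\right\},\qquad c_H(p)=\sup_{\psi\in\B}\left\{\tilde H(\psi)-\int\psi\,\textnormal{d}p\right\},
\]
with $c_H$ lower semicontinuous, convex, and $(-\infty,\infty]$-valued (possibly negative, since the individual $H$'s need not be normalized). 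Setting $C_\varphi=\{c_H:H\in\Psi_\varphi\}$ produces the desired representation on $\BK$. The nesting $C_{\varphi_1}\subseteq C_{\varphi_2}$ for $\varphi_1\leq\varphi_2$ is inherited from the corresponding nesting of $\Psi_\varphi$. The normalization $\max_{c\in C_k}\min_p c(p)=0$ follows from constant additivity: $H(k)=k+\min_p c_H(p)$ gives $\min_p c_H(p)=H(k)-k$, and taking $\max_H$ yields $0$ by the normalization property in Proposition \ref{dependent_variational_abs}.

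For the converse \textit{(ii)}$\Rightarrow$\textit{(i)}, the three properties drop out of the representation. Monotonicity and normalization are immediate. For constant superadditivity, fix $\varphi\in\BK$ and $m\geq 0$ with $\varphi+m\in\BK$; using $C_\varphi\subseteq C_{\varphi+m}$,
\[
I(\varphi+m)=m+\max_{c\in C_{\varphi+m}}\min_p\left\{\int\varphi\,\textnormal{d}p+c(p)\right\}\geq m+\max_{c\in C_\varphi}\min_p\left\{\int\varphi\,\textnormal{d}p+c(p)\right\}=I(\varphi)+m.
\]

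The main obstacle is the duality step in the absence of any unboundedness assumption on $K$. The standard niveloid representation is typically stated for functionals defined on the full space $\B$, so each $H\in\Psi_\varphi$ must first be extended off $\BK$ while simultaneously preserving monotonicity, constant additivity, and concavity. This is exactly where the envelope and extension machinery of Lemmas \ref{const_superadd_extension} and \ref{basic_prop_cont_envelopes} is used, and it is precisely the step that lets the present result generalize Proposition A.3 of \cite{RuoduCashSubad} by dispensing with unboundedness. Once the extension is in place, the rest is a routine convex-duality argument.
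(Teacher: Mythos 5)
Your proposal is correct and follows essentially the same route as the paper: the paper's proof consists of invoking Proposition \ref{dependent_variational_abs} together with Proposition 4 and Theorem 3 of \cite{Niveloidsextension}, which is exactly the extension-plus-Fenchel--Moreau-duality step you spell out for the concave, constant additive, monotone functionals in $\Psi_\varphi$. The only cosmetic difference is that the paper delegates the extension and dual representation of each niveloid to the cited results rather than carrying them out via its own envelope lemmas, while your verification of the converse and of the normalization and nesting properties matches what the cited results deliver.
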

\begin{proof}
The result follows from Proposition \ref{dependent_variational_abs} above and Proposition 4 and Theorem 3 in \cite{Niveloidsextension}.\end{proof}

\subsubsection{Positive superhomogeneity}
Here, we report the representation results for positive superhomogeneity.

\begin{proposition}\label{super_homo_rep}
Fix a convex $K\subseteq \R$ with $\min K=0$ and a continuous map $I:\BK\to \bar{\R}$. Then, the following are equivalent
\begin{enumerate}[label=(\roman*)]
\item \label{item1_sup_homo_rep} $I$ is monotone, normalized, and positively superhomogeneous.
\item \label{item2_sup_homo_rep} There exists a family $\Phi$ of monotone, positively superhomogeneous, upper semicontinuous, and quasiconvex functionals mapping from $\BK$ to $\bar{\R}$ such that
\[
I(\varphi)=\min\limits_{J\in \Phi}J(\varphi)
\]
for all $\varphi\in \BK$. Moreover, $\min_{J\in \Phi}J(k)=k$ for all $k\in K$.

\item \label{item3_sup_homo_rep} There exists a family $\Psi$ of monotone, positively superhomogeneous, upper semicontinuous, and quasiconcave functionals mapping from $\BK$ to $\bar{\R}$ such that
\[
I(\varphi)=\max\limits_{H\in \Psi}H(\varphi)
\]
for all $\varphi\in \BK$. Moreover, $\max_{H\in \Psi}H(k)=k$ for all $k\in K$.
\end{enumerate}
\end{proposition}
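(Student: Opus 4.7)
The plan is to parallel the proof of Proposition \ref{prop_rep}, using the multiplicative auxiliary functionals $J_\xi$ and $H_\xi$ in place of their additive counterparts $I_\xi$ and $S_\xi$, and positive superhomogeneity in place of constant superadditivity. The assumption $\min K = 0$ restricts $K$ to $\{0\}$, $[0,a]$, $[0,a)$, or $[0,\infty)$; in particular $K \subseteq \R_+$, and $\BK$ is either an order interval or upper open. This is precisely the regime where Proposition \ref{cont_representations}\textit{(ii)} applies, and where Lemma \ref{basic_prop_cont_envelopes} guarantees that the upper semicontinuous envelope preserves monotonicity.

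The two directions \textit{(ii)} $\Rightarrow$ \textit{(i)} and \textit{(iii)} $\Rightarrow$ \textit{(i)} are immediate: monotonicity and normalization are inherited from the representations, while positive superhomogeneity is preserved under pointwise $\min$ and $\max$ since multiplication by $\gamma \in (0,1)$ commutes with these operations (e.g., $I(\gamma\varphi) = \min_J J(\gamma\varphi) \leq \min_J \gamma J(\varphi) = \gamma I(\varphi)$).

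For \textit{(i)} $\Rightarrow$ \textit{(ii)}, I would set $\Phi = \{J_\xi^+ : \xi \in \BK\}$. By Proposition \ref{auxiliary_functionals_basic} parts \textit{(i)} and \textit{(iv)}, each $J_\xi$ is monotone and quasiconvex (the latter using $K \subseteq \R_+$); by Proposition \ref{auxiliary_functionals_positive_sup}, each $J_\xi$ is positively superhomogeneous, since every admissible $K$ belongs to $\mathcal{K}$. Lemma \ref{basic_prop_cont_envelopes} transfers these three properties to $J_\xi^+$, which is upper semicontinuous by construction. The identity $I = \min_\xi J_\xi^+$ follows from Proposition \ref{cont_representations}\textit{(ii)}, invoking the continuity of $I$ together with $\BK$ being upper open or an order interval; the normalization $\min_\xi J_\xi^+(k) = k$ for $k \in K$ is then a direct consequence of $I$ being normalized.

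The direction \textit{(i)} $\Rightarrow$ \textit{(iii)}, with $\Psi = \{H_\xi^+ : \xi \in \BK\}$, is the main obstacle. Monotonicity and quasiconcavity of $H_\xi$ follow from Proposition \ref{auxiliary_functionals_basic}, but Proposition \ref{auxiliary_functionals_positive_sup} only records positive superhomogeneity of $H_\xi$ for $K = [0,\infty)$. For bounded $K$ I would argue directly: given $\lambda \in (0,1)$ with $\lambda\varphi \in \BK$ and $\alpha > 0$ with $\alpha\xi \leq \lambda\varphi$ and $\alpha\xi \in \BK$, the scalar $\beta := \alpha/\lambda$ satisfies $\beta\xi \leq \varphi$, and since $K$ is an interval containing $0$ and $\varphi \in \BK$, the inequalities $0 \leq \beta\xi \leq \varphi$ force $\beta\xi \in \BK$; hence $I(\alpha\xi) = I(\lambda\beta\xi) \leq \lambda I(\beta\xi) \leq \lambda H_\xi(\varphi)$, and taking the supremum over admissible $\alpha$ yields $H_\xi(\lambda\varphi) \leq \lambda H_\xi(\varphi)$. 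An alternative route is to first extend $I$ to $B_0(\Sigma,[0,\infty))$ via Lemma \ref{positive_superhomo_extension}, take its upper semicontinuous envelope, apply the $K=[0,\infty)$ case directly, and restrict back. Either way, Lemma \ref{basic_prop_cont_envelopes} transfers the properties to $H_\xi^+$ and Proposition \ref{cont_representations}\textit{(ii)} closes the argument as in the previous paragraph.
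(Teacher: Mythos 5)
Your proof is correct, and it uses the same skeleton as the paper (the auxiliary functionals $J_{\xi}$, $H_{\xi}$, their upper semicontinuous envelopes, and Proposition \ref{cont_representations}\textit{(ii)}), but it handles the one delicate step differently. The paper does \emph{not} work on $\BK$ directly: for both \textit{(ii)} and \textit{(iii)} it first extends $I$ to $B_0(\Sigma,K_{\infty})$ with $K_{\infty}=[0,\infty)$ via Lemma \ref{positive_superhomo_extension}, takes the envelope $\tilde{I}^+$ of the extension, and only then forms the auxiliary functionals; this keeps every case under the single $K=[0,\infty)$ clause of Proposition \ref{auxiliary_functionals_positive_sup}, which is exactly the clause needed for the superhomogeneity of $H_{\xi}$. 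You instead stay on $\BK$, observe that Proposition \ref{auxiliary_functionals_positive_sup} already covers $J_{\xi}$ for every $K\in\mathcal{K}$ with $\min K=0$, and close the only genuine gap --- positive superhomogeneity of $H_{\xi}$ when $K$ is bounded --- with a short direct computation. That computation is sound: from $\alpha\xi\leq\lambda\varphi$ you pass to $\beta=\alpha/\lambda$, the sandwich $0\leq\beta\xi\leq\varphi$ together with $0=\min K$ and convexity of $K$ gives $\beta\xi\in\BK$, and positive superhomogeneity of $I$ applied to $\beta\xi$ delivers $H_{\xi}(\lambda\varphi)\leq\lambda H_{\xi}(\varphi)$; in fact this argument works for every $K$ with $\min K=0$, not just bounded ones, so it could replace the extension entirely. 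Your route buys a more self-contained argument on the original domain; the paper's buys uniformity (one extension lemma covers both representations and feeds directly into the later duality step in Proposition \ref{super_homo_rep_UAP}). The only nit: the degenerate case $K=\{0\}$ is not literally in $\mathcal{K}$, so your claim that ``every admissible $K$ belongs to $\mathcal{K}$'' needs the trivial remark that the proposition is vacuous there.
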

\begin{proof}
It is immediate to see that both \textit{\ref{item2_sup_homo_rep}} and \textit{\ref{item3_sup_homo_rep}} imply \textit{\ref{item1_sup_homo_rep}}. If we prove that \textit{\ref{item1_sup_homo_rep}} implies \textit{\ref{item2_sup_homo_rep}} and that \textit{\ref{item1_sup_homo_rep}} implies \textit{\ref{item3_sup_homo_rep}}, then the proof is concluded. We start with \textit{\ref{item1_sup_homo_rep}} implies \textit{\ref{item2_sup_homo_rep}}. First notice that by Lemma \ref{positive_superhomo_extension}, there exists a positively superhomogeneous, monotone, and normalized extension $\tilde{I}$ of $I$ on $B_0(\Sigma,K_{\infty})$. Since $I$ is continuous, and $K$ is either upper open or an order interval, we have that $\tilde{I}^+$ is an upper semicontinuous, positively superhomogeneous, and monotone extension of $I$ on $B_0(\Sigma,K_{\infty})$. 
By Proposition \ref{cont_representations}, since $K_{\infty}$ is upper open and $\tilde{I}^{+}$ is an extension of $I$, we have that
\[
I(\varphi)=\tilde{I}^+(\varphi)=\min\limits_{\xi\in B_0(\Sigma,K_{\infty})}J^+_{\xi}(\varphi)
\]
for all $\varphi\in \BK$, where the $J_{\xi}$'s are the auxiliary functionals associated to $\tilde{I}^+$. By Propositions \ref{auxiliary_functionals_basic}, \ref{auxiliary_functionals_positive_sup}, and Lemma \ref{basic_prop_cont_envelopes}, $J^{+}_{\xi}$ is monotone, positively superhomogeneous, upper semicontinuous, and quasiconvex. Thus, letting $\Phi=\lbrace J^+_{\xi}:\xi\in B_0(\Sigma,K_{\infty}) \rbrace$ according to the cases mentioned above, the claim follows. By the normalization of $I$, we have that $\min_{J\in \Phi}J(k)=\tilde{I}^+(k)=I(k)=k$ for all $k\in K$.
\par\medskip
To prove that \textit{\ref{item1_sup_homo_rep}} implies \textit{\ref{item3_sup_homo_rep}}, we can apply the same exact reasoning to the family $\Psi=\lbrace H^+_{\xi}:\xi\in B_0(\Sigma,K_{\infty}) \rbrace$ of auxiliary functionals associated to the extension $\tilde{I}^+$ of $I$. In particular, by Proposition \ref{cont_representations}, since $K_{\infty}$ is upper open and $\tilde{I}^+$ is an extension of $I$, we have that
\[
I(\varphi)=\tilde{I}^+(\varphi)=\max\limits_{\xi\in B_0(\Sigma,K_{\infty})}H^+_{\xi}(\varphi)
\]
for all $\varphi\in \BK$. By Propositions \ref{auxiliary_functionals_basic}, \ref{auxiliary_functionals_positive_sup}, and Lemma \ref{basic_prop_cont_envelopes} each $H^{+}_{\xi}$ is monotone, upper semicontinuous, quasiconvex, and positively superhomogeneous. Thus, letting $\Psi=\lbrace H^+_{\xi}:\xi\in B_0(\Sigma,K_{\infty}) \rbrace$, the claim follows. By the normalization of $I$, we have that $\max_{H\in \Psi}H(k)=\tilde{I}^+(k)=I(k)=k$ for all $k\in K$.
\end{proof}

\begin{proposition}\label{super_homo_rep_UAP}
Fix a convex $K\subseteq \R$ with $\min K=0$ and a continuous map $I:\BK\to \R$. Then, the following are equivalent
\begin{enumerate}[label=(\roman*)]
\item \label{item1_sup_homo_rep_UAP} $I$ is monotone, normalized, and positively superhomogeneous.
\item \label{item2_sup_homo_rep_UAP} There exists a family $\mathcal{G}$ of positively superhomogeneous and increasing in the first argument, and quasiconvex functions $G:\R\times \bigtriangleup(S)\to \R$ such that
\[
I(\varphi)=\max\limits_{G\in \mathcal{G}}\inf\limits_{p\in \bigtriangleup(S)}G\left(\int_S \varphi \textnormal{d}p,p\right)
\]
for all $\varphi\in \BK$. Moreover, $\max_{G\in \mathcal{G}}\inf_{p\in \bigtriangleup(S)}G(k,p)=k$ for all $k\in K$.
\end{enumerate}
\end{proposition}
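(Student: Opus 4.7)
The implication \textit{\ref{item2_sup_homo_rep_UAP}} $\Rightarrow$ \textit{\ref{item1_sup_homo_rep_UAP}} is straightforward: monotonicity and the fact that $G$ is increasing in the first argument give monotonicity of $I$; the normalization condition gives normalization of $I$; and the positive superhomogeneity of each $G(\cdot,p)$, together with the $\max$--$\inf$ structure and linearity of $\varphi\mapsto \int\varphi\,\textnormal{d}p$, gives positive superhomogeneity of $I$. So the substance lies in \textit{\ref{item1_sup_homo_rep_UAP}} $\Rightarrow$ \textit{\ref{item2_sup_homo_rep_UAP}}, which I will prove by mirroring the strategy used for Proposition \ref{quasic_cs_dual_rep}, replacing the constant-superadditivity machinery with its positive-superhomogeneity counterpart.

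First, I would extend $I$ to a larger domain on which the quasiconvex duality of \cite{cerreia2011uncertainty} (Theorem 36) applies cleanly. Since $\min K = 0$, the set $K_{\infty}=K\cup [\sup K,\infty)$ coincides with $[0,\infty)$. By Lemma \ref{positive_superhomo_extension}, $I$ admits a monotone, normalized, and positively superhomogeneous extension $\tilde{I}:B_0(\Sigma,[0,\infty))\to\bar{\R}$. Then I would pass to the upper semicontinuous envelope $\tilde{I}^{+}$: because $B_0(\Sigma,[0,\infty))$ is upper open, Lemma \ref{basic_prop_cont_envelopes}(ii),(v) gives that $\tilde{I}^{+}$ is monotone, positively superhomogeneous, and upper semicontinuous, and (using continuity of $I$ exactly as at the end of the proof of Proposition \ref{quasic_cs_dual_rep}) it remains an extension of $I$ and is real-valued on the relevant domain.

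Next, I would invoke Proposition \ref{super_homo_rep} applied to $\tilde{I}^{+}$ on $B_0(\Sigma,[0,\infty))$ to obtain a family $\Psi$ of monotone, positively superhomogeneous, upper semicontinuous, and quasiconcave functionals $H:B_0(\Sigma,[0,\infty))\to\bar{\R}$ with
\[
\tilde{I}^{+}(\varphi)=\max_{H\in \Psi}H(\varphi)\qquad \forall\varphi\in B_0(\Sigma,[0,\infty)),
\]
and with $\max_{H\in\Psi}H(k)=k$ for all $k\in [0,\infty)$. For each such $H$, Theorem 36 in \cite{cerreia2011uncertainty} yields a quasiconvex, monotone-in-the-first-argument, upper semicontinuous function $G_{H}:\R\times \bigtriangleup(S)\to \bar{\R}$ such that $H(\varphi)=\inf_{p\in\bigtriangleup(S)}G_{H}\!\left(\int\varphi\,\textnormal{d}p,p\right)$. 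Setting $\mathcal{G}=\{G_{H}:H\in\Psi\}$, Proposition \ref{quasic_ps_dual_rep} applies with $K=[0,\infty)$ (taking $b=0$), and it guarantees that each $G_{H}$ is positively superhomogeneous in the first argument. Restricting to $\varphi\in \BK$ gives the desired representation, and the identity $\max_{G\in \mathcal{G}}\inf_{p}G(k,p)=k$ for $k\in K$ follows from the normalization of $\tilde{I}^{+}$ on $[0,\infty)$, which agrees with $I$ on $K$.

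The main obstacle is bookkeeping around the codomain: although $\tilde{I}$ and $\tilde{I}^{+}$ a priori take values in $\bar{\R}$, one must verify that the final functional $\varphi\mapsto \max_{G\in\mathcal{G}}\inf_{p}G(\int\varphi\,\textnormal{d}p,p)$ is real-valued on $\BK$, i.e.\ that no $+\infty$ is produced after restriction. This is handled by the same argument as in Lemma \ref{positive_superhomo_extension} (bounding by $\alpha\sup\varphi$) combined with the fact that $I$ itself is real-valued and continuous on $\BK$, so that on $\BK$ one recovers $\tilde{I}^{+}=I$ and, via the chain $I=\max_H H=\max_{G_H}\inf_p G_H$, real values throughout.
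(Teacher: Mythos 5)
Your proposal is correct and follows essentially the same route as the paper's own proof: extend $I$ to $B_0(\Sigma,K_\infty)=B_0(\Sigma,[0,\infty))$ via Lemma \ref{positive_superhomo_extension}, pass to the upper semicontinuous envelope using Lemma \ref{basic_prop_cont_envelopes}, then combine Proposition \ref{super_homo_rep}, Theorem 36 of \cite{cerreia2011uncertainty}, Proposition \ref{quasic_ps_dual_rep}, and the normalization $I=\tilde{I}^+|_{\BK}$. Your added bookkeeping on real-valuedness is a harmless elaboration of what the paper leaves implicit.
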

\begin{proof}
That \textit{\ref{item2_sup_homo_rep_UAP}} implies \textit{\ref{item1_sup_homo_rep_UAP}} is immediate to see. To prove the converse, notice by Lemma \ref{positive_superhomo_extension}, $I$ admits a monotone, normalized, and positively superhomogeneous extension $\tilde{I}$. Moreover, as $K$ is either upper open or an order interval we have that  $\tilde{I}^+$, by the continuity of $I$ and Lemma \ref{basic_prop_cont_envelopes}, is an upper semicontinuous, monotone, and positively superhomogeneous extension of $I$ on $B_0(\Sigma,K_{\infty})$. Then, the result follows from Proposition \ref{super_homo_rep}, Theorem 36 in \cite{cerreia2011uncertainty},\footnote{Theorem 36 in \cite{cerreia2011uncertainty} is provided for a function mapping in $\R$, but the exact same proof they provide applies to extended real-valued maps.}  Proposition \ref{quasic_ps_dual_rep}, and the normalization of $I=\tilde{I}^+|_{\BK}$.
\end{proof}

\begin{proposition}\label{super_homo_rep_confidence_style_abs}
Fix a convex $K\subseteq \R$ with $\min K=0$ and a continuous map $I:\BK\to \R$. Then, the following are equivalent
\begin{enumerate}[label=(\roman*)]
\item \label{item1_sup_homo_rep_conf} $I$ is monotone, normalized, and positively superhomogeneous.
\item \label{item2_sup_homo_rep_conf} For all $\varphi\in \BK$, there exists a family $\Phi_{\varphi}$ of monotone, sublinear functionals, mapping from $\BK$ to $\R$, such that
\[
I(\varphi)=\min\limits_{J\in \Phi_{\varphi}}J(\varphi).
\]
Moreover, $\min\limits_{J\in \Phi_{\varphi}}J(k)=k$ for all $k\in K$, and $\Phi_{\varphi_1}\subseteq \Phi_{\varphi_2}$ for all $\varphi_1\geq \varphi_2$ in $\BK$.

\item \label{item3_sup_homo_rep_conf} For all $\varphi\in \BK$, there exists a family $\Psi_{\varphi}$ of monotone, superlinear functionals, mapping from $\BK$ to $\R$, such that
\[
I(\varphi)=\max\limits_{H\in \Psi_{\varphi}}H(\varphi).
\]
Moreover, $\max_{H\in \Psi_{\varphi}}H(k)=k$ for all $k\in K$, and $\Psi_{\varphi_1}\subseteq \Psi_{\varphi_2}$ for all $\varphi_2\geq \varphi_1$ in $\BK$.
\end{enumerate}
\end{proposition}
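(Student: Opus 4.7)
The plan is to mirror the proof of Proposition \ref{dependent_variational_abs}, replacing the constant-additive envelope $\psi\mapsto I(\xi)+\sup\{\psi-\xi\}$ with a positively homogeneous analogue built from coordinate ratios. Because $\min K=0$ forces $\BK\subseteq B_0(\Sigma,[0,\infty))$, such ratios are well defined on the support of any nonzero witness $\psi$ and inherit the monotonicity and linearity needed.

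To establish \emph{(i) implies (iii)}, for each $\psi\in\BK\setminus\{0\}$ I would define
\[
H_{\psi}(\zeta)=I(\psi)\cdot \alpha_{\psi}(\zeta),\qquad \alpha_{\psi}(\zeta)=\inf_{s\in S:\psi(s)>0}\frac{\zeta(s)}{\psi(s)},
\]
and set $H_{0}\equiv 0$. Since $\alpha_{\psi}$ is an infimum over a finite collection of monotone linear maps in $\zeta$, it is monotone, positively homogeneous, and superadditive; multiplying by the scalar $I(\psi)\geq 0$ (which holds by monotonicity together with the normalization $I(0)=0$) preserves these properties, so each $H_{\psi}$ is monotone and superlinear with $H_{\psi}(\psi)=I(\psi)$. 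I would then put $\Psi_{\varphi}=\{H_{\psi}:\psi\in\BK,\ \psi\leq\varphi\}$. The nesting $\Psi_{\varphi_1}\subseteq\Psi_{\varphi_2}$ for $\varphi_1\leq\varphi_2$ is immediate, and for any $\psi\leq\varphi$ the number $\alpha:=\alpha_{\psi}(\varphi)\geq 1$ satisfies $\alpha\psi\leq\varphi$ pointwise, so monotonicity of $I$ followed by positive superhomogeneity (in its equivalent $\alpha\geq 1$ form $I(\alpha\psi)\geq \alpha I(\psi)$) gives
\[
I(\varphi)\geq I(\alpha\psi)\geq \alpha I(\psi)=H_{\psi}(\varphi).
\]
Taking the max over $\Psi_{\varphi}$ yields $\max_{H\in\Psi_\varphi}H(\varphi)\leq I(\varphi)$, while the witness $H_{\varphi}\in\Psi_\varphi$ attains equality; the normalization at constants follows from $H_{k}(k)=I(k)=k$.

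For \emph{(i) implies (ii)} I would run the dual construction
\[
T_{\psi}(\zeta)=I(\psi)\cdot \beta_{\psi}(\zeta),\qquad \beta_{\psi}(\zeta)=\sup_{s\in S:\psi(s)>0}\frac{\zeta(s)}{\psi(s)},
\]
which is monotone and sublinear for the dual reasons ($\beta_\psi$ is a supremum of monotone linear maps). For $\psi\geq\varphi$ with $\psi\not\equiv 0$, $\alpha:=\beta_{\psi}(\varphi)\in[0,1]$ and $\alpha\psi\geq\varphi$, so monotonicity plus positive superhomogeneity in its original $\alpha\in(0,1)$ form gives $I(\varphi)\leq I(\alpha\psi)\leq \alpha I(\psi)=T_{\psi}(\varphi)$. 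Setting $\Phi_{\varphi}=\{T_{\psi}:\psi\in\BK,\ \psi\geq\varphi\}$ delivers $I(\varphi)=\min_{J\in\Phi_\varphi}J(\varphi)$ with the prescribed reversed nesting and constant-normalization.

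The converses \emph{(ii) implies (i)} and \emph{(iii) implies (i)} are routine. Monotonicity and normalization read off directly from the representations. For positive superhomogeneity, fix $\alpha\in(0,1)$ and $\varphi\in\BK$ with $\alpha\varphi\in\BK$; since $\alpha\varphi\leq\varphi$, nesting gives $\Psi_{\alpha\varphi}\subseteq\Psi_{\varphi}$, and positive homogeneity of every superlinear $H$ yields
\[
I(\alpha\varphi)=\max_{H\in\Psi_{\alpha\varphi}}H(\alpha\varphi)=\alpha\max_{H\in\Psi_{\alpha\varphi}}H(\varphi)\leq\alpha\max_{H\in\Psi_{\varphi}}H(\varphi)=\alpha I(\varphi),
\]
with the analogous argument via $\Phi_{\varphi}\subseteq\Phi_{\alpha\varphi}$ handling case \textit{(ii)}. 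The principal bookkeeping nuisance, which I expect to be the only real technicality, is handling states where $\psi(s)=0$: restricting the defining inf/sup to $\{s:\psi(s)>0\}$ is consistent with the required pointwise inequalities because $\psi\leq\varphi$ forces $\alpha\psi\leq\varphi$ on those states trivially, and $\psi\geq\varphi\geq 0$ forces $\varphi$ to vanish wherever $\psi$ does.
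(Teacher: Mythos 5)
Your argument is correct and rests on the same core device as the paper's proof: act-dependent families of positively homogeneous envelopes of the form $\xi\mapsto I(\xi)\cdot(\text{ratio of }\varphi\text{ to }\xi)$, with the nesting of the index sets doing the work in the converse directions. For \textit{(i)}$\Rightarrow$\textit{(iii)} your $H_{\psi}$ is essentially the paper's $P_{\xi}$, and your converse arguments match the paper's. The one place where you genuinely diverge is \textit{(i)}$\Rightarrow$\textit{(ii)}: the paper insists on strictly positive witnesses $\xi>0$ and the unrestricted ratio $\sup_{s}\varphi(s)/\xi(s)$, so when $\varphi$ vanishes somewhere the minimizing witness $\xi=\varphi$ is unavailable and they must first extend $I$ to $B_0(\Sigma,K_{\infty})$ (Lemma \ref{positive_superhomo_extension}), pass to the upper semicontinuous envelope $\tilde{I}^{+}$, and run a limiting argument along $\varphi+\tfrac{1}{n}$. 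You instead restrict the supremum to $\{s:\psi(s)>0\}$, observe that $\psi\geq\varphi\geq 0$ forces $\varphi$ to vanish off the support of $\psi$ so that $\beta_{\psi}(\varphi)\psi\geq\varphi$ still holds pointwise, and thereby keep $\psi=\varphi$ as an admissible minimizer; this eliminates the extension and the semicontinuity machinery entirely while delivering the same sublinear family. The only bookkeeping you should make explicit is the degenerate witness $\psi\equiv 0$ (exclude it from the index set or set $T_{0}\equiv 0$, exactly as you do with $H_{0}$), and note that the normalization identity is verified, as in the paper's own proof, at the family indexed by the constant itself.
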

\begin{proof}
We start showing that \textit{\ref{item1_sup_homo_rep_conf}} implies \textit{\ref{item2_sup_homo_rep_conf}}. By Lemma \ref{positive_superhomo_extension} we have that $I$ admits a positively superhomogeneous, and monotone extension $\tilde{I}$ on $B_0(\Sigma,K_{\infty})$. Moreover, since $I$ is continuous, and $K$ is either upper open or an order interval, Lemma \ref{basic_prop_cont_envelopes} yields that $\tilde{I}^+$ is an upper semicontinuous, positively superhomogeneous, and monotone extension of $I$. Now we provide the representation result directly for $\tilde{I}^+$. Notice that here $K_{\infty}=[0,\infty)$. Fix $\varphi\in B_0(\Sigma,K_{\infty})$. For all $\xi\geq \varphi$ in $B_0(\Sigma,K_{\infty})$ with $\xi>0$, we have $\sup\left\lbrace \varphi/\xi \right\rbrace\leq 1$, thus by monotonicity and positively superhomogeneity of $I$,
\small
\[
\tilde{I}^+(\varphi)\leq\min\limits_{\xi>0,\xi\geq \varphi}\tilde{I}^+\left(\xi\sup\left\lbrace\frac{\varphi}{\xi}\right\rbrace\right)\leq \min\limits_{\xi>0,\xi\geq \varphi}\tilde{I}^+(\xi)\sup\left\lbrace\frac{\varphi}{\xi}\right\rbrace\leq \tilde{I}^+\left(\varphi+\frac{1}{n}\right)\sup\left\lbrace \frac{\varphi}{\varphi+\frac{1}{n}}\right\rbrace\leq \tilde{I}^+\left(\varphi+\frac{1}{n}\right)
\]
\normalsize
for all $n\in \mathbb{N}$. By upper semicontinuity of $\tilde{I}^+$ it follows that, for all $\varphi\in B_0(\Sigma,K_{\infty})$,
\[
I(\varphi)=\limsup\limits_{n\to\infty}\tilde{I}^+(\varphi)\leq \limsup\limits_{n\to\infty} \min\limits_{\xi>0,\xi\geq \varphi}\tilde{I}^+(\xi)\sup\left\lbrace\frac{\varphi}{\xi}\right\rbrace\leq \limsup\limits_{n\to\infty}\tilde{I}^+\left(\varphi+\frac{1}{n}\right)\leq I(\varphi)
\]
Therefore, for all $\varphi\in \BK$,
\[
I(\varphi)=\tilde{I}^+(\varphi)=\min\limits_{\xi>0,\xi\geq \varphi}I(\xi)\sup\left\lbrace\frac{\varphi}{\xi}\right\rbrace.
\]
For all $\xi\in B_0(\Sigma,K_{\infty})$ with $\xi>0$, define $T_{\xi}(\psi)= I(\xi)\sup\lbrace\psi/\xi\rbrace$ for all $\psi\in B_0(\Sigma,K_{\infty})$. Therefore, letting $\Phi_{\varphi}=\lbrace T_{\xi}:\xi\in B_0(\Sigma,K_{\infty}),\xi>0,\xi\geq \varphi \rbrace$, we have that
\[
I(\varphi)=\min\limits_{J\in \Phi_{\varphi}}J(\varphi)
\]
for all $\varphi\in \BK$. Since $I(\xi)\geq I(0)=0$ for all $\xi\in \BK$, we have that each $J$ is monotone and sublinear. Clearly, $\Phi_{\varphi_1}\subseteq \Phi_{\varphi_2}$ for all $\varphi_1\geq \varphi_2$ in $\BK$. Since $I$ is normalized, $\min_{J\in \Phi_{\varphi}}J(k)=k$ for all $k\in K$.

Now we prove that \textit{\ref{item1_sup_homo_rep_conf}} implies \textit{\ref{item3_sup_homo_rep_conf}}. We start showing that \textit{\ref{item1_sup_homo_rep_conf}} implies \textit{\ref{item2_sup_homo_rep_conf}}. By Lemma \ref{positive_superhomo_extension} we have that $I$ admits a positively superhomogeneous, and monotone extension $\tilde{I}$ on $B_0(\Sigma,K_{\infty})$. Moreover, since $I$ is continuous, and $K$ is either upper open or an order interval, Lemma \ref{basic_prop_cont_envelopes} yields that $\tilde{I}^+$ is an upper semicontinuous, positively superhomogeneous, and monotone normalized extension of $I$. Let $\varphi\in B_0(\Sigma,K_{\infty})$. If $\varphi(s)=0$ for all $s\in S$, we let $\Psi_{\varphi}=\left\lbrace \mathbf{0}\right\rbrace$ and, since $0\in K$, by the normalization of $I$ we have that $I(\varphi)=\max_{T\in \Psi_{\varphi}}T(\varphi)=0$. Clearly $\mathbf{0}$ is monotone and superlinear. Now suppose that $\varphi(s)>0$ for some $s\in S$. Let $\xi\in B_0(\Sigma,K_{\infty})$ with $\varphi\geq \xi$ and $\xi(s)>0$ for some $s\in S$. We have that for all $s'\in S$,
\[
\varphi(s')\geq \xi(s')\inf\limits_{s:\xi(s)>0}\left\lbrace \frac{\varphi(s)}{\xi(s)} \right\rbrace.
\]
Indeed, if $\xi(s')=0$, then the inequality follows as $\varphi(s')\geq 0$. If $\xi(s')>0$, then
\[
\varphi(s')=\xi(s')\frac{\varphi(s')}{\xi(s')}\geq \xi(s')\inf\limits_{s:\xi(s)>0}\left\lbrace \frac{\varphi(s)}{\xi(s)} \right\rbrace.
\]
Paired with monotonicity and positive superhomogeneity of $\tilde{I}^+$, this inequality yields the following
\[
\tilde{I}^+(\varphi)\geq \max\limits_{\xi\neq \mathbf{0},\varphi\geq \xi}\tilde{I}^+\left(\xi\inf\limits_{s:\xi(s)>0}\left\lbrace \frac{\varphi(s)}{\xi(s)}\right\rbrace\right)\geq \max\limits_{\xi\neq \mathbf{0},\varphi\geq \xi}\tilde{I}^+(\xi)\inf\limits_{s:\xi(s)>0}\left\lbrace \frac{\varphi(s)}{\xi(s)}\right\rbrace\geq \tilde{I}^+(\varphi)
\]
where the second-to-last inequality follows from $\inf_{s:\xi(s)>0}\left\lbrace \varphi(s)/\xi(s)\right\rbrace\geq 1$, for all non-identically zero $\xi\in B_0(\Sigma,K_{\infty})$ with $\varphi\geq \xi$.
For all non-identically zero $\xi\in B_0(\Sigma,K_{\infty})$, define $P_{\xi}(\psi)=I(\xi)\inf_{s:\xi(s)>0}\left\lbrace \psi(s)/\xi(s) \right\rbrace$ for all $\psi\in B_0(\Sigma,K_{\infty})$. Since for all $\xi\in B_0(\Sigma,K_{\infty})$, $\tilde{I}^+(\xi)\geq \tilde{I}^+(0)=I(0)=0$, each $P_{\xi}$ is monotone and superlinear. Therefore, letting $\Psi_{\varphi}=\lbrace P_{\xi}:\xi\in B_0(\Sigma,K_{\infty}),\varphi\geq \xi, \xi\neq \mathbf{0} \rbrace$ for all $\varphi\in \BK$, we have found a family of monotone and superlinear functionals such that
\[
I(\varphi)=\tilde{I}^+(\varphi)=\max\limits_{H\in \Psi_{\varphi}}H(\varphi)
\]
for all $\varphi\in \BK$. Clearly, $\Psi_{\varphi_1}\subseteq \Psi_{\varphi_2}$ for all $\varphi_1\leq \varphi_2$ in $\BK$. Since $I$ is normalized, $\max_{H\in \Psi_{\varphi}}H(k)=k$ for all $k\in K$.

To conclude we prove that \textit{\ref{item2_sup_homo_rep_conf}} implies \textit{\ref{item1_sup_homo_rep_conf}}, and that \textit{\ref{item3_sup_homo_rep_conf}} implies \textit{\ref{item1_sup_homo_rep_conf}}. Suppose that \textit{\ref{item2_sup_homo_rep_conf}} holds. It is immediate to notice that $I$ must be monotone and normalized. Positive superhomogeneity follows from the fact that for all $\varphi\in \BK$ and $\alpha\geq 1$, we have $
\Phi_{\alpha\varphi}\subseteq \Phi_{\varphi}$. Indeed, these inclusions imply that
\[
I(\alpha\varphi)=\min\limits_{J\in \Phi_{\alpha\varphi}}J(\alpha\varphi)=\alpha \min\limits_{J\in \Phi_{\alpha\varphi}}J(\varphi)\geq \alpha \min\limits_{J\in \Phi_{\varphi}}J(\varphi)=\alpha I(\varphi).
\]
The proof of \textit{\ref{item3_sup_homo_rep_conf}} implies \textit{\ref{item1_sup_homo_rep_conf}} is analogous upon observing that $\Psi_{\varphi}\subseteq \Psi_{\alpha\varphi}.$
\end{proof}

\begin{proposition}\label{super_homo_rep_confidence_style}
Fix a convex $K\subseteq \R$ with $\min K=0$ and a continuous map $I:\BK\to \R$. Then, the following are equivalent
\begin{enumerate}[label=(\roman*)]
\item \label{item1_sup_homo_rep_conf_Faro} $I$ is monotone, normalized, and positively superhomogeneous.

\item \label{item2_sup_homo_rep_conf_Faro} For all $\varphi\in \BK$, there exists a set $D_{\varphi}$ of upper semicontinuous and quasiconcave $d:\bigtriangleup(S)\to [0,\infty]$,
\[
I(\varphi)=\max\limits_{d\in D_{\varphi}}\inf\limits_{p\in \bigtriangleup(S)}\frac{\int_{S}\varphi\textnormal{d}p}{d(p)}.
\]
Moreover, $\max_{d\in D_{k}}\inf_{p\in \bigtriangleup(S)}k/d(p)=k$ for all $k\in K$ and $D_{\varphi_1}\subseteq D_{\varphi_2}$ for all $\varphi_2\geq \varphi_1$ in $\BK$.
\end{enumerate}
\end{proposition}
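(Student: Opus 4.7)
The converse implication is direct. Monotonicity follows because $\varphi\geq \psi$ gives $D_\psi\subseteq D_\varphi$ and $\int\varphi \mathrm{d}p\geq \int\psi \mathrm{d}p$, so the max-inf only grows. Normalization is stipulated. For positive superhomogeneity, fix $\alpha\geq 1$ and $\varphi\in \BK$ with $\alpha\varphi\in \BK$: since $\min K=0$ forces $\varphi\geq 0$, we get $\alpha\varphi\geq \varphi$, hence $D_\varphi\subseteq D_{\alpha\varphi}$, and
\[
I(\alpha\varphi)=\max_{d\in D_{\alpha\varphi}}\inf_{p}\frac{\alpha\int\varphi \mathrm{d}p}{d(p)}\geq \alpha \max_{d\in D_{\varphi}}\inf_{p}\frac{\int\varphi \mathrm{d}p}{d(p)}=\alpha I(\varphi).
\]

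For the forward direction, the plan is to bootstrap from the superlinear envelope representation already established in Proposition \ref{super_homo_rep_confidence_style_abs}. Namely, I would invoke that proposition to obtain, for every $\varphi\in \BK$, a family $\Psi_\varphi$ of monotone, superlinear functionals $H:\BK\to \R$ with $I(\varphi)=\max_{H\in \Psi_\varphi}H(\varphi)$, with $\max_{H\in \Psi_\varphi}H(k)=k$ for all $k\in K$, and with $\Psi_{\varphi_1}\subseteq \Psi_{\varphi_2}$ whenever $\varphi_1\leq \varphi_2$. The strategy is then to associate to each such $H$ a single confidence function $d_H:\bigtriangleup(S)\to [0,\infty]$ via the Fenchel-type dual
\[
d_H(p)=\inf\left\lbrace \frac{\int_S \psi \mathrm{d}p}{H(\psi)}:\psi\in \BK,\ H(\psi)>0\right\rbrace,
\]
with the conventions $\inf\emptyset=\infty$ and $a/0=\infty$ for $a>0$. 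Setting $D_\varphi=\left\lbrace d_H:H\in \Psi_\varphi\right\rbrace$, the nesting $D_{\varphi_1}\subseteq D_{\varphi_2}$ for $\varphi_2\geq \varphi_1$ and the normalization $\max_{d\in D_k}\inf_p k/d(p)=k$ transfer directly from the corresponding properties of the $\Psi_\varphi$'s.

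The main obstacle is proving that each $d_H$ is upper semicontinuous and quasiconcave, and that the duality identity $H(\psi)=\inf_{p\in \bigtriangleup(S)}\int \psi \mathrm{d}p/d_H(p)$ holds pointwise on $\BK$. To control regularity I would first extend $H$ via Lemma \ref{positive_superhomo_extension} to a monotone, superlinear functional on $B_0(\Sigma,[0,\infty))$ and pass to its upper semicontinuous envelope using Lemma \ref{basic_prop_cont_envelopes}, which preserves monotonicity, superadditivity, and positive homogeneity. On this upper-open domain the extended $H$ is concave, positively homogeneous, monotone, and upper semicontinuous; standard Chateauneuf--Faro confidence-function duality (in the spirit of \cite{chateauneuf2009ambiguity}, adapted as in the extension machinery of \cite{Niveloidsextension} to avoid unboundedness hypotheses) then yields the identity above, with $d_H$ automatically upper semicontinuous and quasiconcave as the infimum of a family of affine-ratio maps in $p$. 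Restricting back to $\BK$ delivers the claimed representation.
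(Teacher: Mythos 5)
Your proposal is correct in substance and shares the paper's skeleton: both proofs reduce the forward direction to the superlinear envelope families $\Psi_\varphi$ of Proposition \ref{super_homo_rep_confidence_style_abs} and then convert each monotone, superlinear $H$ into a single confidence function, with the nesting and normalization transferring verbatim; the converse is handled identically. Where you diverge is in the dualization step. The paper defines $d_H(p)=1/G_H(1,p)$ with $G_H(t,p)=\sup\{H(\psi):\int\psi\,\textnormal{d}p\leq t\}$, obtains the identity $H(\psi)=\inf_p\int\psi\,\textnormal{d}p\cdot G_H(1,p)$ from Theorem 36 in \cite{cerreia2011uncertainty} combined with Lemma \ref{quasic_ph_dual_rep}, and gets upper semicontinuity and quasiconcavity of $d_H$ by reading off the sublevel sets of $d_H$ as superlevel sets of the lower semicontinuous, quasiconvex $G_H(1,\cdot)$. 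Your Fenchel-type definition $d_H(p)=\inf\{\int\psi\,\textnormal{d}p/H(\psi):H(\psi)>0\}$ coincides with $1/G_H(1,p)$ on the cone by homogeneity, and your regularity argument (pointwise infimum of continuous affine maps in $p$ is concave and upper semicontinuous) is actually more elementary and gives concavity, not just quasiconcavity. Two points to tighten. First, Lemma \ref{positive_superhomo_extension} as stated requires the functional to be normalized, which the individual $H\in\Psi_\varphi$ are not; this is harmless because the functionals $P_\xi$ produced in the proof of Proposition \ref{super_homo_rep_confidence_style_abs} are already defined on all of $B_0(\Sigma,[0,\infty))$, so no extension is needed, but as written your citation does not apply. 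Second, the appeal to ``standard Chateauneuf--Faro duality'' is carrying the real weight of the identity $H(\psi)=\inf_p\int\psi\,\textnormal{d}p/d_H(p)$ and should be replaced by an explicit argument (e.g., superlinear support-function duality, or the paper's route through \cite{cerreia2011uncertainty}); once that is made precise, the proof is complete.
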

\begin{proof}
By Proposition \ref{super_homo_rep_confidence_style_abs}, for all $\varphi\in \BK$, there exists a family $\Psi_{\varphi}$ of monotone, superlinear functionals, mapping from $\BK$ to $\R$, such that
\[
I(\varphi)=\max\limits_{H\in \Psi_{\varphi}}H(\varphi).
\]
Moreover, $\max_{H\in \Psi_{\varphi}}H(k)=k$ for all $k\in K$, and $\Psi_{\varphi_1}\subseteq \Psi_{\varphi_2}$ for all $\varphi_2\geq \varphi_1$ in $\BK$. Fix $\varphi\in \BK$. Each $H\in \Psi_{\varphi}$ is real-valued, and, being monotone and superlinear, also continuous, and hence, by Theorem 36 in \cite{cerreia2011uncertainty},
\[
H(\psi)=\inf\limits_{p \in \bigtriangleup(S)}G_H\left(\int_S\psi\textnormal{d}p,p\right)
\]
for all $\psi\in \BK$. By positive homogeneity of $H$, Lemma \ref{quasic_ph_dual_rep} yields that $G_H$ is also positively homogeneous in the first entry, and hence
\[
H(\psi)=\inf\limits_{p \in \bigtriangleup(S)}\int_S\psi\textnormal{d}p G_H(1,p)
\]
for all $\psi\in \BK$. Given that $H$ is real-valued, it follows that there must exist $p\in \bigtriangleup(S)$ such that $G_H(1,p)<\infty$. Define $d_H:\bigtriangleup(S)\to [0,\infty]$ as $d_H(p)=1/G_H(1,p)$. By the previous observation, there must exist $p\in \bigtriangleup(S)$ such that $d_H(p)>0$. Since $H\in \Psi_{\varphi}$ is monotone, superlinear, by Lemmas 31 and 32 in \cite{cerreia2011uncertainty}, $G_{H}$ is quasiconvex and lower semicontinuous.\footnote{This claim is slightly more subtle, indeed notice that $\BK$ here is not lower open. Though, since each $H$ is monotone and superlinear, it admits a monotone, lower semicontinuous, and positively homogeneous extension, say $\bar{H}$, over the whole $\B$. The $G_H$ we consider here should actually be $G_{\bar{H}}$ which, by the continuity of $\bar{H}$, is lower semicontinuous by Lemma 32 in \cite{cerreia2011uncertainty}.} It follows that, for all $t\in \mathbb{R}$, 
\[
\lbrace p\in \bigtriangleup(S):d_{H}(p)\leq t \rbrace=\lbrace p\in \bigtriangleup(S): G_{H}(1,p)\geq 1/t \rbrace
\] are closed and convex for all $t\geq 0$. Thus, $d_{H}$ is quasiconcave and upper semicontinuous. Letting $D_{\varphi}=\lbrace d_H:H\in \Psi_{\varphi}\rbrace$, we have that
\[
I(\varphi)=\max\limits_{d\in D_{\varphi}}\inf\limits_{p\in \bigtriangleup(S)}\frac{\int_{S}\varphi\textnormal{d}p}{d(p)}.
\]
By the normalization of $I$ we have $\max_{d\in D_{k}}\min_{p\in \bigtriangleup(S)}k/d(p)=k$ for all $k\in K$. Moreover,  $D_{\varphi_1}\subseteq D_{\varphi_2}$ for all $\varphi_2\geq \varphi_1$ in $\BK$, because for all such $\varphi_1,\varphi_2\in \BK$, we have $\Psi_{\varphi_1}\subseteq \Psi_{\varphi_2}$. The converse is proved using the same exact steps of \textit{\ref{item3_sup_homo_rep_conf}} implies \textit{\ref{item1_sup_homo_rep_conf}} in Proposition \ref{super_homo_rep_confidence_style_abs}.
\end{proof}

\subsection{Proofs of the results in the main text} \label{app_B}

\begin{proof}[Proof of Lemma \ref{Lemma_1}]
The if part is straightforward. Therefore, we prove the only if part. The existence of an affine $u$ follows from Theorem 8 in \cite{HerMiln53}. Moreover, by Proposition 1 in \cite{OmnibusSimone} there exists a continuous, monotone, and normalized functional $I:B_0(\Sigma.u(X))\to \R$ such that $I\circ u$ represents $\succsim$. The rest of the claim follows from Proposition \ref{envelop_rep_gen}.
\end{proof}

\subsubsection{Absolute ambiguity attitudes}

The proofs of the results concerning absolute ambiguity attitudes are based on the following result which generalizes Lemma 3 in \cite{Xuechanging}. For any subset $K\subseteq \R$, denote by $\textnormal{int}K$ the interior of $K$.

\begin{lemma}\label{lem:DAAA_pref}
Let $\succsim$ be a binary relation over $\mathcal{F}$. The following are equivalent
\begin{enumerate}[label=(\roman*)]
    \item $\succsim$ is MBA, and exhibits \nameref{ax_DAAA}.
    \item There exist an affine function $u:X\to \mathbb{R}$ and a continuous, monotone, normalized, and constant superadditive functional $I:B_0(\Sigma,u(X))\to \mathbb{R}$ such that,  for all $f,g\in \mathcal{F}$,
    \[
    f\succsim g \Longleftrightarrow I(u(f))\geq I(u(g)).
    \]
\end{enumerate}
\end{lemma}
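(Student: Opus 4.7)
The direction (ii) $\Rightarrow$ (i) is routine. Given $u$ affine and $I$ continuous, monotone, normalized, Proposition 1 of \cite{OmnibusSimone} (or Lemma \ref{Lemma_1}) shows that $I\circ u$ represents an MBA preference. For \nameref{ax_DAAA}: if $u(y)\geq u(x)$ and $\alpha f+(1-\alpha)x \succsim \alpha z+(1-\alpha)x$, set $\psi := \alpha u(f)+(1-\alpha)u(x)$ and $m := (1-\alpha)(u(y)-u(x)) \geq 0$; constant superadditivity yields $I(\psi+m)\geq I(\psi)+m$, which rewrites as $I(\alpha u(f)+(1-\alpha)u(y)) \geq \alpha u(z)+(1-\alpha)u(y)$, i.e., $\alpha f+(1-\alpha)y \succsim \alpha z+(1-\alpha)y$. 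I therefore focus on (i) $\Rightarrow$ (ii).

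By Lemma \ref{Lemma_1}, I first extract an affine $u:X\to\R$ and a continuous, monotone, normalized $I:B_0(\Sigma,u(X))\to\R$ representing $\succsim$ via $I\circ u$. Set $K:=u(X)$, a convex subset of $\R$. The remaining task is to deduce constant superadditivity of $I$ from \nameref{ax_DAAA}.

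The key step is to translate the axiom into the functional inequality
\[
I\bigl(\alpha\varphi+(1-\alpha)b\bigr) \geq I\bigl(\alpha\varphi+(1-\alpha)a\bigr)+(1-\alpha)(b-a),
\]
valid for every $\varphi\in B_0(\Sigma,K)$, every $a,b\in K$ with $b\geq a$, and every $\alpha\in(0,1)$. To establish this, define $t := \bigl(I(\alpha\varphi+(1-\alpha)a)-(1-\alpha)a\bigr)/\alpha$. Monotonicity and normalization of $I$ force $I(\alpha\varphi+(1-\alpha)a) \in [\alpha\inf\varphi+(1-\alpha)a,\ \alpha\sup\varphi+(1-\alpha)a]$, so $t\in[\inf\varphi,\sup\varphi]\subseteq K$; since $u(X)=K$, there is $z\in X$ with $u(z)=t$. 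Picking $f\in\mathcal{F}$ with $u\circ f=\varphi$ (possible because $\varphi$ is simple and $K$-valued) together with $x,y\in X$ satisfying $u(x)=a$, $u(y)=b$, one has $\alpha f+(1-\alpha)x \sim \alpha z+(1-\alpha)x$ and $y\succsim x$. Applying \nameref{ax_DAAA} gives $\alpha f+(1-\alpha)y \succsim \alpha z+(1-\alpha)y$, which in utility terms is precisely the displayed inequality.

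To finish, I derive constant superadditivity by a reparametrization. Fix $\psi\in B_0(\Sigma,K)$ and $k\geq 0$ with $\psi+k\in B_0(\Sigma,K)$; the cases $k=0$ or $\psi$ constant follow at once from normalization. Otherwise, set $a := \inf\psi$, $b := \sup\psi+k$, $\alpha := (\sup\psi-\inf\psi)/(\sup\psi-\inf\psi+k)\in(0,1)$, and $\varphi := (\psi-(1-\alpha)a)/\alpha$. A direct calculation shows that $\varphi$ takes values in $[a,b]\subseteq K$, so $\varphi\in B_0(\Sigma,K)$, and that $\alpha\varphi+(1-\alpha)a=\psi$ while $\alpha\varphi+(1-\alpha)b=\psi+k$. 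Substituting into the displayed inequality yields $I(\psi+k)\geq I(\psi)+k$. The main obstacle is keeping the reparametrization inside $B_0(\Sigma,K)$ for a generic convex $K$ (not assumed to contain $0$ in its interior nor to be bounded, which rules out the direct use of Lemma \ref{setticemia}); the specific choice of $\alpha$ above is forced by the requirement that $\varphi$'s extrema coincide with $a$ and $b$, both of which lie in $K$ by construction.
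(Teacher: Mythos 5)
Your proof is correct, and the two directions are handled in essentially the same spirit as the paper up to the point where the mixture form of \nameref{ax_DAAA} has to be upgraded to full constant superadditivity; there you take a genuinely different route. The paper normalizes $0\in \textnormal{int}\,u(X)$, derives the special case $I(\alpha\varphi+(1-\alpha)k)\geq I(\alpha\varphi)+(1-\alpha)k$ for $k\geq 0$, and then invokes Lemma \ref{setticemia}, a chaining argument that splits the shift $k$ into $n$ equal increments $k/n$ and telescopes the mixture inequality; that lemma needs continuity of $I$ and $0\in\textnormal{int}K$. You instead prove the slightly more general two-point inequality $I(\alpha\varphi+(1-\alpha)b)\geq I(\alpha\varphi+(1-\alpha)a)+(1-\alpha)(b-a)$ and then obtain $I(\psi+k)\geq I(\psi)+k$ in one shot by choosing $a=\inf\psi$, $b=\sup\psi+k$, $\alpha=(\sup\psi-\inf\psi)/(\sup\psi-\inf\psi+k)$, and $\varphi=(\psi-(1-\alpha)a)/\alpha$; I checked that $\varphi$ indeed ranges in $[a,b]\subseteq K$ (both endpoints are attained values of $K$-valued simple functions, and $K$ is convex) and that the two mixtures reproduce $\psi$ and $\psi+k$ exactly. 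What your approach buys is the elimination of both the interiority normalization and the approximation/continuity argument at this step (continuity is of course still needed for the representation itself via \cite{OmnibusSimone}); what the paper's approach buys is reusability, since Lemma \ref{setticemia} is stated once as a standalone tool in the style of \cite{Niveloidsextension} and is the kind of statement that can be cited again. Either way the lemma is established.
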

\begin{proof}
The if part is straightforward. We prove the only if part. In particular, the existence of an affine $u$ follows from Theorem 8 in \cite{HerMiln53}. Since $u$ is unique up-to positive affine transformations, it is without loss of generality to assume that $0\in \textnormal{int}u(X)$ and denote by $x_0\in X$ the element such that $u(x_0)=0$. Moreover, by Proposition 1 in \cite{OmnibusSimone} there exists a continuous, monotone, and normalized functional $I:B_0(\Sigma,u(X))\to \R$ such that $I\circ u$ represents $\succsim$. We now show the constant superadditivity of $I$. Since $I\circ u$ represents $\succsim$, \nameref{ax_DAAA} yields
\begin{equation}\label{eq:ammazzotutti}
I(\alpha \varphi+(1-\alpha)u(x_0))\geq u(\alpha x+(1-\alpha)x_0) \Longrightarrow I(\alpha \varphi+(1-\alpha)k)\geq \alpha u(x)+(1-\alpha)k
\end{equation}
for all $\alpha\in (0,1)$, $k\in u(X)$ with $k\geq 0$, $\varphi\in B_0(\Sigma,u(X))$, and $x\in X$. Suppose that, for some $\alpha\in (0,1)$, $k\in u(X)\cap \mathbb{R}_{+}$, and $\varphi\in B_0(\Sigma,u(X))$, we have that
\begin{equation}\label{ref_ineq_proof}
I(\alpha \varphi+(1-\alpha)k)<I(\alpha \varphi)+(1-\alpha)k.
\end{equation}
Notice that since $I$ is monotone and normalized and $u(X)$ is convex, $I(\alpha \varphi)\in \alpha u(X)$ as $\alpha \varphi(s)\in \alpha u(X)$ for all $s\in S$. Therefore, there exists $x\in X$, such that $I(\alpha \varphi)=\alpha u(x)$ and, since $u$ is affine, $I(\alpha\varphi)=u(\alpha x+(1-\alpha)x_0)$. Then, by \eqref{eq:ammazzotutti} and \eqref{ref_ineq_proof}
\[
I(\alpha \varphi+(1-\alpha)u(x_0))<u(\alpha x+(1-\alpha)x_0)=I(\alpha\varphi)=I(\alpha \varphi+(1-\alpha)u(x_0))
\]
a contradiction. Thus, by Lemma \ref{setticemia} in Appendix \ref{app_A}, $I$ is constant superadditive.
\end{proof}


\begin{proof}[Proof of Theorem \ref{thm:UAPstyleDAAA}]
The only if part follows by Lemma \ref{lem:DAAA_pref} and Proposition \ref{quasic_cs_dual_rep} in Appendix \ref{app_A}. For the if part define $I:B_0(\Sigma,u(X))\to \R$ as
\[
I(\varphi)=\max_{G\in \mathcal{G}}\inf_{p\in \bigtriangleup(S)}G\left(\int\varphi\textnormal{d}p,p\right)
\]
for all $\varphi\in \BK$. Since $\mathcal{G}$ is linearly continuous, we have that $I$ is continuous. Moreover, $I$ is normalized as $\max_{G\in \mathcal{G}}\inf_{p\in \bigtriangleup(S)}G(k,p)$ for all $k\in u(X)$. Since each $G\in \mathcal{G}$ is monotone and constant superadditive in the first argument, we have that $I$ is monotone and constant superadditive. Thus, by Lemma \ref{lem:DAAA_pref}, $\succsim$ is an MBA preference relation exhibiting decreasing absolute ambiguity aversion.
\end{proof}
\begin{proof}[Proof of Proposition \ref{dependent_variational_main}]It follows by Lemma \ref{lem:DAAA_pref} and Proposition \ref{dependent_variational} in Appendix \ref{app_A}.
\end{proof}

\subsubsection{Relative ambiguity attitudes}

\begin{lemma}\label{lem:IRAA_pref}
Let $\succsim$ be a binary relation over $\mathcal{F}$. The following are equivalent
\begin{enumerate}[label=(\roman*)]
    \item $\succsim$ is MBA, admits a worst consequence, and exhibits \nameref{ax_DRAA}.
    \item There exist an affine function $u:X\to \mathbb{R}$ with $0=\min u(X)$, and a continuous, monotone, normalized, and positively superhomogeneous functional $I:B_0(\Sigma,u(X))\to \mathbb{R}$ such that, for all $f,g\in \mathcal{F}$,
    \[
    f\succsim g \Longleftrightarrow I(u(f))\geq I(u(g)).
    \]
\end{enumerate}
\end{lemma}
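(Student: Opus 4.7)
The plan is to mirror the proof of Lemma~\ref{lem:DAAA_pref}, with \nameref{ax_DRAA} and positive superhomogeneity replacing \nameref{ax_DAAA} and constant superadditivity. The \emph{if} direction is short: by Proposition~1 in \cite{OmnibusSimone}, any continuous, monotone, normalized $I$ gives an MBA preference, and any $x_*\in u^{-1}(0)$ is a worst consequence since $\min u(X)=0$. To obtain \nameref{ax_DRAA}, suppose $\alpha\leq\beta$ in $(0,1)$ and $I(\alpha u(f))\geq \alpha u(y)$. Applying positive superhomogeneity to $\beta u(f)$ with ratio $\alpha/\beta\in (0,1]$ gives
\[
I(\alpha u(f))=I\!\left(\tfrac{\alpha}{\beta}\,\beta u(f)\right)\leq \tfrac{\alpha}{\beta}\,I(\beta u(f)),
\]
whence $\beta u(y)\leq I(\beta u(f))$, as required.

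For the \emph{only if} direction, \cite{HerMiln53} and Proposition~1 in \cite{OmnibusSimone} yield an affine $u:X\to\mathbb{R}$ and a continuous, monotone, normalized $I:B_0(\Sigma,u(X))\to\mathbb{R}$ such that $I\circ u$ represents $\succsim$. Since $u$ is unique up to positive affine transformations and $x_*$ is the worst consequence, one may normalize so that $u(x_*)=0$, i.e., $\min u(X)=0$; in particular, $u(X)$ is a convex subset of $[0,\infty)$. With $u$ affine and $u(x_*)=0$, \nameref{ax_DRAA} translates to
\[
I(\alpha u(f))\geq \alpha u(y)\implies I(\beta u(f))\geq \beta u(y)\quad\text{for all } 0<\alpha\leq\beta<1,\ f\in\mathcal{F},\ y\in X.
\]

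The crux is to fire this translated axiom at a $y$ for which the premise holds with equality. Fix $\varphi\in B_0(\Sigma,u(X))$ and $0<\alpha<\beta<1$. Convexity of $u(X)$ together with $\min u(X)=0$ gives $\alpha u(X)\subseteq u(X)$ and ensures $\alpha u(X)$ is itself convex; monotonicity and normalization of $I$ then place $I(\alpha\varphi)$ in the interval $[\alpha\inf_s\varphi(s),\alpha\sup_s\varphi(s)]\subseteq \alpha u(X)$, so one can pick $y_\alpha\in X$ with $\alpha u(y_\alpha)=I(\alpha\varphi)$. Choosing $f\in\mathcal{F}$ with $u(f)=\varphi$ (possible since $\varphi$ is simple and $\Sigma$-measurable with values in $u(X)$), the translated axiom delivers $I(\beta\varphi)\geq \beta I(\alpha\varphi)/\alpha$. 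Letting $\beta\uparrow 1$ and using the sup-norm continuity of $I$ gives $I(\alpha\varphi)\leq \alpha I(\varphi)$ for every $\alpha\in(0,1)$, i.e., positive superhomogeneity of $I$.

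The principal obstacle is the value-attainment step $I(\alpha\varphi)\in\alpha u(X)$, which is what permits \nameref{ax_DRAA} to be triggered at a constant $y_\alpha$ that exactly saturates the premise; this relies crucially on the convexity of $u(X)$ together with the normalization $\min u(X)=0$. Once this auxiliary fact is secured, the translation of the axiom together with the continuity of $I$ yield positive superhomogeneity in a few lines.
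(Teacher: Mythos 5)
Your proof is correct and follows essentially the same route as the paper's: the \emph{if} direction via the monotonicity of $\gamma\mapsto I(\gamma u(f))/\gamma$, and the \emph{only if} direction by translating \nameref{ax_DRAA} into $I(\alpha\varphi)\geq\alpha k\Rightarrow I(\beta\varphi)\geq\beta k$ and saturating the premise with $k=I(\alpha\varphi)/\alpha\in u(X)$, which is exactly the paper's argument. Your explicit limit $\beta\uparrow 1$ with continuity is a slightly more careful handling of the endpoint than the paper's direct assertion on $(0,1]$, but it is the same proof.
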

\begin{proof}
We start with the if part. Since $0=\min u(X)$ and $I$ is continuous, monotone, and normalized we have that $\succsim$ is an MBA preference admitting a worst consequence, say $x_*$. Therefore, we prove that $\succsim$ satisfies \nameref{ax_DRAA}. To this end suppose that $\alpha f+(1-\alpha)x_*\succsim \alpha y+(1-\alpha)x_*$ for some $\alpha\in(0,1]$, $f\in \mathcal{F}$, and $y\in X$. Then, since $u(x_*)=0$, $u$ is affine, and $I$ is normalized we have $I(\alpha u(f))\geq \alpha u(y)$. Since $I$ is positively superhomogeneous, we have that $\gamma\mapsto I(\gamma u(f))/\gamma$ is increasing in $(0,1]$, and hence for all $\beta\in [\alpha,1]$ we have $I(\beta u(f))\geq \beta u(y)$. This yields
\[
I(u(\beta f+(1-\beta)x_*))=I(\beta u(f))\geq \beta u(y)=u(\beta y+(1-\beta)x_*)
\]
and hence $\beta f+(1-\beta)x_*\succsim \beta y+(1-\beta)x_*$.
\par\medskip
We now prove the only if part. Analogously to the proofs of Lemmas \ref{Lemma_1} and \ref{lem:DAAA_pref} there exists an affine $u$ and a continuous, monotone, and normalized functional $I:B_0(\Sigma,u(X))\to \R$ such that $I\circ u$ represents $\succsim$. We are left to show the positively superhomogeneity of $I$. First of all, since $I\circ u$ represents $\succsim$, \nameref{ax_DRAA} implies that
\begin{equation}\label{eq:IRAA}
I(\alpha \varphi+(1-\alpha)u(x_*))\geq u(\alpha x+(1-\alpha)x_*) \Longrightarrow I(\beta \varphi+(1-\beta)u(x_*))\geq u(\beta x+(1-\beta)x_*)
\end{equation}
for all $\alpha,\beta\in (0,1]$ with $\alpha\leq \beta$, $\varphi\in B_0(\Sigma,u(X))$, $x\in X$. By the affinity of $u$, \eqref{eq:IRAA} can be rewritten as
\[
I(\alpha \varphi)\geq \alpha k \Longrightarrow I(\beta \varphi)\geq \beta k
\]
for all $\alpha,\beta\in (0,1]$ with $\alpha\leq \beta$ and $k\in u(X)$. Since $I(B_0(\Sigma,u(X)))\subseteq u(X)$, it follows that $
\gamma\mapsto I(\gamma \varphi)/\gamma$
is increasing in $(0,1]$, and hence
\[
I(\gamma\varphi)\leq \gamma I(\varphi)
\]
for all $\gamma \in (0,1]$ and $\varphi\in B_0(\Sigma,u(X))$, showing the statement.
\end{proof}


\begin{proof}[Proof of Theorem \ref{representation_IRAA}]
The only if part follows by Lemma \ref{lem:IRAA_pref} and Proposition \ref{super_homo_rep_UAP} in Appendix \ref{app_A}. The if part is analogous to the proof of Theorem \ref{thm:UAPstyleDAAA} substituting constant superadditivity with positive superhomogeneity.
\end{proof}

\begin{proof}[Proof of Proposition \ref{dependent_confidence_main}]
It follows by Lemma \ref{lem:IRAA_pref} and Proposition \ref{super_homo_rep_confidence_style} in Appendix \ref{app_A}.
\end{proof}

\subsubsection{Examples \ref{example1DAAA} and \ref{example1DRAA}} \label{Appendix_B3}

We first report a fairly standard preliminary result. The points 1 and 2 below are due to \cite{MarinacciMontrucchioUnique} (see Appendix C therein). 

\begin{lemma}\label{Lemma_constsuppossup_examples}
 If $K\subseteq \R$ is convex and $\phi:K\to \R$ is strictly increasing, continuous, concave, then $I_q:B_0(\Sigma,K)\to \R$ is continuous, normalized, monotone, and quasiconcave for $q\in Q$. Moreover
\begin{enumerate}[label=(\roman*)]
    \item\label{DARAsoeu} If $K=[0,\infty)$ and $\phi$ is twice differentiable in $(0,\infty)$ and DARA, then $I_q$ is constant superadditive.
    \item\label{DRRAsoeu} If $K=[0,\infty)$ and $\phi$ is twice differentiable in $(0,\infty)$ and DRRA, then $I_q$ is positively superhomogeneous. 
\end{enumerate}
\end{lemma}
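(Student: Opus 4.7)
The plan is to reduce both point \ref{DARAsoeu} and point \ref{DRRAsoeu} to the classical Pratt comparison theorem: if $\phi_1,\phi_2$ are two strictly increasing, concave, $C^2$ Bernoulli utilities on a common convex domain with $-\phi_2''(t)/\phi_2'(t)\geq -\phi_1''(t)/\phi_1'(t)$ pointwise, then $\phi_2^{-1}\!\left(\int \phi_2(\varphi)\,\mathrm{d}q\right)\leq \phi_1^{-1}\!\left(\int\phi_1(\varphi)\,\mathrm{d}q\right)$ for every $\varphi\in B_0(\Sigma,K)$. The preliminary claims (continuity, normalization, monotonicity, quasiconcavity of $I_q$) follow from standard facts: normalization and monotonicity are immediate from $\phi,\phi^{-1}$ being strictly increasing; continuity follows from the uniform continuity of $\phi$ on the bounded range of each simple function; and quasiconcavity of $I_q$ reduces, via Jensen's inequality applied to the concave $\phi$, to the monotonicity of $\phi^{-1}$. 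For these we cite Appendix C of \cite{MarinacciMontrucchioUnique}.

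For point \ref{DARAsoeu}, fix $k\geq 0$ with $\varphi+k\in B_0(\Sigma,[0,\infty))$ and define the shifted utility $u_k(t):=\phi(t+k)$ on $[0,\infty)$. Then $u_k$ is strictly increasing, continuous, concave, and twice differentiable on $(0,\infty)$, with Arrow--Pratt coefficient $-u_k''(t)/u_k'(t)=-\phi''(t+k)/\phi'(t+k)$, which is dominated by $-\phi''(t)/\phi'(t)$ by DARA. Applying Pratt's theorem to $\phi$ and $u_k$ yields $u_k^{-1}\!\left(\int \phi(\varphi+k)\,\mathrm{d}q\right)\geq \phi^{-1}\!\left(\int \phi(\varphi)\,\mathrm{d}q\right)=I_q(\varphi)$. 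Since $u_k^{-1}(s)=\phi^{-1}(s)-k$, the left-hand side equals $I_q(\varphi+k)-k$, giving $I_q(\varphi+k)\geq I_q(\varphi)+k$.

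For point \ref{DRRAsoeu}, fix $\gamma\in (0,1)$ and define the scaled utility $\psi_\gamma(t):=\phi(\gamma t)$ on $[0,\infty)$. Its absolute Arrow--Pratt coefficient on $(0,\infty)$ is $-\gamma\phi''(\gamma t)/\phi'(\gamma t)$, so multiplying by $t$ gives the relative coefficient of $\phi$ at the point $\gamma t$. Since $\gamma t<t$, DRRA yields $-\gamma t\,\phi''(\gamma t)/\phi'(\gamma t)\geq -t\,\phi''(t)/\phi'(t)$, i.e., $\psi_\gamma$ has pointwise larger absolute Arrow--Pratt coefficient than $\phi$ on $(0,\infty)$. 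Pratt's theorem then delivers $\psi_\gamma^{-1}\!\left(\int \phi(\gamma\varphi)\,\mathrm{d}q\right)\leq I_q(\varphi)$, and since $\psi_\gamma^{-1}(s)=\phi^{-1}(s)/\gamma$, the left-hand side is $I_q(\gamma\varphi)/\gamma$, yielding $I_q(\gamma\varphi)\leq \gamma I_q(\varphi)$. The main delicate point I anticipate is that $\phi$ is only assumed $C^2$ on $(0,\infty)$ while simple functions in $B_0(\Sigma,[0,\infty))$ may attain the value $0$; this is handled by first proving the inequalities for $\varphi$ replaced by $\varphi+\varepsilon$ (whose image lies in $(0,\infty)$) and then passing $\varepsilon\downarrow 0$ using the continuity of $I_q$ in the supnorm established in the preliminary part.
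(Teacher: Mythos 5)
Your proof is correct, but for the substantive claims \textit{(i)} and \textit{(ii)} it takes a genuinely different route from the paper's. The paper proves the preliminary properties directly (dominated convergence for continuity; pointwise concavity of $\phi$ combined with quasiconcavity of the monotone $\phi^{-1}$ for quasiconcavity) and then simply cites Theorem 12 and Corollary 1 of \cite{MarinacciMontrucchioUnique} for the DARA and DRRA implications, with a footnote conceding that those proofs are written for a finite state space. You instead give a self-contained argument: you encode the constant shift and the rescaling as changes of the Bernoulli index, $u_k(t)=\phi(t+k)$ and $\psi_\gamma(t)=\phi(\gamma t)$, observe that DARA (resp.\ DRRA) makes $u_k$ less (resp.\ $\psi_\gamma$ more) risk averse than $\phi$ in the Arrow--Pratt sense, and read off the certainty-equivalent inequalities from Pratt's comparison theorem via $u_k^{-1}(s)=\phi^{-1}(s)-k$ and $\psi_\gamma^{-1}(s)=\phi^{-1}(s)/\gamma$; the directions of all inequalities check out, and the $\varepsilon$-shift plus supnorm continuity correctly handles the failure of differentiability at $0$. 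What your route buys is independence from the external reference and from its finite-state-space caveat: since each $\varphi$ is simple, the induced distribution under $q$ is finitely supported and the most elementary form of Pratt's theorem suffices. What the paper's route buys is brevity. One cosmetic remark: the quasiconcavity step does not use Jensen's inequality but the pointwise concavity bound $\phi(\alpha x+(1-\alpha)y)\geq\alpha\phi(x)+(1-\alpha)\phi(y)$, applied statewise before integrating.
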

\begin{proof}
Since $\phi$ is strictly increasing, monotonicity and normalization are immediate. As per continuity, suppose that $\psi_n\to \psi$. Then, since $\phi$ is continuous we have that $\phi(\psi_n)\to \phi(\psi)$ pointwise, and by the monotonicity of $\phi$, we have that $|\phi(\psi_n)|\leq \phi(\lVert\psi_n \rVert_{\infty})$ for all $n\in \mathbb{N}$. Since $\lVert \psi_n\rVert_{\infty}\to \lVert \psi\rVert_{\infty}$, we have that there exists $K>0$ such that $\sup_n\lVert \psi_n\rVert_{\infty}\leq K$. Thus, for all $n\in \mathbb{N}$, we have that $|\phi(\psi_n)|\leq \phi(K)$. Then, by the dominated convergence theorem, we have that 
\[
\int \phi(\psi_n)\textnormal{d}q\to \int \phi(\psi)\textnormal{d}q.
\]
Since, $\phi^{-1}$ is also continuous, it follows that $I_q(\psi_n)\to I_q(\psi)$, proving the continuity. Moreover, since $\phi^{-1}$ is strictly increasing and $\phi$ is concave, we have that
\begin{align*}
I_q(\alpha\varphi+(1-\alpha)\psi)&=\phi^{-1}\left(\int\phi(\alpha\varphi+(1-\alpha)\psi)\textnormal{d}q\right)\geq \phi^{-1}\left(\int\alpha\phi(\varphi)+(1-\alpha)\phi(\psi)\textnormal{d}q\right)\\
&\geq \min\left\lbrace \phi^{-1}\left(\int\phi(\varphi)\textnormal{d}q\right),\phi^{-1}\left(\int\phi(\psi)\textnormal{d}q\right)  \right\rbrace= \min\left\lbrace I_q(\varphi),I_q(\psi) \right\rbrace
\end{align*}
for all $\alpha\in [0,1]$, and $\varphi,\psi\in\BK$, where the last inequality follows from the fact that being monotone $\phi^{-1}$ is also quasiconcave.

Points \textit{\ref{DARAsoeu}} and \textit{\ref{DRRAsoeu}} follow from \cite{MarinacciMontrucchioUnique} Theorem 12 and Corollary 1 (see also Proposition 6 therein), respectively. Their proofs are presented in a finite state space, though this is not necessary for the argument.\footnote{The details are available upon request.} This concludes the proof.
\end{proof}
We say that a function $G:\R\times \bigtriangleup(S)\to \bar{\R}$ is \textit{linearly continuous} if the map
\[
\varphi\mapsto \inf\limits_{p\in \bigtriangleup(S)}G\left(\int \varphi \textnormal{d}p,p\right)
\]
is continuous.
\begin{lemma}\label{representation_for_examples}
For all $q\in Q$, there exists a linearly continuous $G_q:\R\times \bigtriangleup(S)\to \bar{\R}$ monotone in the first entry, quasiconvex, and such that for all $\varphi\in \BK$
\[
I_q(\varphi)=\inf\limits_{p\in \bigtriangleup(S)}G_q\left(\int \varphi \textnormal{d}p,p\right).
\]
Moreover,
\begin{enumerate}[label=(\roman*)]
\item If $\phi$ is DARA, then $G_q$ is constant superadditive in the first argument.
\item If $\phi$ is DRRA, then $G_q$ is positively superhomogeneous in the first argument.
\end{enumerate}
\end{lemma}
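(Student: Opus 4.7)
The plan is to define $G_q$ as the natural quasiconvex dual of $I_q$ introduced in Section~\ref{quasiconvex_section}, namely
\[
G_q(t,p)=\sup\left\{I_q(\varphi):\varphi\in\BK,\ \int\varphi\,\textnormal{d}p\le t\right\},\qquad (t,p)\in\R\times\bigtriangleup(S),
\]
and then read off all four conclusions (representation, linear continuity, constant superadditivity under DARA, positive superhomogeneity under DRRA) from the general results already proved in Appendix~\ref{app_A}. By Lemma~\ref{Lemma_constsuppossup_examples}, for every $q\in Q$ the functional $I_q:B_0(\Sigma,[0,\infty))\to\R$ is continuous, monotone, normalized, and quasiconcave; these are exactly the hypotheses feeding the quasiconvex duality machinery.

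The first step is to obtain the envelope representation. Since $K=[0,\infty)$ is upper open and $I_q$ is upper semicontinuous, monotone, and quasiconcave, Theorem~36 in \cite{cerreia2011uncertainty} applies to $I_q$ and yields
\[
I_q(\varphi)=\inf_{p\in\bigtriangleup(S)}G_q\!\left(\int\varphi\,\textnormal{d}p,p\right)\qquad\forall\varphi\in\BK,
\]
with $G_q$ monotone in its first argument and quasiconvex (monotonicity from the fact that enlarging $t$ enlarges the feasible set in the supremum; quasiconvexity from Lemma~31 in \cite{cerreia2011uncertainty}). Linear continuity of $G_q$ is then immediate, because the right-hand side above equals $I_q(\varphi)$, which is continuous by Lemma~\ref{Lemma_constsuppossup_examples}.

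For part~(i), DARA of $\phi$ together with Lemma~\ref{Lemma_constsuppossup_examples}\ref{DARAsoeu} gives constant superadditivity of $I_q$. Since $K=[0,\infty)$ is unbounded from above, Proposition~\ref{quasic_cs_dual_rep_base} directly transfers this property to $G_q$, delivering constant superadditivity in the first argument. For part~(ii), DRRA of $\phi$ combined with Lemma~\ref{Lemma_constsuppossup_examples}\ref{DRRAsoeu} makes $I_q$ positively superhomogeneous; since $K=[0,\infty)$ falls in the scope of Proposition~\ref{quasic_ps_dual_rep} (taking $b=0$), that proposition yields positive superhomogeneity of $G_q$ in the first argument.

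The only delicate point is ensuring the dual representation actually holds on the half-line $[0,\infty)$ rather than on a fully open domain, but this is harmless: $I_q$ is continuous on $\BK$ and the auxiliary envelopes $I^{+}_{\xi}$ used in Propositions~\ref{auxiliary_functionals_basic}--\ref{cont_representations} behave correctly for upper-open $K$, so no ad hoc extension argument is required. Beyond that check, every step is a direct invocation of the tools already assembled in Appendix~\ref{app_A}.
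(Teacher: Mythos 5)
Your proof is correct and follows essentially the same route as the paper: you define $G_q$ as the quasiconvex dual $G_q(t,p)=\sup\{I_q(\varphi):\varphi\in\BK,\ \int\varphi\,\textnormal{d}p\le t\}$, invoke Theorem 36 of \cite{cerreia2011uncertainty} together with Lemma \ref{Lemma_constsuppossup_examples} for the representation and linear continuity, and then transfer constant superadditivity and positive superhomogeneity from $I_q$ to $G_q$ via Propositions \ref{quasic_cs_dual_rep_base} and \ref{quasic_ps_dual_rep}, exactly as the paper does.
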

\begin{proof}
By Lemma \ref{Lemma_constsuppossup_examples}, we have that each $I_q$ is continuous, normalized, monotone, and quasiconcave. Therefore, by Theorem 36 in \cite{cerreia2011uncertainty}, we have that for all $\varphi\in \BK$,
\[
I_q(\varphi)=\inf\limits_{p\in \bigtriangleup(S)}G_q\left(\int \varphi \textnormal{d}p,p\right)
\]
where $G_q(t,p)=\sup\left\lbrace I_q(\varphi):\varphi\in \BK\ \textnormal{and}\ \int \varphi \textnormal{d}p\leq t \right\rbrace$ for all $(t,p)\in \R\times \bigtriangleup(S)$. Following the results in \cite{cerreia2011uncertainty} we have that $G_q$ is monotone in the first entry, quasiconvex, and such that for all $k\in K$, $\inf_{p\in \bigtriangleup(S)}G_q(k,p)=k$. Since $I_q$ is continuous, we have that $G_q$ is linearly continuous. Moreover,
\begin{enumerate}[label=\textit{(\roman*)}]
\item \label{item_1phiDARA} If $\phi$ is DARA, by Lemma \ref{Lemma_constsuppossup_examples}, we have that $I_q$ is constant superadditive, and hence by Proposition \ref{quasic_cs_dual_rep_base}, we have that $G_q$ is constant superadditive.
\item \label{item_2phiDRRA} If $\phi$ is DRRA, by Lemma \ref{Lemma_constsuppossup_examples}, we have that $I_q$ is positively superhomogeneous, and hence by Proposition \ref{quasic_ps_dual_rep}, we have that $G_q$ is positively superhomogeneous.
\end{enumerate}
These conclude the proof.
\end{proof}
By Lemma \ref{Lemma_constsuppossup_examples} and \ref{representation_for_examples}, MBA preferences $\succsim$ represented by $f\mapsto \max_{q\in Q}I_q(u(f))$ satisfy: \textit{\ref{item_1phiDARA}} if $\phi$ is DARA, then $\succsim$ satisfies the conditons of Theorem \ref{thm:UAPstyleDAAA}; \textit{\ref{item_2phiDRRA}} if $\phi$ is DRRA, then $\succsim$ satisfies the conditions of Theorem \ref{representation_IRAA}. We provide two examples of functions $\phi:[0,\infty)\to \R$ that satisfy these hypotheses:
\begin{itemize}
\item $\phi(t)=\sqrt{t}$ for all $t\geq 0$, is strictly increasing, continuous, concave, and DARA. To see the latter, notice that $$-\frac{\phi''(t)}{\phi'(t)}=\frac{1}{2t},$$
a (strictly) decreasing function.
\item $\phi(t)=\sqrt{t}+t$ for all $t\geq 0$, is strictly increasing, continuous, concave, and DRRA. To see the latter, notice that, for all $t>0$,
\[
-t\frac{\phi''(t)}{\phi'(t)}=\frac{1}{4\sqrt{t}+2}
\]
which is decreasing in $t$.
\end{itemize}


\subsubsection{Remark \ref{concave_DRRA_smooth}} \label{appendix_smooth_DRRA}

\begin{proposition} \label{prop_smooth_draa} 
Let $\phi:[1,\infty)\to \R$ strictly increasing, twice differentiable on $(1,\infty)$, and concave and suppose that $\Sigma$ is non-trivial. The functional $I_{\phi,\mu}:B_0(\Sigma,[1,\infty))\to \R$ defined as
\[
I_{\phi,\mu}(\varphi)=\phi^{-1}\left(\int\phi\Big(\int \varphi \textnormal{d}p\Big) \textnormal{d}\mu\right)
\]
for all $\varphi\in B_0(\Sigma,[1,\infty))$
is positively superhomogeneous for all  countably additive probabilities $\mu$ over $\bigtriangleup^{\sigma}(S)$ if and only if $\phi$ is DRRA.
\end{proposition}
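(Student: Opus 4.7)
The plan is to reduce positive superhomogeneity of $I_{\phi,\mu}$ to a monotonicity property of the certainty-equivalent functional $C_\mu(Z):=\phi^{-1}(\int \phi(Z)\textnormal{d}\mu)$ applied to the random variable $Y(p):=\int \varphi\,\textnormal{d}p$ on $\bigtriangleup^{\sigma}(S)$. For $\varphi\in B_0(\Sigma,[1,\infty))$ and $\gamma\in (0,1)$ with $\gamma\varphi\geq 1$ pointwise, the inequality $I_{\phi,\mu}(\gamma\varphi)\leq \gamma I_{\phi,\mu}(\varphi)$ becomes $C_\mu(\gamma Y)\leq \gamma C_\mu(Y)$, equivalently $F(\gamma)\leq F(1)$, where $F(\lambda):=C_\mu(\lambda Y)/\lambda$ on the admissible interval $[\gamma,1]$ (so that $\lambda Y\geq 1$ throughout).

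The key computation is the sign of $F'(\lambda)$. Setting $g(t):=t\phi'(t)$ and differentiating $C_\mu(\lambda Y)=\phi^{-1}(\int\phi(\lambda Y)\textnormal{d}\mu)$ yields
\[
F'(\lambda)=\frac{1}{\lambda^{2}\phi'(C_\mu(\lambda Y))}\Bigl(\int g(\lambda Y)\,\textnormal{d}\mu-g(C_\mu(\lambda Y))\Bigr).
\]
Introducing $h:=g\circ\phi^{-1}$ and using $\phi(C_\mu(\lambda Y))=\int \phi(\lambda Y)\textnormal{d}\mu$, the bracketed term rewrites as $\int h(\phi(\lambda Y))\,\textnormal{d}\mu-h(\int \phi(\lambda Y)\,\textnormal{d}\mu)$, whose sign is controlled by Jensen's inequality and the convexity of $h$. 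A chain-rule computation, differentiating the identity $h(\phi(u))=u\phi'(u)$ twice in $u$, gives
\[
h''(\phi(u))=-\frac{R'(u)}{\phi'(u)},\qquad R(u):=-\frac{u\phi''(u)}{\phi'(u)},
\]
so convexity of $h$ on $\phi([1,\infty))$ is equivalent to $R$ being decreasing on $(1,\infty)$, that is, to $\phi$ being DRRA.

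For sufficiency, if $\phi$ is DRRA then $h$ is convex, Jensen forces $F'\geq 0$, and monotonicity of $F$ yields $F(\gamma)\leq F(1)$ for every countably additive $\mu$ and every admissible $\varphi$, proving positive superhomogeneity. For necessity, assume $\phi$ is not DRRA and pick $u_0>1$ with $R'(u_0)>0$, so that $h''(\phi(u_0))<0$ and $h$ is strictly concave on a neighborhood of $\phi(u_0)$. Exploiting the non-triviality of $\Sigma$, fix $A\in\Sigma$ with $A,A^{c}\neq\emptyset$, choose $s_1\in A$ and $s_2\in A^{c}$, and set $\mu:=\tfrac{1}{2}\delta_{\delta_{s_1}}+\tfrac{1}{2}\delta_{\delta_{s_2}}$ on $\bigtriangleup^{\sigma}(S)$, together with $\varphi:=u_2\mathbf{1}_A+u_1\mathbf{1}_{A^{c}}$ where $u_1<u_0<u_2$ are chosen close to $u_0$. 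Then $Y$ takes the values $u_1,u_2$ each with mass $1/2$ under $\mu$, so strict local concavity of $h$ gives $\int h(\phi(Y))\textnormal{d}\mu<h(\int \phi(Y)\textnormal{d}\mu)$, whence $F'(1)<0$. Picking $\gamma<1$ sufficiently close to $1$ to preserve $\gamma u_1\geq 1$ yields $F(\gamma)>F(1)$, i.e., $C_\mu(\gamma Y)>\gamma C_\mu(Y)$, contradicting positive superhomogeneity. The main obstacle will be the clean chain-rule derivation of $h''$ and making sure the two-point realization of $Y$ respects the domain constraint $\gamma\varphi\geq 1$; both are controlled by choosing $u_0$ (and hence $u_1$) strictly above $1$ and taking $\gamma$ close enough to $1$, after which the rest is a straightforward application of Jensen's inequality.
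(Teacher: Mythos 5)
Your argument is correct, but it takes a genuinely different route from the paper. The paper proves the proposition by a change of variables: it shows that $I_{\phi,\mu}$ is positively superhomogeneous if and only if $\hat{I}_{\phi,\mu}:\varphi\mapsto \log\bigl(\phi^{-1}\bigl(\int\phi\bigl(\int e^{\varphi}\,\textnormal{d}p\bigr)\textnormal{d}\mu\bigr)\bigr)$ is constant superadditive, observes that $\hat{I}_{\phi,\mu}=I_{\hat{\phi},\mu}$ with $\hat{\phi}(t)=\phi(e^{t})$, checks that $\hat{\phi}$ is strictly increasing and concave and that $\hat{\phi}$ is DARA iff $\phi$ is DRRA, and then invokes Proposition 7 of Xue's paper for the equivalence between constant superadditivity (for all $\mu$) and DARA. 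You instead work directly: you differentiate $F(\lambda)=C_{\mu}(\lambda Y)/\lambda$, reduce the sign of $F'$ to a Jensen gap for $h=g\circ\phi^{-1}$ with $g(t)=t\phi'(t)$, and identify $h''(\phi(u))=-R'(u)/\phi'(u)$, so that convexity of $h$ is exactly DRRA; sufficiency is then Jensen, and necessity is an explicit two-point counterexample exploiting the non-triviality of $\Sigma$. Your computations check out (in particular $F'(\lambda)=\frac{1}{\lambda^{2}\phi'(C_{\mu}(\lambda Y))}\bigl(\int g(\lambda Y)\,\textnormal{d}\mu-g(C_{\mu}(\lambda Y))\bigr)$ and the chain-rule identity for $h''$ are right, and $\phi'>0$ on $(1,\infty)$ follows from strict monotonicity plus concavity, so the divisions are legitimate). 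What your approach buys is self-containedness -- you do not inherit the hypotheses of Xue's Proposition 7 (the paper needs a footnote to argue that its unboundedness requirement is not binding here) -- and it makes transparent exactly where relative risk aversion enters, including an explicit witness for the necessity direction rather than one hidden inside a cited result. What the paper's approach buys is brevity and a conceptually clean statement of the DARA/DRRA duality under the exponential reparametrization, which is of independent interest since it links the two halves of the paper. The only points you should tighten in a full write-up are routine: justify differentiating under the integral sign (immediate since $\varphi$ is simple, so $Y$ is bounded), and note that $F$ is continuous up to the endpoint $\lambda=\gamma$ where $\lambda\varphi$ may touch the boundary of the domain $[1,\infty)$ -- you already flag the latter and it causes no trouble.
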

\begin{proof} Fix a countable additive $\mu$ over $\bigtriangleup^{\sigma}(S)$. First we prove that $I_{\phi,\mu}$ is positively superhomogeneous if and only if $\hat{I}_{\varphi,\mu}$ defined as $$\hat{I}_{\phi,\mu}:\varphi\mapsto \log\left(\phi^{-1}\Big(\int\phi\Big(\int e^{\varphi} \textnormal{d} p \Big) \textnormal{d} \mu\Big)\right)$$ is constant superadditive. First suppose that $I_{\phi,\mu}$ is positively superhomogeneous. For all $\varphi\in B_0(\Sigma,[0,\infty))$ and $k\geq 0$, we have that
\begin{align*} \allowdisplaybreaks
\hat{I}_{\phi,\mu}(\varphi+k)&=\log\left(\phi^{-1}\Big(\int\phi\Big(\int e^{\varphi+k} \textnormal{d} p \Big) \textnormal{d} \mu\Big)\right)=\log\left(\phi^{-1}\Big(\int\phi\Big(\int e^{\varphi}e^{k} \textnormal{d} p \Big) \textnormal{d} \mu\Big)\right)\\
&\geq\log \left(e^{k}\phi^{-1}\Big(\int\phi\Big(\int e^{\varphi} \textnormal{d} p \Big) \textnormal{d} \mu\Big)\right)
=\log\left(\phi^{-1}\Big(\int\phi\Big(\int e^{\varphi} \textnormal{d} p \Big) \textnormal{d} \mu\Big)\right)+k=\hat{I}_{\phi,\mu}(\varphi)+k.
\end{align*}
Therefore, $\hat{I}_{\phi,\mu}$ is constant superadditive. Conversely, suppose that $\hat{I}_{\phi,\mu}$ is constant superadditive. Then, we have that for all $\alpha\geq 1$ and $\varphi\in B_0(\Sigma,[1,\infty))$, 
\begin{align*}
I_{\phi,\mu}(\alpha\varphi)=e^{\hat{I}_{\phi,\mu}(\log(\alpha\varphi))}=e^{\hat{I}_{\phi,\mu}(\log(\alpha)+\log(\varphi))}\geq e^{\hat{I}_{\phi,\mu}(\log(\varphi))}e^{\log(\alpha)}=\alpha e^{\hat{I}_{\phi,\mu}(\log(\varphi))}=\alpha I_{\phi,\mu}(\varphi)
\end{align*}
Therefore, $I_{\phi,\mu}$ is positively superhomogeneous. In particular, notice that $\hat{I}_{\phi,\mu}=I_{\hat{\phi},\mu}$ with $\hat{\phi}(t)=\phi(e^t)$ for all $t\geq 0$. Notice that 
\[
\hat{\phi}'(t)=e^t\phi'(e^t)\ \textnormal{and}\ \hat{\phi}''(t)=e^t\phi'(e^t)+e^{2t}\phi''(e^t),
\]
which implies
\begin{align*}
\hat{\phi}''(t)\leq 0 \Leftrightarrow \phi'(e^t)\leq -e^t\phi''(e^t)\Leftrightarrow 0\leq -e^t \frac{\phi''(e^t)}{\phi'(e^t)}
\Leftrightarrow 0\leq-\frac{\phi''(e^t)}{\phi'(e^t)}
\end{align*}
Since $\phi$ is strictly increasing, we have $\phi'(e^t)>0$; since $\phi$ is concave, we have $\phi''(e^t)\leq 0$, and hence $\hat{\phi}''(t)\leq 0$ for all $t$. Therefore, $\hat{\phi}$ is strictly increasing and concave. Moreover, by Proposition 7 in \cite{Xuechanging},\footnote{Notice that in the proof \cite{Xuechanging} uses the full unboundedness requirement only for the preferential characterization of constant superadditivity in terms of decreasing absolute ambiguity aversion. The rest of the proof, and in particular the cited results from \cite{cerreia2011uncertainty}, do not depend on the full unboundedness condition.} $\hat{I}_{\phi,\mu}$ is constant superadditive for all countable additive probabilities $\mu$ in $\bigtriangleup^{\sigma}(S)$ if and only if $\hat{\phi}$ is DARA. In turn, $\hat{\phi}$ is DARA if and only if $\phi$ is DRRA. Therefore, connecting all the proved equivalences, $I_{\phi,\mu}$ is positively superhomogeneous if and only if $\phi$ is DRRA.
\end{proof}

\subsubsection{Risk sharing} \label{appendix_risk_sharing}
\begin{lemma}\label{lemma:combined}
Let $\succsim$ be a binary relation over $\mathcal{F}$. The following are equivalent
\begin{enumerate}[label=(\roman*)]
    \item $\succsim$ is MBA, admits a worst consequence, and exhibits \nameref{ax_DAAA} and increasing relative ambiguity aversion.
    \item There exist an affine function $u:X\to \mathbb{R}$ with $\min u(X)=0$ and a continuous, monotone, normalized, constant superadditive, and positively subhomogeneous functional $I:B_0(\Sigma,u(X))\to \mathbb{R}$ such that,  for all $f,g\in \mathcal{F}$,
    \[
    f\succsim g \Longleftrightarrow I(u(f))\geq I(u(g)).
    \]
\end{enumerate}
Moreover, the same statement holds if we require $\succsim$ to satisfy \nameref{ax_nc_DAAA} and \nameref{ax_nc_IRAA}, along with strict constant superadditivity and strict positively subhomogeneity.
\end{lemma}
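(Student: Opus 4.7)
The plan is to obtain this lemma by combining Lemma \ref{lem:DAAA_pref} with an analog of Lemma \ref{lem:IRAA_pref} for increasing relative ambiguity aversion. The worst consequence supplied by the relative axiom fixes the normalization $u(x_*)=0=\min u(X)$ throughout, so the two characterizations can be run on the same affine utility and the same certainty equivalent.

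For the ``only if'' direction, I would start by invoking Theorem 8 in \cite{HerMiln53} and Proposition 1 in \cite{OmnibusSimone} to obtain the affine $u$ and the continuous, monotone, normalized $I:B_0(\Sigma,u(X))\to\mathbb{R}$ with $I\circ u$ representing $\succsim$; the worst consequence and the uniqueness of $u$ up to positive affine transformations force $\min u(X)=0$. Constant superadditivity then follows by repeating the argument in the proof of Lemma \ref{lem:DAAA_pref}: \nameref{ax_DAAA} translates into
\[
I(\alpha\varphi+(1-\alpha)k)\ \geq\ I(\alpha\varphi)+(1-\alpha)k
\]
for all $\alpha\in(0,1)$, $\varphi\in B_0(\Sigma,u(X))$, and $k\in u(X)\cap\mathbb{R}_+$, exactly as in equation \eqref{ref_ineq_proof}. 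Since the boundary case $0=\min u(X)$ prevents us from invoking Lemma \ref{setticemia} directly, I would close this step by a scaling trick: given $\psi\in B_0(\Sigma,u(X))$ and $\lambda\geq0$ with $\psi+\lambda\in B_0(\Sigma,u(X))$, pick $\alpha\in(0,1)$ large enough that $\psi/\alpha\in B_0(\Sigma,u(X))$ and small enough that $k:=\lambda/(1-\alpha)\in u(X)$ (the constraint $\psi+\lambda\in B_0(\Sigma,u(X))$ guarantees such an $\alpha$ exists), and then apply the displayed inequality to $\varphi=\psi/\alpha$. For positive subhomogeneity I would repeat the proof of Lemma \ref{lem:IRAA_pref} with the direction of the inequality reversed: \nameref{ax_nc_IRAA}'s weak counterpart yields $I(\alpha\varphi)\geq\alpha k\Rightarrow I(\beta\varphi)\geq\beta k$ whenever $\alpha\geq \beta$ in $(0,1]$ and $k\in u(X)$, so $\gamma\mapsto I(\gamma\varphi)/\gamma$ is \emph{non-increasing} on $(0,1]$, giving $I(\gamma\varphi)\geq\gamma I(\varphi)$ for $\gamma\in(0,1]$.

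For the ``if'' direction, the properties of $I$ imply that $I\circ u$ is a continuous, monotone, normalized representation of an MBA preference that admits a worst consequence (because $0\in u(X)$). Constant superadditivity plugs into the ``if'' half of Lemma \ref{lem:DAAA_pref} to deliver \nameref{ax_DAAA}, and positive subhomogeneity plugs into the analog for IRAA (the same chain of equivalences as in the proof of Lemma \ref{lem:IRAA_pref}, with the monotonicity of $\gamma\mapsto I(\gamma\varphi)/\gamma$ reversed). For the ``Moreover'' part I would track strict inequalities through both arguments: \nameref{ax_nc_DAAA} upgrades the DAAA-derived inequality to a strict one for every $\varphi\neq 0$ and every $k>0$ with $k\in u(X)$, which the scaling trick above propagates to strict constant superadditivity; symmetrically, \nameref{ax_nc_IRAA} makes $\gamma\mapsto I(\gamma\varphi)/\gamma$ strictly decreasing on $(0,1]$ for every $\varphi\neq 0$, giving strict positive subhomogeneity. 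The reverse implications follow by the same token.

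The main obstacle I anticipate is the boundary issue $0=\min u(X)$, which rules out a direct appeal to Lemma \ref{setticemia} in the DAAA step; the scaling trick sketched above is the technical workaround, and it must be carried out with enough care to respect the constraints $\psi/\alpha\in B_0(\Sigma,u(X))$ and $\lambda/(1-\alpha)\in u(X)$ simultaneously, which in the edge case $\sup\psi+\lambda=\sup u(X)$ forces a single admissible choice of $\alpha$. A secondary, but more routine, difficulty is verifying that the strictness delivered by \nameref{ax_nc_DAAA} at the preference level genuinely yields strict constant superadditivity for \emph{every} non-zero $\varphi$ and every $k>0$ with $\varphi+k\in B_0(\Sigma,u(X))$, and analogously for \nameref{ax_nc_IRAA}; this requires exhibiting, for each such pair $(\varphi,k)$, a preference instance in which the strict hypothesis $y\succ x$ (respectively $\alpha>\beta$) is actually usable.
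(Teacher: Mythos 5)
Your proposal is correct and takes essentially the same route as the paper, whose own proof simply combines Lemma \ref{lem:DAAA_pref} with the mirror image of Lemma \ref{lem:IRAA_pref} and tracks strict signs for the ``Moreover'' part. You are in fact more careful than the paper on one point: under the normalization $\min u(X)=0$ forced by the worst consequence, Lemma \ref{setticemia} (stated for $0\in\textnormal{int}\,u(X)$) cannot be quoted verbatim in the constant-superadditivity step, and your one-shot scaling argument is a valid fix --- an equivalent alternative is to note that constant superadditivity of the certainty equivalent is invariant under positive affine transformations of $u$, so it can be established under the interior normalization of Lemma \ref{lem:DAAA_pref} and then transported to the normalization $\min u(X)=0$.
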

\begin{proof}
The equivalence between the two points follows immediately by combining Lemma \ref{lem:DAAA_pref} and Lemma \ref{lem:IRAA_pref}, adapting the latter specularly to increasing relative ambiguity aversion. Therefore, we only need to prove that non-constant decreasing absolute ambiguity aversion is equivalent to the strict constant superadditivity of $I$, and the analogous statement for non-constant increasing relative ambiguity aversion and the strict positive subhomogeneity of $I$. The proofs of such equivalences are omitted for the sake of brevity as they are completely analogous (up to changing few weak signs with their strict counterparts) to the proofs of the lemmas mentioned above.\footnote{The details are available upon request.}
\end{proof}

\begin{proof}[Proof of Proposition \ref{risksharingth}]
Let $x\in \R_{++}$ and $g\neq x$ in $\R^S_+$ with $V(g)\geq V(x)$. Then, for all $\lambda\in [0,1]$ it holds that
 $$V(\lambda g+(1-\lambda)x)\geq V(\lambda g)+(1-\lambda)x\geq \lambda V(g)+(1-\lambda)x\geq \lambda V(x)+(1-\lambda)V(x)=V(x),$$
with at least one of the inequalities being strict by assumption. 
Since $V$ is nice, the result follows by Proposition 2 (points (2) and (3)) in \cite{ghirardatorisk}.
\end{proof}

\begin{proof}[Proof of Corollary \ref{corollary_NiceAllocations}]
    The result follows by a direct application of  Theorem 3 by \cite{ghirardatorisk} and Proposition \ref{risksharingth}.
\end{proof}

\bibliographystyle{ecta}
\bibliography{bibliography}



\end{document}